\newtheorem{theorem}{Theorem}
\newtheorem{lemma}[theorem]{Lemma}
\theoremstyle{definition}
\newtheorem{definition}[theorem]{Definition}
\newtheorem{notation}[theorem]{Notation}
\newcommand{\suchthat}{\ifnum\currentgrouptype=16 \;\middle|\;\else\mid\fi}
\def\and{%
  \end{tabular}%
  \hskip 0.7em \@plus.17fil\relax
  \begin{tabular}[t]{c}}
\begin{document}

\title{Mapping Firms' Locations in Technological Space: \\
A Topological Analysis of Patent Statistics\thanks{%
First version: August 31, 2019 (https://arxiv.org/abs/1909.00257v1). For helpful comments, we thank Susan Athey, Iain Cockburn, Marek Giebel, David Hsu, Adam Jaffe, and Yihan Yan, as well as participants at seminars and conferences, including Yale IO Seminar, the 2019 \textit{NBER Innovation Information Initiative} meeting, \textit{Joint Conference on Applied Mathematics 2019} by the Mathematical Society of Japan (MSJ), \textit{MSJ Spring Meeting 2020}, Kyoto University Applied Mathematics Seminar, \textit{TDA for Applications - Tutorial \& Workshop} at Tohoku University, the 2020 \textit{Econometric Society World Congress} at Bocconi University, the Hong Kong University of Science and Technology, \textit{TopoNets 2020}, the \textit{MaCCI/EPoS Conference on Innovation}, the 2021 \textit{International Industrial Organization Conference}, KU Leuven \textit{Data \& Algorithms for ST\&I Studies} conference, Indian Institute of Technology Bombay, and Instituto Tecnol\'ogico Aut\'onomo de M\'exico. We thank Alan Chiang and Chise Igami for research assistance.}}
\author{Emerson G. Escolar\thanks{%
Kobe University Graduate School of Human Development and Environment, and RIKEN Center for Advanced Intelligence Project. E-mail: e.g.escolar@people.kobe-u.ac.jp.}
\and Yasuaki Hiraoka\thanks{%
WPI-ASHBi, Kyoto University Institute for Advanced Study, Kyoto University and Center for Advanced Intelligence Project, RIKEN. E-mail: hiraoka.yasuaki.6z@kyoto-u.ac.jp.}
\and Mitsuru Igami\thanks{%
Yale Department of Economics. E-mail: mitsuru.igami@yale.edu.} \and Yasin
Ozcan\thanks{%
FTI Consulting. Email: ozcan@alum.mit.edu.}}
\date{March 31, 2022}
\maketitle

\begin{abstract}
Where do firms innovate? Mapping their locations and directions in technological space is challenging due to its high dimensionality. We propose a new method to characterize firms’ inventive activities via topological data analysis (TDA) that represents high-dimensional data in a shape graph. Applying this method to 333 major firms’ patents in 1976--2005 reveals substantial heterogeneity: some firms remain undifferentiated; others develop unique portfolios. Firms with unique trajectories, which we define and measure graph-theoretically as “flares” in the Mapper graph, perform better. This association is statistically and economically significant, and continues to hold after we control for portfolio size, firm survivorship, industry classification, and firm fixed effects. By contrast, existing techniques---such as principal component analysis (PCA) and Jaffe's (1989) clustering method---struggle to track these firm-level dynamics.


\noindent \textit{Keywords}: Innovation, Mapper, Patents, R\&D, Topological data analysis.


\noindent \textit{Journal of Economic Literature (JEL) classifications}: C65, C88, L10, O30.
\end{abstract}


\section{Introduction}

The ``rate and direction of inventive activity'' have been recognized as one of the main themes in economics since at least the conference of the same title in 1960 (\cite{Nelson1962}, \cite{LernerStern2012}). Whereas the rate of innovation has been studied extensively, research on its direction has seen much less progress. Nevertheless, recent studies suggest the direction of scientific change is both an important choice for individual researchers and a critical outcome for scientific communities (\cite{AzoulayEtAl2019}, \cite{Myers2020}). These observations, along with the central role of product differentiation in the theory of industrial organization (IO), suggest the direction of inventive activity is important for firms and industries as well. 

Mapping the locations and directions of firms’ research and development (R\&D) activities is a challenging problem because technological space has many dimensions, unlike physical/geographical space.\footnote{Whereas a large literature exists on the geography of innovation (pioneered by \cite{JaffeTrajtenbergHenderson1993}), relatively few papers explore technological space, because of methodological challenges.} Even a relatively ``coarse'' classification system by the US Patent and Trademark Office (USPTO) uses more than 400 categories (patent classes), and large firms frequently conduct R\&D in more than 100 classes, obtaining thousands of patents each year. As a result, the dimensionality of the action/state space is extremely high, and infinitely many directions of inventive activity are possible in principle. Studying something we cannot even visualize and describe is difficult. Hence, developing a method for faithfully mapping their technological positions and documenting empirical regularities (i.e., measurement and exploratory data analysis) would be a crucial step.

Given the high dimensionality of the problem, some dimensionality reduction seems warranted. Commonly used methods include principal component analysis (PCA), multi-dimensional scaling (MDS), and various algorithms for clustering (e.g., k-means clustering). However, even though these existing methods provide some simplified visualization and description, fundamental issues remain unresolved: collapsing data would eliminate useful information about the direction of inventive activity. For example, Figure \ref{Figure - mapper(n20_m0_cos)} (a) shows a PCA that projects onto a two-dimensional plane 333 major firms’ patent portfolios (vectors of logged patent counts across 430 USPTO classes) in 1976--2005. Huge clusters of points on the left side would seem to suggest many firms conduct R\&D in close proximity, but this “densely populated area” could partly be an artifact of collapsing the other 428 dimensions. Similar issues arise in other existing methods, due to information loss (see section 4.4 for an example of clustering). Thus, a faithful representation of the positions and directions of R\&D requires new descriptive tools that avoid arbitrarily collapsing data, provide intuitive visualizations of how firms’ patent portfolios evolve over time, and permit quantification of these dynamics.

\begin{figure}[htb!!!!]
\caption{Firms' Locations in Technological Space, 1976--2005}%

\begin{subfigure}{0.5\textwidth}
\caption{Two-Dimensional PCA}%
\centering
\includegraphics[width=0.9\linewidth]{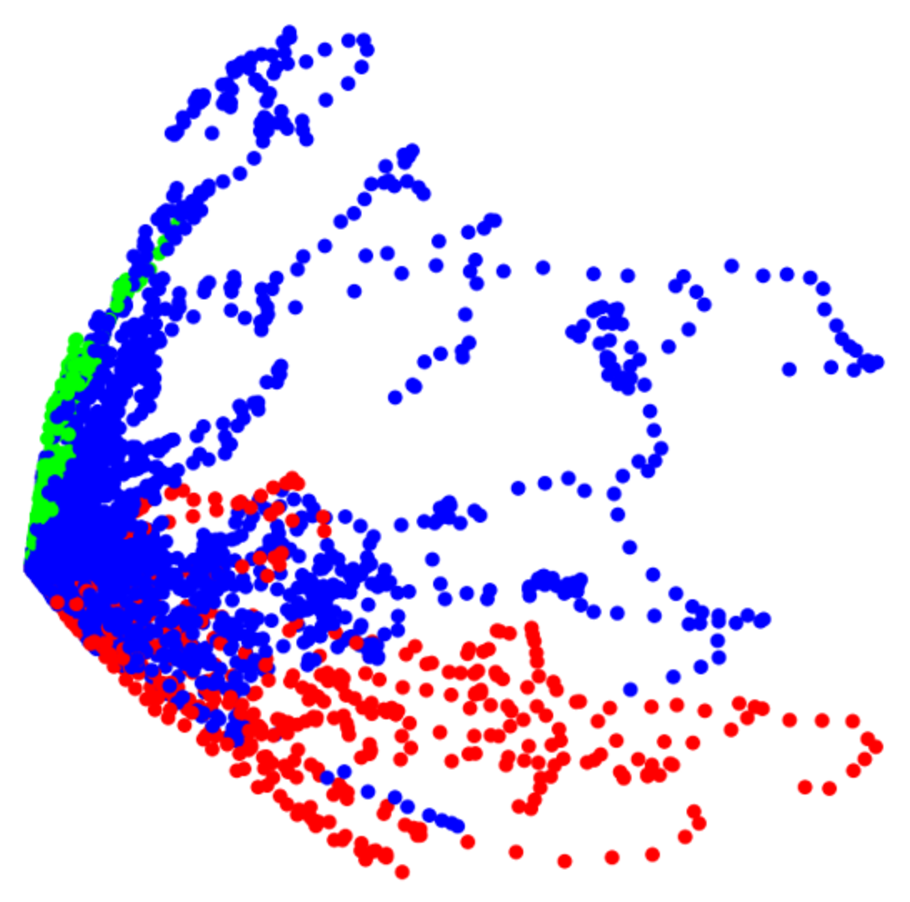}
\end{subfigure}
\begin{subfigure}{0.5\textwidth}
\caption{Shape Graph by Mapper}%
\centering
\includegraphics[width=0.9\linewidth]{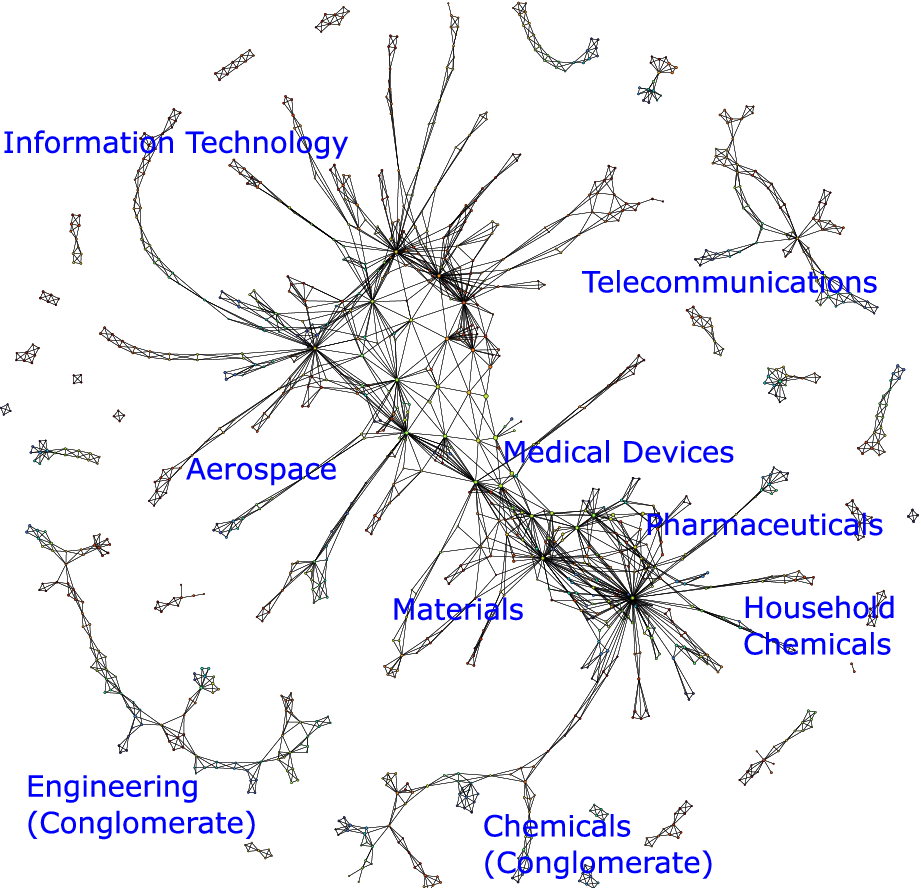}
\end{subfigure}
\caption*{\footnotesize {%
\textit{Note}: Both pictures represent the evolution of 333 major firms' portfolios of US patents that are acquired by in-house R\&D between 1976 and 2005. Each firm-year is a vector of log patent counts across 430 technological classes. The left panel is a two-dimensional PCA (red markers are IT firms, green markers are drug makers, and blue markers are all others). The right panel is a Mapper graph based on the same data (see section 4 for details).  See Appendix H for a three-dimensional PCA and the same Mapper graph in a PCA layout.}}%
\label{Figure - mapper(n20_m0_cos)}
\end{figure}%

This paper presents such a new method to represent firms’ locations as a combinatorial/topological object (shape graph), which can be easily visualized and quantified in a variety of ways using graph theory. We adapt and extend a tool from computational topology called the Mapper procedure \parencite{singh2007topological}. This algorithm is well founded on mathematical concepts from computational topology and geometry, such as the Reeb graph, and aims to preserve the topological and geometric information of the original data, in two steps. First, it clusters data points in each local neighborhood based on a distance metric of one’s choice (e.g., cosine distance). Second, it connects clusters with edges if a pair of clusters shares at least one data point. Hence, even though the resulting graph might appear to visualize data on a two-dimensional plane--—see Figure \ref{Figure - mapper(n20_m0_cos)} (b)---as in the PCA plot, the shape graph retains the notions of proximity and continuity (in the original space) with edges between neighboring nodes.

We apply this method to the dynamic evolution of the 333 major firms’ patent portfolios across 430 USPTO classes in 1976--2005, and report three sets of results. First, we visualize these firms' technological positions and trajectories over the three decades. (Whereas ``data visualization'' plays only a minor role in most empirical studies, it embodies one of the main results in our context, because the systematic mapping of technological space is the central empirical problem that this paper addresses.) We find many engineering firms remain undifferentiated and cluster together in the densely populated “trunk” or the “continental” part of the map. However, a few dozen firms, primarily in the information technology (IT) sector, start differentiating from the rest in the 1980s and the 1990s, developing unique portfolios and exhibiting distinctive trajectories, as represented by long “branches” or “flares” that spike out of the main trunk. In the topological space, which is coordinate free, these shapes provide explicit signatures of the unique ``directions'' of inventive activity.

Second, we propose a formal definition of such flares based on graph theory, as well as a computational method to measure their length, and find 40.3 \% of the firms exhibit some flares. We assess the empirical relevance of this new measure by evaluating its statistical relationships with the firms’ financial performances (revenue, profit, and market value). Regression results suggest positive correlations between the flare length and the performance metrics. This association is statistically significant at conventional levels, and economically significant in magnitude (e.g., an extra length of flare in 1976--2005 is associated with 31\%--40\% higher performances as of 2005). Moreover, these patterns continue to hold after controlling for (i) portfolio size, (ii) firm survivorship, (iii) industry classification, and (iv) firm fixed effects.

Third, we show how our method and results compare with \cite{JAFFE198987}, which is based on k-means clustering and is one of the most prominent methods to study firms’ technological locations. The scope of Jaffe’s clustering is global, which makes it suitable for splitting firms into industries. But \cite{JAFFE198987} struggles to track firm-level trajectories and fails to find any statistically significant relationship between their moves and performances. By contrast, our scope of clustering is only local, which allows us to preserve details at the firm-year level. Moreover, the whole procedure is designed to retain and recover the \textit{continuum} of firms and industries in the original data, and allows us to characterize firm-level trajectories. Our discovery of statistically significant relationships between the firms' financial performances and their length of unique technological trajectories (flares) demonstrates the benefit of this high-precision approach.

Thus, our approach is complementary to the existing methods and can generate new insights that are difficult to obtain otherwise. It helps us answer some of the most basic questions, including where firms innovate, how their technological trajectories are related to their product-market performances, and how industries and technologies evolve over time.

\medskip

We organize the rest of the paper as follows. Section 2 presents a model of competition and innovation in a high-dimensional space. Section 3 explains the data. Section 4 introduces our topological method and presents a historical map of firms' inventive activities. Section 5 explains our method to measure flare length and assesses its correlation with firms' performances. Section 6 concludes. The Online Appendix contains (A) the details of our economic model, (B) raw-data patterns, (C) an introduction to TDA, and formal definitions and proofs, (D) the details of Jaffe-style clustering, (E) sensitivity analysis, (F) panel-data regressions and out-of-sample predictions, (G) comparison with network-centrality measures and Jaffe's distance measure, and (H) additional exhibits.


\section{Conceptual Framework}

We propose an economic model of firms' competition and innovation to (i) highlight key economic forces that
affect firms' behaviors and market outcomes, (ii) guide our exploratory data analysis, and (iii) facilitate the interpretation of our
empirical findings.

\subsection{Competition and Innovation in High-Dimensional
Space}

We combine elements of the workhorse IO models of \cite{BLP1995} (BLP) and \cite{EP1995} (EP) in the presence of many product markets that are embedded
in the space of technologies.

\paragraph{Markets and Technologies.}

Consider many product markets indexed by $m=1,2,...,\left \vert \mathcal{M%
}\right \vert $, each of which is populated by $M_{m,t}$ consumers and $%
N_{m,t} $ firms in period $t$. They are independent of each other.
Their main difference from geographical markets---whose physical locations
can be characterized by only two numbers, longitude and latitude---is that
we characterize their \textquotedblleft locations\textquotedblright \ from
the viewpoint of technologies that are required to serve them. Let $l\left(
m\right) \equiv \left( l_{1}\left( m\right) ,l_{2}\left( m\right)
,...,l_{K}\left( m\right) \right) $ denote the location of market $m$ in the 
$K$-dimensional space, where $l_{k}\left( m\right) \geq 0$ is its $k$th
coordinate.\footnote{%
We abstract from the distinction between product space and technology space
because we use only patent statistics and financial data in
our empirical analysis. See \cite{BloomVanReenenSchankerman2013} for an
example that makes this distinction.}

\paragraph{Period Profit.}

Each of the $N_{m,t}$ firms earns period profit,%
\begin{equation}
\pi _{i,t}=f^{\pi}\left( M_{m,t},N_{m,t},\xi _{i,t},c_{i,t}\right) ,
\label{eq - period profit}
\end{equation}%
where $\xi _{i,t}$ is product quality (we assume single-product firms) and $%
c_{i,t}$ is constant marginal cost of production. This reduced-form profit function
encapsulates a BLP-style model of a differentiated-product demand system and
Bertrand competition (see Appendix A.1). Hence, $\pi _{i,t}$ is
increasing in $M_{m,t}$ and $\xi _{i,t}$ but decreasing in $N_{m,t}$ and $%
c_{i,t}$. These four objects are determined by the history of (all) firms'
actions, $h_{t}\equiv \left( h_{i,t}\right) _{i=1}^{N_{t}}$, where $%
h_{i,t}\equiv \left( a_{i,\tau }\right) _{\tau =0}^{t-1}$ is firm $i$'s
actions up to period $t-1$, and $N_{t}$ denotes the total number of firms
that have operated in any of the $\left \vert \mathcal{M}\right \vert $
markets in any period up to $t$.

\paragraph{Market Size.}

Each market $m$'s size is realized at $t=0$ following some distribution $%
F_{M}$ with spatial correlations, $M_{m,0}\sim F_{M}$. In any subsequent
period $t>0$, its effective size is the portion of consumers that have not
purchased anything yet,%
\begin{equation}
M_{m,t}=M_{m,t-1}-\sum_{h\in \mathcal{H}_{m,t-1}}\mathbb{I}\left \{
d_{h,t-1}\neq 0\right \} ,  \label{eq - market size, transition}
\end{equation}%
where $\mathcal{H}_{m.t-1}$ is the set of remaining consumers in market $m$
at $t-1$, $\mathbb{I}\left \{ \cdot \right \} $\ is an indicator function, $%
d_{h,t-1}$ is the discrete choice of consumer $h$ at $t-1$, and $%
d_{h,t-1}\neq 0$ means the consumer bought something.

\paragraph{Number of Firms.}

The number of active firms in market $m$ at time $t$ is the sum of firms whose
technological locations $l_{i,t}\equiv \left(
l_{i,t,1},l_{i,t,2},...,l_{i,t,K}\right) $\ are in the neighborhood of $%
l\left( m\right) $:%
\begin{equation}
N_{m,t}=\sum_{i}\mathbb{I}\left \{ l_{i,t}\in \mathcal{N}\left( l\left(
m\right) \right) \right \} ,  \label{eq - number of firms in market m}
\end{equation}%
where $\mathcal{N}\left( \cdot \right) $ is the set of neighborhood
locations (specified in section 4). Thus, firms can serve market $m$ only when they possess \textquotedblleft
relevant\textquotedblright \ technologies $l_{i,t}\in \mathcal{N}\left( l\left(
m\right) \right)$.

\paragraph{R\&D Investments.}

Each firm's location is determined by $l_{i,t}=f^{l}\left( x_{i,t}\right) $,
where $f^{l}$ is an increasing function (specified in section 4) and $x_{i,t}\equiv \left(
x_{i,t,1},x_{i,t,2},...,x_{i,t,K}\right) $ is the amount of successful R\&D
investment in each of the $K$ technological areas at time $t$. Not all R\&D investments are
successful, and firms could be heterogeneous in their R\&D productivity. We
encapsulate these notions in a stochastic R\&D-production function, 
\begin{equation}
x_{i,t,k}=f^{x}\left( b_{i,t-1,k}^{x};\omega _{i,t-1,k}^{x}\right)
+\varepsilon _{i,t,k}^{x},  \label{eq - law of motion for x}
\end{equation}%
where $f^{x}$ is an increasing function of $b_{i,t-1,k}^{x}$ ($i$'s R\&D budget in the previous period in area $k$), $\omega _{i,t-1,k}^{x}$\ is its area-specific R\&D
productivity that follows some exogenous Markov process, and $\varepsilon
_{i,t,k}^{x}$ is an i.i.d. shock. Let $b_{i,t}^{x}\equiv
\sum_{k=1}^{K}b_{i,t,k}^{x}$ denote the total R\&D\ expenditure across all
areas, and $\mathbf{\omega}_{i,t}^{x}\equiv \left( \omega_{i,t,k}^{x}\right) _{k=1}^{K}$ the vector of area-specific R\&D productivity.

\paragraph{Other Investments.}

Firms can engage in two other categories of investments---marketing
and operations---which determine the firm's product quality $\xi _{i,t}$
and production cost $c_{i,t}$, respectively. These state variables evolve according
to some controlled\ Markov processes, $\xi _{i,t}=f^{\xi }\left( \xi
_{i,t-1},b_{i,t-1}^{\xi };\omega _{i,t-1}^{\xi }\right) $ and $%
c_{i,t}=f^{c}\left( c_{i,t-1},b_{i,t-1}^{c};\omega _{i,t-1}^{c}\right) $,
where $b_{i,t}^{\xi }$ and $b_{i,t}^{c}$ are $i$'s budgets for
marketing and operations, respectively, and $\omega _{i,t}^{\xi }$ and $%
\omega _{i,t}^{c}$ are $i$'s productivity in these activities, which follow some exogenous Markov processes as well.

\paragraph{Budget.}

The firm's total budget is constrained by the amount of available cash,%
\begin{equation}
b_{i,t}\equiv b_{i,t}^{x}+b_{i,t}^{\xi }+b_{i,t}^{c}\leq cash_{i,t},
\label{eq - budget constraint}
\end{equation}%
which is determined by the following accounting rule,%
\begin{equation}
cash_{i,t}=cash_{i,t-1}-b_{i,t-1}+\pi _{i,t-1}+fin_{i,t-1},
\label{eq - cash law of motion}
\end{equation}%
where the first three terms on the right-hand side (RHS) reflect cash holding, expenditure, and
profits in the previous period, respectively, and $fin_{i,t-1}\gtrless 0$ is
the cashflow from financing activities.\footnote{%
We assume $fin_{i,t}$ follows some exogenous Markov process and do
not model the underlying financial markets. We include it to incorporate the possibility that retained earnings are not the only
source of cash and that a firm can go bankrupt (see Appendix A.2 for entry and exit).}

\paragraph{Dynamic Optimization.}

Each firm allocates its budget to R\&D $\mathbf{b}%
_{i,t}^{x}\equiv \left( b_{i,t,k}^{x}\right) _{k=1}^{K}$, marketing $b_{i,t}^{\xi }$,
and operations $b_{i,t}^{c}$, to maximize the discounted present value of its current
and future profits,%
\begin{equation}
a_{i,t}\equiv \left( \mathbf{b}_{i,t}^{x},b_{i,t}^{\xi },b_{i,t}^{c}\right)
=\arg \max \sum_{\tau =t}^{\infty }\beta _{i}^{\tau -t}E_{i,t}\left[ \pi
_{i,\tau }\right] ,  \label{eq - value maximization}
\end{equation}%
subject to the budget constraint (\ref{eq - budget constraint}). $\beta
_{i}\in \left( 0,1\right) $ is $i$'s discount factor. $E_{i,t}$ is
the expectation operator given its information set and beliefs at $t$. We do not fully specify these objects because computing equilibria of this dynamic game is outside the scope of this paper, but we intend our framework as a model of the EP class (i.e., strategic industry dynamics with Markov-perfect equilibrium).

\subsection{Implications for the Analysis of Technological Space}

Five features of the model are particularly relevant for the analysis of firms' technologies:

\begin{enumerate}
\item {Profit }$\pi _{i,t}$ is increasing in $M_{m,t}$ and $\xi _{i,t}$ but
decreasing in $N_{m,t}$ and $c_{i,t}$;

\item These four objects are determined by the history $h_{t}$ of (all)
firms' actions $a_{i,t}$;

\item Firms are heterogeneous in their productivity, $\omega _{i,t}\equiv
\left( \mathbf{\omega}_{i,t}^{x},\omega _{i,t}^{\xi },\omega _{i,t}^{c}\right) $;

\item The size $M_{m,t}$ of each market is finite and could only decrease over time; and

\item Current profit $\pi_{i,t}$ could increase future R\&D budget $\mathbf{b}%
_{i,t+1}^{x}$ via (\ref{eq - budget constraint}) and (\ref{eq - cash law of motion}).
\end{enumerate}

A direct implication of Features 1 and 2 is that firms would try to operate
in markets with high $M_{m,t}$ and low $N_{m,t}$. Thus, the realized profile
of locations, $l_{t}\equiv \left( l_{i,t}\right) _{i}$ will reflect firms' tradeoff between \textquotedblleft chasing consumers\textquotedblright \ and \textquotedblleft avoiding
competitors.\textquotedblright \ Feature 3 suggests firms with comparative advantage in R\&D (i.e., relatively high $\mathbf{\omega}_{i,t}$)
would move away from crowded markets and try to carve out their own niches. The
high dimensionality $K$ of the technological space, combined with firms'
heterogeneous R\&D capabilities across $K$ areas, offers ample room for
such differentiation. Feature 4 limits the extent to which firms can \textquotedblleft
rest on their laurels\textquotedblright \ (i.e., remain profitable in the
same locations). Because potential demand in any given market is like an oil reserve
that becomes increasingly difficult to extract, firms have to either
constantly explore and conquer new markets or
keep investing in $\xi _{i,t}$ and $c_{i,t}$ to dig deeper. Finally, Feature 5 highlights the possibility of a virtuous cycle in which ``the rich gets richer.'' That is, those who succeed in developing unique technologies earn extra profits, which can be reinvested in future innovations to pursue further growth opportunities.

These considerations suggest the \emph{locations of firms relative to each
other} $\left \{ l_{i,t}\right \} $ could exhibit rich
variation and contain relevant information about their performances and underlying capabilities. In particular, a string of unique positions occupied by a firm may be indicative of its long track record of successful innovations and sustained profitability. We present our method for describing $\left \{ l_{i,t}\right \} $ in section 4, and formalize the measurement of firms' unique technological trajectories in section 5.
\section{Data}

\paragraph{\protect Patents.} We use Ozcan's (2015) data on patents that are granted by the USPTO between 1976 and 2010.\footnote{\cite{Ozcan2015} uses the USPTO's Patent Data Files, which contain raw assignee names at the individual patent level. By contrast, the NBER Patent Data File (another commonly used source of patent data) records standardized assignee names at the ``pdpass'' (unique firm identifier) level, which is less granular than the original assignee name.} We use their application years (instead of years in which they are granted) in our analysis, because the former is closer than the latter to the time of actual invention. We focus on patents that are applied through 2005, because a substantial fraction of later applications would still be under review as of 2010, which raises concerns about sample selection. We sometimes call these patents ``R\&D patents'' to distinguish them from ``M\&A patents'' (see below).

\paragraph{\protect Mergers and Acquisitions (M\&As).} Aside from conducting in-house R\&D and applying for patent protection, firms often obtain patents by acquiring firms that have their own portfolios of patents. Ozcan's (2015) dataset links the USPTO data to the Securities Data Company's M\&A data module. This part of the dataset contains M\&A deals between 1979 and 2010 in which both the acquiring firm and the target firm have at least one patent between 1976 and 2010.\footnote{The data include merger, acquisition, acquisition of majority interest, acquisition of assets, and acquisition of certain assets, but exclude incomplete deals, rumors, and repurchases. We use data on these transactions through 2005.}

\paragraph{\protect Financial Performances.} We use Compustat data on the firms' revenues, EBIT (earnings before interest and taxes), and stock-market capitalization in 2005 (or the last available fiscal year if the firm disappears before 2005). Our purpose is to assess the relevance of our topological measures in terms of their correlations with the firms' eventual financial performances (in section 5).

\paragraph{\protect Descriptive Statistics.} To keep the sample size suitable for visual inspection and detailed exploratory analysis, we focus on firms that acquired at least four firms with patents between 1976 and 2005. This criterion keeps 333 major firms that conduct nontrivial amount of both R\&D and M\&A. Table \ref{Table - Sumstats} reports their descriptive statistics. The average patent count (2,081 for R\&D and 268 for M\&A) is much higher than the median, which suggests relatively few firms have disproportionately large portfolios even within our selective sample. The three financial-performance metrics exhibit similar skewness. Consequently, we use the natural logarithm of these variables to mitigate heteroskedasticity in our subsequent analysis.

\begin{table}[tbh!!!!]
\caption{Summary Statistics of 333 Major Firms}
\begin{center}
\fontsize{9pt}{11pt}\selectfont%
\begin{tabular}{lcccccc}
\hline \hline
& Number of & Mean & Median & Standard & Minimum & Maximum \\ 
Variables & observations &  &  & deviation &  &  \\ \hline
\multicolumn{1}{l}{(a) Patent count} & & & & & & \\ 
\multicolumn{1}{l}{ \ \ In-house R\&D} & $333$ & $2,081$ & $270$ & $5,578
$ & $1$ & $62,382$ \\ 
\multicolumn{1}{l}{ \ \ Acquired by M\&A} & $333$ & $268$ & $59$ & $883$
& $4$ & $9,453$ \\ 
\multicolumn{1}{l}{ \ \ Both R\&D and M\&A} & $333$ & $2,349$ & $405$ & $5,833$ & $5$ & $62,561$ \\ 
\multicolumn{1}{l}{(b) Financial performance} & & & & & & \\ 
\multicolumn{1}{l}{ \ \ Revenue (million US\$)} & $331$ & $10,641$ & $2,306$ & $25,137$ & $15$ & $309,979$ \\ 
\multicolumn{1}{l}{ \ \ EBIT (million US\$)} & $331$ & $1,429$ & $250$ & $3,763$ & $-450$ & $37,159$ \\ 
\multicolumn{1}{l}{ \ \ Market value (million US\$)} & $328$ & $17,957$ & $3,471$ & $39,153$ & $12$ & $367,474$ \\ 
\multicolumn{1}{l}{(c) Number of classes with $>0$ patents} & & & & & & \\ 
\multicolumn{1}{l}{ \ \ In-house R\&D} & $330$ & $65.0$ & $34.5$ & $71.6$ & $1$ & $358$ \\ 
\multicolumn{1}{l}{ \ \ Acquired by M\&A} & $326$ & $22.5$ & $12.5$ & $30.4$ & $1$ & $225$ \\ 
\multicolumn{1}{l}{ \ \ Both R\&D and M\&A} & $333$ & $72.4$ & $43.0$ & $72.3$ & $2$ & $358$ \\ \hline \hline
\end{tabular}
\begin{minipage}{450pt}
{\fontsize{9pt}{9pt}\selectfont \smallskip  \textit{Note}: Financial-performance metrics are as of 2005 or the firm's last available fiscal year. Panels (b) and (c) display fewer observations than the sample size, because some firms are not in Compustat and some patents' classes are unknown.}
\end{minipage}
\end{center}
\label{Table - Sumstats}
\end{table}

\paragraph{\protect Where Do Firms Patent?} Panel (c) of Table \ref{Table - Sumstats} counts the number of USPTO classes in which the firms have patents. The median firm conducts R\&D in 34.5 classes, whereas the mean is 65. The most diversified portfolio (Mitsubishi Electric) covers 358 of the 430 classes, followed by General Electric's 347. Hence, the portfolio aspect of innovation is highly heterogeneous. Appendix B illustrates what these portfolios look like in raw data.


\section{Mapping Firms' Locations Over Time}

We explain our method to study firms' locations in
technological space in sections 4.1 and 4.2, and investigate its output---a shape
graph---in section 4.3. Section 4.4 compares Mapper with Jaffe's (1989) clustering method.

\subsection{The Mapper Algorithm}

We propose patents as a measure of successful R\&D\ investment. For each
firm $i=1,2,...,333$, each year $t=1976,1977,...,2005$, and each patent
class $c=1,2,...,430$, we count the number of patent applications, $%
p_{i,t,c} $. Hence, each firm-year observation is a 430-dimensional vector $%
p_{i,t}\in \mathbb{R}^{430}$ (i.e., we use patent class $c$ as an empirical analog of
technological area $k$ in our theoretical model and assume $K=430$).

\paragraph{Preprocessing.}

Because firms' patent applications in any single year tend to be volatile
and may not be representative of their underlying R\&D activities, we follow
\cite{BennerWaldfogel2008} to smooth out yearly
fluctuations by aggregating them in a five-year moving window: $\tilde{p}%
_{i,t}=\sum_{\tau =t}^{t+4}p_{i,\tau }$. We take its natural logarithm to
accommodate the highly skewed distribution of patent count (see section 3),%
\footnote{%
This equation is our main specification of $f^{l}\left( \cdot \right) $ in
section 2. We also use an alternative transformation (calculating shares of
classes within each firm-year) due to Jaffe (1989) in Appendix D.}%
\begin{equation}
l_{i,t}=\ln \left( \tilde{p}_{i,t}+1\right) .  \label{eq - log of patents}
\end{equation}%
Let $L=\left \{ l_{i,t}\right \} $ denote the entire panel dataset of firms'
locations.

We propose mapping the entire $L$ in a single graph, instead of creating a map
for each $i$ or $t$ (see Appendix H for such plots), for two reasons. First, our model in section 2 suggests
firms' locations \textit{relative to each other} determine the number of
competitors $N_{m,t}$ in each market $m$, which in turn affects profits.
Second, the model also suggests their historical trajectories contain
relevant information about firms' R\&D capabilities and profitability:
dynamics matter. Fortunately, our topological method works well with such a
dataset (i.e., many data points, or a \textquotedblleft point
cloud,\textquotedblright \ with many dimensions).

\paragraph{Mapper.}

We first present the Mapper procedure in purely mathematical terms, and then
provide more intuitive explanations. The procedure
creates a simplified representation of complicated data in a graph
(\textquotedblleft shape graph\textquotedblright \ or \textquotedblleft
Mapper graph\textquotedblright ) that captures topological features such as
branching, flares, and islands. Mathematically, this shape graph $G\left(
L\right) $ is constructed in four steps.

\begin{enumerate}
\item Project $L$ into $\mathbb{R}^{d}$ by some filter function $%
f:L\rightarrow \mathbb{R}^{d}$, where $d<K$ is the dimensionality of a
lower-dimensional space.

\item Cover the image $f(L)$ using an overlapping cover $\mathcal{C}%
=\{C_{j}\}_{j=1}^{J}$.

\item For each cover element $C_{j}$, apply some clustering algorithm to its
pre-image $f^{-1}(C_{j})$ based on the dissimilarity function $\delta $ to
obtain a partition of $f^{-1}(C_{j})$ into $Q_{j}$ clusters, $V_{j,q}$ ($q=1,%
\hdots,Q_{j}$):
\begin{equation*}
f^{-1}(C_{j})=\bigsqcup_{q=1}^{Q_{j}}V_{j,q},
\end{equation*}
where the notation $\sqcup $ represents a disjoint union.

\item Construct the graph $G$ with nodes (vertices) consisting of all $%
V_{j,q}$s. Connect two nodes, $V_{j,q}$ and $V_{j^{\prime },q^{\prime }}$,
by an edge if $V_{j,q}\cap V_{j^{\prime },q^{\prime }}\neq \emptyset $.
\end{enumerate}

Conceptually, the idea is to simplify the raw data $L$ by clustering data
points within each local region (in steps 1, 2, and 3, which define a set $V$
of vertices or nodes) but make sure to preserve the sense of continuity
across regions (in step 4, which defines a set $E$ of edges), so that the
resulting graph $G=\left( V,E\right) $ retains the topology of the data on a
global scale. Appendix C.1 offers a brief introduction to TDA. Appendix C.2 features an illustrated example (with $K=2$, $d=1$%
, and $J=4$) to help the reader develop a more concrete understanding.

\paragraph{Connections to the Economic Model.}

The graph $G(L)$ provides a topological map of firms' technological locations $L$. The set
of nodes $V$ is an empirical analog of the set of product markets $%
\mathcal{M}$ that have ever been visited by any of the firms in our data.
Hence, the local clustering in step 3 empirically determines the neighborhood $\mathcal{N}$ in equation (\ref{eq - number of firms in market m}). The set of edges $E$ preserves their relative positions by indicating for each market which other markets are adjacent to it.

\subsection{Practical Considerations}

The Mapper procedure offers a
\textquotedblleft telescope\textquotedblright \ to directly look at data
points---even when they reside in a high-dimensional space---by
focusing on a coordinate-free representation of the underlying data in terms
of a graph. This graph preserves the relative positions of the original data
points as long as they form a continuum. Hence, it is suitable for
visualizing any high-dimensional data points that exhibit some sort of
continuity.

As is the case with a real telescope, its practical usefulness depends on
properly tuning its \textquotedblleft parameters\textquotedblright : (i) the
filter function $f$, (ii) the number of cover elements $J$, (iii) the
clustering method, (iv) the dissimilarity function $\delta $, and (v) the
degree of overlap $o$ between cover elements. We explain the role of each parameter
and our baseline specification.

\paragraph{Filter.}

The choice of $f$ in step 1 determines the \textquotedblleft
angle\textquotedblright \ at which we look at the data. Some
angles allow us to see richer patterns than others because they expose
greater variation. A typical choice is PCA or MDS, but any other
\textquotedblleft off-the-shelf\textquotedblright \ technique for
dimensionality reduction can be used in principle. We use two-dimensional PCA as
our baseline $f$ (i.e., we project $L$ to its first two principal axes, $%
f:L\rightarrow \mathbb{R}^{2}$) because PCA is fast, deterministic, and
well-understood, and preserves the largest variation in data by definition.
As a sensitivity analysis, we also use MDS and three-dimensional PCA in section 5.4.

\paragraph{Resolution.}

In step 2, $J$ determines the resolution of the graph. The higher the
resolution, the more details are revealed. But a fundamental limit exists.
An arbitrarily high $J$ would result in a degenerate graph with as many
nodes as data points but no edges. Because data points are discrete objects,
we cannot preserve the sense of continuity between them if our
scope is narrower than the distance between them. We set $J=400$ because it
reveals sufficiently detailed patterns at the individual-firm level without
losing their historical trajectories. We assess sensitivity with $225$ and $%
625$ as well.\footnote{%
We use the Python implementation, KeplerMapper, by \cite{KeplerMapper2019}, in which this parameter is operationalized as the \textquotedblleft
number of cubes,\textquotedblright \ $n$, in each of the $d$ dimensions
(e.g., $J=n^{2}$ when $d=2$). Thus, we implement $J=225$, $400$, and $625$
by setting $n=15$, $20$, and $25$, respectively.}

\paragraph{Clustering.}

Step 3 performs the main simplification task: clustering nearby data points.
Conceptually, the most important point of Mapper is not the choice of
clustering algorithm but the idea that \textit{this operation is performed only on a
specific subset of data points} (i.e., those within each $f^{-1}\left( C_{j}\right) 
$) \textit{at a time}. Hence, any
reasonable clustering method may be used. We use hierarchical clustering
with single-linkage method, and follow Sing, M\'{e}moli, and Carlsson's (2007) heuristic for choosing
the number of clusters. We assess sensitivity with five other specifications.

\paragraph{Dissimilarity.}

Clustering requires a measure of (dis)similarity between a given pair of
firm-year observations, say $\left( i,t\right) $ and $\left( i^{\prime
},t^{\prime }\right) $. We use the cosine distance,%
\begin{equation*}
\delta (l_{i,t},l_{i^{\prime },t^{\prime }})=1-\frac{\sum_{c}l_{i,t,c}l_{i^{%
\prime },t^{\prime },c}}{\sqrt{\sum_{c}l_{i,t,c}^{2}}\sqrt{%
\sum_{c}l_{i^{\prime },t^{\prime },c}^{2}}},
\end{equation*}%
because it has been commonly used since
\cite{Jaffe1986}. We also use Euclidean,
correlation, min-complement (\cite{BarLeiponen2012}), and Mahalanobis distances.

\paragraph{Overlap.}

Step 4 completes the graph representation by adding an edge to any pair of
clusters (nodes) that share at least one observation. This
\textquotedblleft sharing\textquotedblright \ of observations requires an overlapping region between adjacent cover elements.
The degree of overlap $o\in \left( 0,1\right) $ governs the tolerance for detecting
continuity, with values close to $0$ generating almost no edges and values close to $1$
detecting continuity almost everywhere. Such extreme values defeat
the purpose of capturing the shape of the data; we set $o=0.5$ (i.e., 50\%
of a cover element's \textquotedblleft area\textquotedblright \ overlaps
with each of its neighbors), and assess sensitivity with $0.3$ and $0.7$.

\subsection{A Topological Map of the Technological Space, 1976--2005}

The shape graph of Figure \ref{Figure - mapper(n20_m0_cos)} (b) embodies our first main result: a faithful representation of the 333 firms’ inventive activities across 430 technological areas. Pooling all 30 years of panel data allows us to track their movements within a single map, including many unique trajectories. Appendix H reports alternative results based on year-by-year Mapper graphs.

\paragraph{IT.} 

Figure \ref{Figure - mapper(cos_log_details_1)} reproduces the northern half of Figure 1 (b) with greater detail. The main trunk consists of large nodes containing hundreds of firm-years (see the lower-middle part labeled “many engineering firms”). Their patents are relatively few and undifferentiated. Even famous IT firms started from this densely populated “heartland” of electronics in the 1970s, but their inventive activities diverged from the rest in the 1980s and evolved into unique trajectories in the 1990s and the 2000s. These dynamics coincide with the macroeconomic trend in which IT emerged as a dominant sector with new technological opportunities in many directions. To demonstrate the authenticity of our map more concretely, we investigate five historically important cases.

\begin{figure}[htb!!!!]
\caption{IT and Electronics}%
\centering
\includegraphics[width=1\linewidth]{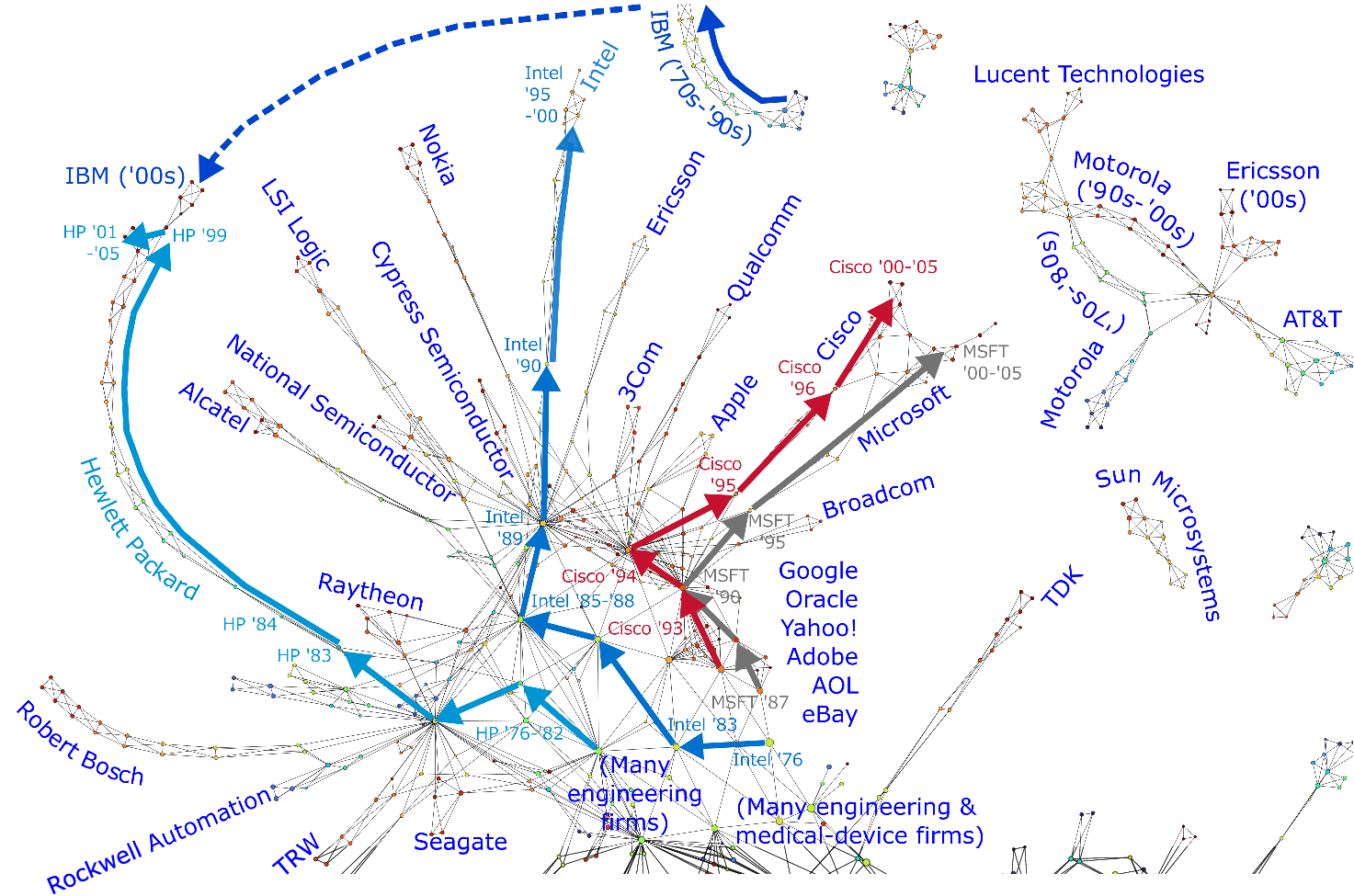}
\caption*{\footnotesize {%
\textit{Note}: Arrows indicate the directions of moves of the five IT firms (see main text). Node colors represent the average year of the firm-years in that cluster, with earlier years in blue and later years in red.}}%
\label{Figure - mapper(cos_log_details_1)}
\end{figure}%

First, the patenting activities of Intel---a leading chip maker---used to be indistinguishable from the rest. Between 1976 and 1988, it moved around but was always surrounded by many other firms. In 1989--1990, however, it started marching in a new direction, and established a clearly unique track record by 1995. This timing coincides with Intel’s “near-death experience” in the mid 1980s, in which Japanese rivals squeezed it out of the memory market, and its subsequent shift to microprocessors (see \cite{grove1996}). During the 1990s, it invested heavily in new microprocessor designs and became a household name (“intel inside”) as personal computers (PCs) became popular. Our map successfully captures these developments as an outward flare, because the underlying patent data distinguishes between “memory” (class 711) and “processors” (712), and Mapper handles all of the 430 dimensions equally well, including the ones for classes 711 and 712.

Second, HP is recognized as the symbolic founder of Silicon Valley because it produced the world’s first PC in 1968.\footnote{``The First PC'' (https://www.wired.com/2000/12/the-first-pc/). \textit{Wired}. December 1, 2000.} In 1984, HP introduced inkjet and laser printers for desktop computers, and retained focus on computers and printers through the 1990s, while its older business in test and measurement instruments was spun off into Agilent Technologies in 1999. Figure \ref{Figure - mapper(cos_log_details_1)} summarizes this history well. HP operated in the middle of the electronics heartland in 1976–1980 alongside many other device makers and defense firms. But its unique direction became clearly visible by 1984, as it started breaking new grounds with patents in class 347 (incremental printing of symbolic information). This path continued to grow into one of the longest flares in our graph. HP briefly “touched” IBM in 1999 (see below), before the Agilent deal made HP unique again.

Third, IBM generated more US patents than any other businesses. Its patenting activities are “off the chart” in both scale and scope, which our map visualizes as an ``island'' detached from all other firms. Nevertheless, IBM in 2001--2005 was sufficiently similar to HP in 1999--2003, and the two firms were briefly collocated near the end of HP’s flare. This rendezvous is not a coincidence: IBM went through major restructuring in 1993–2002 (see \cite{gerstner2002}). Thus, this collocation reflects IBM's downsizing as well as HP's growth.

Fourth, Cisco became a poster child of the Internet age, as the world adopted the Internet Protocol (IP) in the mid-to-late 1990s. Founded in 1984, Cisco makes networking hardware and software. Its first patent was filed in as late as 1993. But its focus on classes 370 (multiplex communications) and 709 (multicomputer data transferring), which together account for 60\% of its patents in our data, was so unique that its trajectory quickly evolved into a flare in the mid 1990s. Thus, a firm does not have to be patenting a lot to develop a flare as long as its direction is unique. Note Cisco's flare touches Microsoft's at two points in the 1990s, when the latter began to expand into networking (see below). This episode highlights another key aspect of competition and innovation: uniqueness is a relative concept. A firm’s flare length is based on the entire graph. Hence, it is determined not only by its own innovations but also by all other firms’.

Fifth, Microsoft dominated the PC operating system (OS) market, first with MS-DOS and then with Windows, which was released in 1985. Since the 1990s, Microsoft has increasingly diversified from the OS market. It introduced the Office suite in 1990, Internet Explorer in 1995, and Xbox in 2001. Hence, Microsoft’s patent portfolio is more diversified than Cisco’s, but their overall trajectories are similar: both of them were close to other IT firms until the late 1980s (Microsoft) or the early 1990s (Cisco) and then grew into individual flares. Their paths crossed again in the mid-to-late 1990s as Microsoft expanded into computer networking in 1995.

\paragraph{Engineering Conglomerates.}

Engineering giants cluster together and constitute a large island in Figure \ref{Figure - mapper(cos_log_details_2)} (a). General Electric (GE), an archetypical conglomerate, holds one of the most diversified portfolios in our data. Its only peers are similarly diversified manufacturers of electronic and capital goods, such as Siemens, Philips, and Mitsubishi Electric.

\begin{figure}[htb!!!!]
\caption{Engineering, Pharmaceuticals, and Chemicals}%
\begin{center}
\begin{subfigure}{0.34\textwidth}
\centering
\includegraphics[width=1\linewidth]{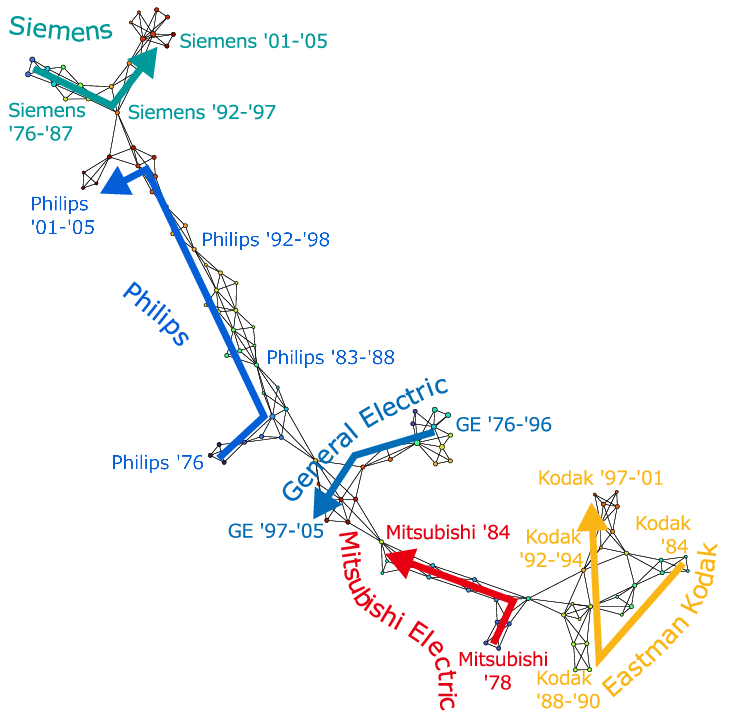}
\caption{Conglomerates}%
\end{subfigure}
\begin{subfigure}{0.64\textwidth}
\centering
\includegraphics[width=1\linewidth]{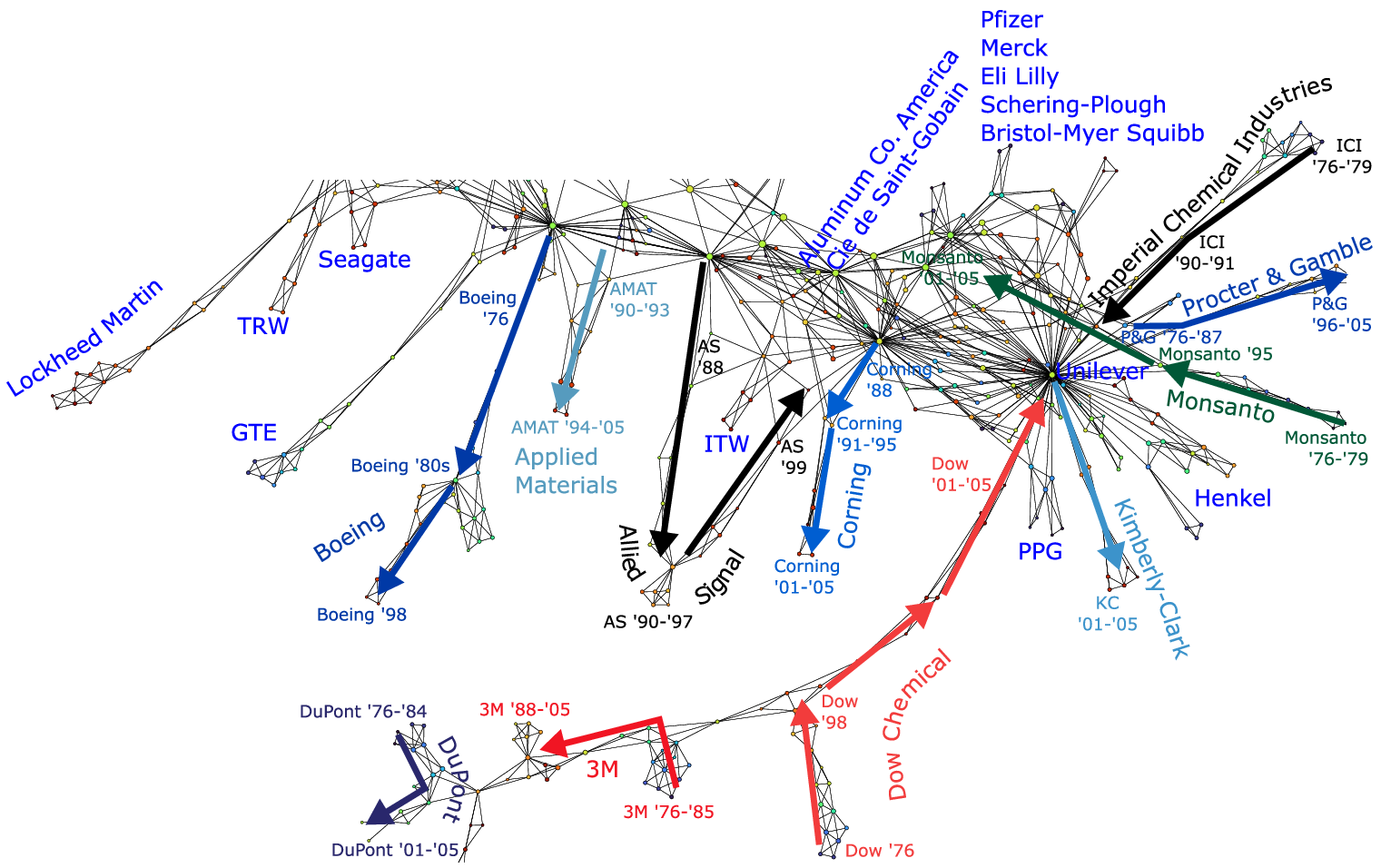}
\caption{Pharmaceuticals and Chemicals}%
\end{subfigure}
\end{center}
\caption*{\footnotesize {%
\textit{Note}: These maps are detailed versions of the southern half of Figure \ref{Figure - mapper(n20_m0_cos)} (b). Arrows indicate the directions of moves of selected firms. Node colors represent the average year of the firm-years in that cluster, with earlier years in blue and later years in red.}}%
\label{Figure - mapper(cos_log_details_2)}
\end{figure}%

\paragraph{Pharmaceuticals and Chemicals.}

Health care is another R\&D-intensive sector, and patent protection is crucial for its business model. Unlike IT firms, however, pharmaceutical firms do not appear in flares or islands. Large drug makers, such as Pfizer, Merck, and Eli Lilly, are clustered in the southern ``peninsula,'' as Figure \ref{Figure - mapper(cos_log_details_2)} (b) shows, because most of the drug patents are in either class 424 or 514 (both are labeled “drug, bio-affecting, and body-treating compositions”), which limits the extent to which their patent portfolios could differ from each other. Further investigations into drugs would require subclass-level data.

Household chemicals firms appear near drug makers because some of their products are based on similar materials. Johnson and Johnson (J\&J), Unilever, Procter and Gamble (P\&G), and Kimberly-Clark hold patents in not only classes such as 510 (cleaning compositions), but also 424 (drugs) and 604 (surgery).

Whereas most of the flares that we have scrutinized so far represented firms’ outward movements, the chemicals industry features a few counterexamples, that is, firms whose technological trajectories are \textit{centripetal} (i.e., moving inward) rather than \textit{centrifugal} (i.e., moving outward). Monsanto was famous for Roundup, a herbicide developed in the 1970s, but became an agri-biotech business in the 1980s and a major producer of genetically engineered crops. In 1997–2002, it divested most of agrochemical businesses and focused on biotechnology, adopting the R\&D/patent-intensive business model of biotech drug companies. This novel strategy shows up as a long march inward, from the periphery to one of the core drugs clusters.

Imperial Chemical Industries (ICI) forms another centripetal flare. ICI used to be one of the largest British firms, but divested most of its bulk chemicals businesses in 1991–2007 to focus on specialty chemicals. One of its spin-offs, Zeneca, merged with Astra to form AstraZeneca, a drugs company, in 1999.

Finally, conglomerates in general chemistry (DuPont, 3M, and Dow) form their own long flares together, not unlike the engineering conglomerates’ island. Dow connects with the rest of the chemicals firms via its long centripetal flare, because it has been increasingly focusing on specialty chemicals, including materials for pharmaceuticals, paper coatings, and advanced electronics. Seeds from genetically modified plants also play an important role in its agri-business. Hence, its strategy is broadly similar to ICI's and Monsanto's.

Whereas most of the IT success stories are associated with long, centrifugal flares, some of the most interesting chemicals firms appear in centripetal flares. The reason is that many of them had already become big conglomerates by 1976 and were ripe for restructuring and divestiture, which tend to generate centripetal movements due to downsizing (recall the path of IBM). Thus, the contrast between IT and chemicals reflects their historical differences.

\paragraph{Summary.}

These examples demonstrate close connections between firms' locations on the map and their actual histories of R\&D (we also investigate M\&A patents in Appendix E.4). The ability to accurately track the trajectories of individual firms, as well as their collective patterns at the industry and sector levels, is Mapper's advantage over existing methods, such as PCA and clustering.

\subsection{Comparison with Jaffe's (1989) Clustering Method}

How do our results differ from Jaffe’s (1989)?\footnote{Appendix D explains their methodological differences in detail and presents an alternative Mapper graph based on Jaffe’s data-transformation convention.} Table \ref{Table - K-Medoids Clustering} shows a list of clusters that global clustering \`a la Jaffe generates. The grouping seems intuitive, with clusters of firms in engineering (cluster 1), telecommunications (2), materials (3), medical devices (4), pharmaceuticals (5), and so on. Jaffe studies firms that ``move'' over time, which he defines as firms that belong to multiple clusters over the years. For example, clusters 7 (computers), 10 (semiconductors), and 11 (electronics) commonly feature Intel and HP. Monsanto appears in both clusters 6 (chemicals) and 15 (genomics). Classifying them as ``movers'' is consistent with their long flares in our Mapper graph (see section 4.3).

\begin{table}[tbh]
\caption{K-Medoids Clustering \`{a} la Jaffe (1989)}
\begin{center}
\fontsize{9pt}{11pt}\selectfont%
\begin{tabular}{lccc}
\hline \hline
Cluster & Number of & Number of & Representative firms\\ 
& firm-years & unique firms & (non-exclusive list of longest-appearing firms in each cluster) \\ \hline
1 & 814 & 88 & Bosch, Halliburton, Schlumberger, Westinghouse \\ 
2 & 576 & 57 & Ericsson, Alcatel, AT\&T, Siemens, Qualcomm \\ 
3 & 548 & 61 & Saint-Gobain, 3M, International Paper, TDK, Alcoa \\ 
4 & 461 & 35 & Stryker, Kimberly-Clark, C.R.Bard, Baxter Travenol, J\&J, P\&G
\\ 
5 & 433 & 37 & Abbott, Eli Lilly, Merck, Pfizer, Sandoz, Schering-Plough \\ 
6 & 371 & 40 & Dow, DuPont, Henkel, Bayer, Monsanto, Bausch \& Lomb \\ 
7 & 365 & 56 & Seagate, Unisys, IBM, Dell, Sun, Intel, HP \\ 
8 & 303 & 38 & Millipore, Pall, Parker Hannifin, Osmonics, Dover, U.S.Filter
\\ 
9 & 287 & 38 & Lockheed Martin, Raytheon, BAE Systems, Northrop Grumman \\ 
10 & 285 & 23 & TI, National Semiconductor, LSI Logic, Cypress, Intel,
Motorola \\ 
11 & 283 & 44 & Tektronix, Teradyne, Philips, Varian, HP, Baker Hughes \\ 
12 & 262 & 41 & Pitney Bowes, BMC Software, Oracle, AOL, Yahoo!, Google, eBay
\\ 
13 & 237 & 31 & Asyst, Rubbermaid, K2, Tenneco Automotive, TRW \\ 
14 & 234 & 27 & Teleflex, Eaton, Dana, Deere, EG\&G, Roper Industries \\ 
15 & 214 & 31 & Chiron, Amgen, Genzyme, Invitrogen, Beckman Coulter, Monsanto
\\ 
16 & 195 & 27 & Apple, Silicon Graphics, Adobe, Sun, Dassault, Disney, NVIDIA
\\ 
17 & 156 & 22 & Coherent, Electro Scientific, Finisar, Newport, Corning,
Alcoa, TRW \\ 
18 & 148 & 27 & AMAT, Nordson, Advanced Energy, EMCORE, Cookson \\ 
19 & 141 & 14 & Mattel, Hasbro, Leggett \& Platt, Tyco, International Game
Tech \\ 
20 & 73 & 9 & Medtronic, Greatbatch, Cordis, Respironics, Roper Industries
\\ 
21 & 48 & 10 & Nuance Communications, Lernout \& Hauspie Speech, ScanSoft \\ 
Total & 6,434 & 756 &  \\ \hline \hline
\end{tabular}
\begin{minipage}{450pt}
{\fontsize{9pt}{9pt}\selectfont \smallskip  \textit{Note}: The number of clusters (21) follows Jaffe's original specification. The total number of unique firms exceeds 333, because many firms appear in multiple clusters. Whereas Jaffe (1989) uses k-means clustering, we use its variant, k-medoids clustering. K-means clustering of our data leads to an extreme result in which a single cluster contains more than 70\% of all firm-years, because so many firm-years are located in the densely populated neighborhood of electronics and engineering. See Appendix D for the result of k-means clustering.}
\end{minipage}
\end{center}
\label{Table - K-Medoids Clustering}
\end{table}

However, Jaffe-style clustering misclassify many other firms. The following firms exhibit flares---and therefore clearly move---in our Mapper graph but do not ``move'' between the Jaffe clusters in Table \ref{Table - K-Medoids Clustering}: Bosch (cluster 1), Ericsson (2), Kimberly-Clark (4), P\&G (4), Dow (6), IBM (7), Lockheed Martin (9), National Semiconductor (10), Corning (17), and Applied Materials (AMAT, 18). They happen to be near the centers of their respective clusters. By contrast, Alcoa (clusters 3 and 17) and Roper Industries (14 and 20) appear in multiple clusters and would be classified as ``movers'' by Jaffe even though they hardly show any flares in our graph. They appear to ``move'' only because the clustering algorithm happens to draw boundaries in the middle of their data points (and not because they actually traveled long distances).

These ``false negatives'' and ``false positives'' highlight the arbitrariness of cluster boundaries. Jaffe's clusters do contain similar firms on average, but their boundaries are ultimately an artifact of discretization and add too much noise at the firm level. This lack of precision is consequential: Jaffe tried but failed to find statistically significant relationships between firms’ performances and whether they ``moved'' in the technological space. We tackle the same problem and find statistically significant relationships in the next section.


\section{Measuring Unique Technological Trajectories}

Given the prominence of flares and islands in the shape graph of our data, as well as their apparent connections to the firms' R\&D strategies, their systematic measurement seems desirable. We formalize the notion of \textquotedblleft firms' unique technological
trajectories\textquotedblright \ and propose a method to measure their
lengths in section 5.1. We then establish their empirical relevance in terms of
correlations with the firms' financial performances in section 5.2. Sections 5.3--5.5 present their economic interpretations, sensitivity analysis, and comparisons with other measures, respectively.

\subsection{Definition and Measurement of Flares}

We use graph theory to formalize the notion of firms' unique technological
trajectories. Our exposition here is brief and
intuitive; see Appendix C.3 for proofs and computational details.

We aim to define each firm's unique trajectory as a flare and measure its
length in the graph $G=\left( V,E\right) $ of our data, which requires several auxiliary concepts. Let us focus on a
subgraph $G_{i}$ of $G$ that consists of nodes that contain firm $i$ and the
edges among them. We decompose $G_{i}$ into \textquotedblleft
interior\textquotedblright \ and \textquotedblleft
boundary.\textquotedblright \ The \textit{interior} $F_{i}$ is the nodes in $G_{i}$ whose immediate neighbors also contain firm $i$, whereas the \textit{boundary} $G_{i}\setminus F_{i}$ (i.e., the
rest of $F_{i}$) consists of the nodes in $G_{i}$ that connect with nodes not containing firm $i$. Appendix C.3 features a pictured example.

We further decompose $F_{i}$ into \textquotedblleft isolated
pieces\textquotedblright \ (\textit{connected components}, formally) as $%
F_{i}=R_{1}\sqcup R_{2}\sqcup ...\sqcup R_{S}$,\footnote{%
In graph theory, a (connected) component of an undirected graph is a
connected subgraph that is not part of any larger connected subgraph.} and
classify each $R_{s}$ as either a \textquotedblleft flare\textquotedblright
\ or an \textquotedblleft island.\textquotedblright \ If $R_{s}$ is also a
connected component of $G$ (i.e., if it is \textquotedblleft
isolated\textquotedblright \ in the context of the full graph), we call $%
R_{s} $ an \textit{island} of firm $i$. Otherwise, we call it a \textit{flare%
} of firm $i$.

To introduce the notion of length, we define an \textit{exit distance} for each node $u$ in $%
F_{i}$ as%
\begin{equation}
e_{i}\left( u\right) =\min \left \{ d\left( u,v\right) |v\in G\setminus
F_{i}\right \} ,  \label{eq - exit distance}
\end{equation}%
where $d\left( u,v\right) $ is the distance between nodes $u$ and $v$ in $G$.%
\footnote{%
In graph theory, distance $d_{G}\left( u,v\right) $ is defined as the
minimum length of paths in $G$ from $u$ to $v$, which we write $d\left(
u,v\right) $ for short. We assume a unit weight on every edge when we
calculate path lengths, but our method can be extended to handle any
positive weights.} In words, the exit distance is the shortest length of
path to get out of firm $i$'s interior. Thus, $e_{i}\left( u\right) $
represents the extent to which technological location $u$ (or all firm-year
observations $l_{i,t}$ that constitute cluster $u$) is differentiated from
the nearest rival's subgraph. In the case of islands, we set $e_{i}\left(
u\right) =\infty $ because no such path exists.

Computing $e_{i}\left( u\right) $ based on its definition (\ref{eq - exit
distance}) is costly because it requires information on the length of 
\textit{all} paths in $G$. Fortunately, we can show that%
\begin{equation}
e_{i}\left( u\right) =\min \left \{ d_{G_{i}}\left( u,w\right) |w\in
G_{i}\setminus F_{i}\right \} ,  \label{eq - exit distance 2}
\end{equation}%
where $d_{G_{i}}\left( u,w\right) $ is the distance between $u$ and $w$ in $%
G_{i}$ (see Appendix C.3 for the proof). Thus, we can compute $e_{i}\left( u\right) $ based only on firm $i$'s
subgraph $G_{i}$, not the entirety of $G$.

Next, we characterize each connected component (i.e., flare or island) $R_{s}
$ of $F_{i}$ based on the longest exit distance of its constituent nodes,%
\begin{equation*}
\lambda _{i}\left( R_{s}\right) =\max_{u\in R_{s}}e_{i}\left( u\right) ,
\end{equation*}%
and call it the \textit{flare index} of $R_{s}$. In other words, we
aggregate the node-level information about exit distances at the level of
connected components. We further aggregate $\lambda _{i}\left( R_{s}\right) $
at the firm level by defining the \textit{flare signature} of firm $i$ as
the multiset\footnote{%
A multiset is a modification of the concept of a set that, unlike a set,
allows for multiple instances for each of its elements. We denote it by
double braces $\{ \{,\} \}$ to distinguish it from a set.}%
\begin{equation*}
\vec{\lambda}_{i}=\left \{ \left \{ \lambda _{i}\left( R_{s}\right)
|s=1,...,S\right \} \right \} .
\end{equation*}%
Four cases are possible. First, if $F_{i}$ is empty (i.e., no interior exists
in $G_{i}$), no flares or islands exist, and we define $\vec{\lambda}_{i}$\
as an empty multiset. Second, if only flares exist in $F_{i}$, $\vec{\lambda}%
_{i}$ contains only finite elements. Third, if only islands exist in $F_{i}$%
, $\vec{\lambda}_{i}$\ contains only copies of $\infty $. Fourth, if both
flares and islands exist in $F_{i}$, $\vec{\lambda}_{i}$\ contains both
finite elements and copies of $\infty $.

Finally, we define the \textit{flare length} of firm $i$ as%
\begin{equation*}
\lambda _{i}=\left \{ 
\begin{array}{ll}
0 & \text{if }F_{i}=\emptyset \text{ (i.e., no flares or islands in $G_{i}$)} \\ 
\mathop{\mathrm{finmax}}(\vec{\lambda}_{i}) & \text{if }\vec{\lambda}_{i}%
\text{\ has at least one finite element (i.e., some flares exist)} \\ 
\infty  & \text{otherwise (i.e., only islands exist in $G_{i}$),}%
\end{array}%
\right. 
\end{equation*}%
where $\mathop{\mathrm{finmax}}(\vec{\lambda}_{i})$ is the maximum among all
finite elements of $\vec{\lambda}_{i}$. Thus, we propose to measure the
length of firm $i$'s unique technological trajectory by its longest flare.

\subsection{Flares and Firms' Performances}

These formal definitions help us detect \textit{all} firms' flares, including those that are located within the densely populated areas. Table \ref{Table - Flare histogram} shows that, whereas our visual inspection in section 4 identified only a few dozen flares and islands, this systematic examination reveals the existence of many more: 40.3 \% of our sample (133 firms) shows some flares.

\begin{table}[tbh!!!!]
\caption{Firm Count by Flare Length}
\begin{center}
\fontsize{9pt}{11pt}\selectfont%
\begin{tabular}{lcccccccccc}
\hline \hline
Flare length & $0$ & $1$ & $2$ & $3$ & $4$ & $5$ & $6$ & $7$ & $8$ & $\infty $ (islands only) \\ \hline
Frequency & $197$ & $78$ & $19$ & $13$ & $10$ & $5$ & $3$ & $1$ & $1$ & $3$
\\ 
Percentage & $59.70$ & $23.64$ & $5.76$ & $3.94$ & $3.03$ & $1.52$ & $0.91$
& $0.30$ & $0.30$ & $0.91$ \\ 
Cumulative \% & $59.70$ & $83.33$ & $89.09$ & $93.03$ & $96.06$ & $97.58$ & $%
98.48$ & $98.79$ & $99.09$ & $100.00$ \\ \hline \hline
\end{tabular}
\begin{minipage}{450pt}
{\fontsize{9pt}{9pt}\selectfont \smallskip  \textit{Note}: The underlying Mapper graph uses log-transform, cosine distance, $n=20$, and $o=0.5$. See section 4.2.}
\end{minipage}
\end{center}
\label{Table - Flare histogram}
\end{table}

What makes portfolios ``unique''? Raw data at the firm level suggest both the quantity and variety of patents help make their portfolios unique. For example, HP has a massive portfolio and a flare of length 6, whereas Dell's portfolio is much smaller and its flare length is 1 (see Appendix B for further details on HP, Dell, and Qualcomm). However, these conditions are not sufficient for long flares, because uniqueness is a relative concept. Our definition of flare is based on $G$, the graph of all firms in all years. Hence, the firm's flare length depends on not only its own activities but also all other firms'.  

\medskip 
 
In the remainder of this section, we investigate whether flares contain any ``relevant'' information. Following a common practice in the patent statistics literature (e.g., \cite{PakesGriliches1984}, \cite{JAFFE198987}, and \cite{hall2005market}), we look for correlations between these topological characteristics and the firms' performance metrics, including revenue, profit, and stock market value.

Let us study their correlations by running regressions of the following form:%
\begin{equation}
\ln (y_{i})=\alpha _{1}+\alpha _{2}\lambda_{i}+\alpha _{3}\mathbb{I}\left \{ \lambda_{i}=\infty \right \}
+\alpha _{4} \ln (p_{i}) +\varepsilon _{i},
\end{equation}%
where $y_{i}$ is firm $i$'s revenue (or other
performance metrics) in 2005, $\lambda_{i}$ is the flare length of its patent portfolio's evolution in 1976--2005, $\mathbb{I}\left \{ \lambda_{i}=\infty \right \} $ is a dummy variable indicating the islands-only type, $p_{i}$ is the total count of firm $i$'s patents in 1976--2005 (i.e., $p_{i}=\sum_{t} \sum_{c}p_{i,t,c}$), $\alpha $s are their coefficients, and $\varepsilon _{i}$ is an error term.\footnote{%
Note we do not intend to prove causal relationships or their specific channels. Our purpose is to
assess the extent to which our topological measures predict these
performance metrics.} We include $\ln({p_{i}}) $\ to control for the size of the firm's inventive activities.%

\begin{table}[tbh!!!]
\caption{Flares, Counts, and Performances}
\begin{center}
\fontsize{9pt}{11pt}\selectfont%
\begin{tabular}{cccccccccccc}
\hline \hline
LHS variable: & \multicolumn{3}{c}{Log(Revenue)} &  & \multicolumn{3}{c}{
Log(EBIT)} &  & \multicolumn{3}{c}{Log(Market value)} \\ 
\cline{2-4}\cline{3-4}\cline{6-8}\cline{10-12}
& (1) & (2) & (3) &  & (4) & (5) & (6) &  & (7) & (8) & (9) \\ \hline
\multicolumn{1}{l}{Flare length} & $0.65$ & $-$ & $0.34$ &  & $0.65$ & $-$ & 
$0.33$ &  & $0.66$ & $-$ & $0.27$ \\ 
\multicolumn{1}{l}{} & $\left( 0.06\right) $ & $\left( -\right) $ & $\left(
0.08\right) $ &  & $\left( 0.07\right) $ & $\left( -\right) $ & $\left(
0.08\right) $ &  & $\left( 0.07\right) $ & $\left( -\right) $ & $\left(
0.08\right) $ \\ 
\multicolumn{1}{l}{Islands only} & $2.27$ & $-$ & $0.95$ &  & $2.32$ & $-$ & 
$0.94$ &  & $2.31$ & $-$ & $0.70$ \\ 
\multicolumn{1}{l}{} & $\left( 0.86\right) $ & $\left( -\right) $ & $\left(
0.84\right) $ &  & $\left( 0.91\right) $ & $\left( -\right) $ & $\left(
0.88\right) $ &  & $\left( 0.94\right) $ & $\left( -\right) $ & $\left(
0.90\right) $ \\ 
\multicolumn{1}{l}{Log(Patents)} & $-$ & $0.40$ & $0.28$ &  & $-$ & $0.41$ & 
$0.29$ &  & $-$ & $0.44$ & $0.34$ \\ 
\multicolumn{1}{l}{} & $\left( -\right) $ & $\left( 0.03\right) $ & $\left(
0.04\right) $ &  & $\left( -\right) $ & $\left( 0.04\right) $ & $\left(
0.05\right) $ &  & $\left( -\right) $ & $\left( 0.04\right) $ & $\left(
0.05\right) $ \\ 
\multicolumn{1}{l}{Constant} & $7.38$ & $5.65$ & $6.08$ &  & $5.35
$ & $3.56$ & $3.97$ &  & $7.80$ & $5.86$ & $6.20$ \\ 
\multicolumn{1}{l}{} & $\left( 0.10\right) $ & $\left( 0.20\right) $ & $%
\left( 0.22\right) $ &  & $\left( 0.10\right) $ & $\left( 0.22\right) $ & $%
\left( 0.24\right) $ &  & $\left( 0.10\right) $ & $\left( 0.22\right) $ & $%
\left( 0.24\right) $ \\ 
\multicolumn{1}{l}{$R^{2}$} & $0.261$ & $0.305$ & $0.345$ &  & $0.256$ & $0.308$
& $0.345$ &  & $0.233$ & $0.320$ & $0.343$ \\ 
\multicolumn{1}{l}{Adjusted $R^{2}$} & $0.257$ & $0.303$ & $0.339$ &  & $0.251$
& $0.306$ & $0.338$ &  & $0.228$ & $0.317$ & $0.336$ \\ 
\multicolumn{1}{l}{Number of observations} & $328$ & $328$ & $328$ &  & $301$
& $301$ & $301$ &  & $325$ & $325$ & $325$ \\ \hline \hline
\end{tabular}
\begin{minipage}{450pt}
{\fontsize{9pt}{9pt}\selectfont \smallskip  \textit{Note}: The left-hand side (LHS) variables are as of 2005 or the latest years available in Compustat. The RHS variables are based on our topological characterization of the patent statistics in 1976--2005. The number of observations varies across columns, because some firms in our patent database lack information on certain metrics in Compustat. In columns 4--6, firms with negative EBIT drop out due to log-transformation. See Appendices E and F for results under alternative specifications and in panel data, respectively. Standard errors are in parentheses.}
\end{minipage}
\end{center}
\label{Table - Regressions}
\end{table}

Table \ref{Table - Regressions} shows flare length is positively correlated with the firm's revenue, EBIT, and market value in 2005. Columns 1, 4, and 7 use the flare variables alone; columns 2, 5, and 8 use $\ln (p_{i})$ alone; and columns 3, 6, and 9 use both. The purpose of comparison is to assess whether our topological characteristics convey additional information above and beyond what patent count alone could predict. The differences between the adjusted $R^{2}$s suggest they do. More formally, the F-tests of a linear restriction, $\alpha _{2}=\alpha _{3}=0$, reject the null hypothesis at the 0.01\%, 0.1\%, and 1\% levels for the revenue, EBIT, and market-value regressions, respectively.\footnote{We calculate $F=[(R^{2}_{ur} - R^{2}_{r})/2] / [(1 - R^{2}_{ur})/(\#obs - 4)]$, where $R^{2}_{ur}$ is the $R^{2}$ of the unrestricted model in column 3 (6 or 9), $R^{2}_{r}$ is the $R^{2}$ of the restricted model in column 2 (5 or 8), and $\#obs$ is the number of observations (328, 301, or 325). We reject the null hypothesis, $\alpha_{2}=\alpha_{3}=0$, if $F$ is greater than the corresponding critical value of the F distribution.} Hence, the incremental contribution of the flare-and-island variables is statistically highly significant.

What about their economic significance? The estimates of $\alpha _{2}$ are 0.34, 0.33, and 0.27 in columns 3, 6, and 9 (i.e., after controlling for $p_{i}$), respectively, which imply an extra length of flare is associated with 40\%, 39\%, and 31\% higher performances in terms of revenue, EBIT, and market value, respectively.\footnote{Likewise, the estimates of $\alpha _{3}$ (0.95, 0.94, and 0.70 in the same three columns) suggest islands-only firms tend to outperform no-flare firms by 159\%, 156\%, and 101\% in these measures, respectively. However, their standard errors are large. Only three firms belong to this category, and all of them have relatively large patent portfolios, which makes $\alpha _{3}$ difficult to isolate from $\alpha _{4}$. Nevertheless, we keep $\mathbb{I}\left \{ \lambda_{i}=\infty \right \} $\ in these columns, because dropping it (and thereby grouping them with no-flare firms) would be unwise given the results on columns 1, 4, and 7.}

\subsection{Economic Interpretations}

Why do flares predict firms' success? Let us interpret these findings based on our model in section 2. First, flares reflect unique technological trajectories. Unique technologies permit product differentiation, which softens price competition (or avoid competition altogether) and increases profits. Specifically, unique technological location $l_{i,t}$ allows the firm to enter a new product market with low $N_{m,t}$. This mechanism directly connects $l_{i,t}$ with $\pi_{i,t}$.

Second, these extra profits could help finance subsequent R\&D expenditure $b^{x}_{i,t}$, thereby reinforcing the firm's technological differentiation and conquest of new markets: a virtuous cycle. The length of flare reflects a string of unique $l_{i,t}$s and a track record of successful technological development in a unique direction. Hence, it is a good proxy for the duration of such virtuous cycles. These dynamics imply positive correlations between $\lambda_{i}$ and $\pi_{i}$.\footnote{One might wonder how our definition of flare length---which does not explicitly incorporate the time dimension---can capture the firm's actual duration of travel without bumping into its rivals in real time. We discuss this issue in Appendix C.4.}

Third, the fact that $\lambda_{i}$ conveys information above and beyond what $p_{i,t}$ predicts---which is known to be strongly correlated with firm size and R\&D expenditure (e.g., \cite{Cohen2010})---suggests $\lambda_{i}$ captures more than just budget size $b_{i,t}$. Our model predicts connections between $\lambda_{i}$, $l_{i,t}$, and technological capabilities $\mathbf{\omega}^{x}_{i,t}$; our findings from panel-data regressions (in section 5.4) confirm the presence of persistent firm heterogeneity and its correlation with $\lambda_{i}$.

Thus, our empirical results---interpreted in the context of our model of competition and innovation---highlight the importance of the direction of innovation. Unique technological positions directly contribute to profits, which reinforces subsequent innovations and long track records. These dynamics reflect the firms' desire to avoid competition, conquer new markets, and exploit their idiosyncratic technological capabilities.

Finally, why are some firms profitable despite showing short or no flares? Our model permits two firm-level characteristics other than technologies: quality $\xi_{i,t}$ and cost $c_{i,t}$. Those who have comparative advantage in marketing or operations (i.e., high $\omega^{\xi}_{i,t}$ or $\omega^{c}_{i,t}$) would keep exploiting the existing markets by investing in $\xi_{i,t}$ or $c_{i,t}$ instead of technologies.

\subsection{Sensitivity Analysis}

This section assesses the sensitivity of our results to (i) the specification of the Mapper procedure, (ii) subsampling based on firms’ survival, (iii) subsampling based on industry classification, and (iv) panel-data regressions.

\begin{table}[tbh]
\caption{Sensitivity of Mapper Graph}
\begin{center}
\fontsize{9pt}{11pt}\selectfont%
\begin{tabular}{lccccccc}
\hline \hline
Specification & \# nodes & \# edges & Avg. & \# conn. & Avg. \# & Avg. flare
& Readability of \\ 
&  &  & degree & comp. & nodes/firm & length & output graph \\ \hline
1. Baseline & $1,214$ & $2,926$ & $4.82$ & $27$ & $8.70$ & $0.77$ & 
Informative and tractable \\ 
2. $f=$ 3D-PCA & $4,041$ & $21,300$ & $10.54$ & $16$ & $22.20$ & $1.01$ & 
Redundant nodes \& edges \\ 
3. $f=$ 2D-MDS & $822$ & $2,279$ & $5.55$ & $13$ & $12.79$ & $0.44$ & Fewer
flares, lattice-like \\ 
4. $J=225$ & $993$ & $2,371$ & $4.78$ & $21$ & $7.16$ & $0.74$ & Simpler but
similar to S1 \\ 
5. $J=625$ & \thinspace $1,482$ & $3,628$ & $4.90$ & $32$ & $10.07$ & $0.91$
& Messier but similar to S1 \\ 
6. Cluster = HC-w & $787$ & $2,106$ & $5.35$ & $16$ & $8.48$ & $0.61$ & 
Shorter flares but similar \\ 
7. Cluster = HC-a & $743$ & $1,930$ & $5.20$ & $15$ & $8.44$ & $0.60$ & 
Shorter flares but similar \\ 
8. Cluster = HC-c & $692$ & $1,912$ & $5.53$ & $13$ & $8.42$ & $0.55$ & 
Shorter flares but similar \\ 
9. Cluster = HC-m & $1,145$ & $2,831$ & $4.95$ & $19$ & $8.57$ & $0.72$ & 
Similar to S1 \\ 
10. Cluster = HC-l & $1,050$ & $2,702$ & $5.15$ & $13$ & $8.48$ & $0.65$ & 
Similar to S1 \\ 
11. $\delta =$ Euclid & $1,192$ & $2,907$ & $4.88$ & $23$ & $8.74$ & $0.83$
& Similar to S1 \\ 
12. $\delta =$ Correlation & $1,226$ & $2,994$ & $4.88$ & $28$ & $8.69$ & $0.77$
& Similar to S1 \\ 
13. $\delta =$ Min-comp. & $1,146$ & $2,808$ & $4.90$ & $21$ & $8.77$ & $0.76
$ & Similar to S1 \\ 
14. $\delta =$ Mahalanobis & $937$ & $2,418$ & $5.16$ & $23$ & $9.70$ & $0.46$ & Shorter flares but similar \\ 
15. $o=0.3$ & $962$ & $1,503$ & $3.12$ & $143$ & $5.68$ & $0.67$ & Many
fragmented nodes \\ 
16. $o=0.7$ & $1,803$ & $18,153$ & $20.14$ & $8$ & $16.76$ & $0.45$ & 
Redundant nodes \& edges \\ \hline \hline
\end{tabular}
\begin{minipage}{475pt}
{\fontsize{9pt}{9pt}\selectfont \smallskip  \textit{Note}: Baseline specification uses 2D-PCA filter, resolution $J=400$ ($n=20$), single-linkage hierarchical clustering with first-gap heuristics, cosine dissimilarity, and overlap $o=0.5$. HC-w, HC-a, and HC-c stand for hierarchical clustering with weighted, average, and complete linkage methods, respectively. HC-m and HC-l stand for hierarchical clustering with mid-gap and last-gap heuristics to determine the number of clusters, respectively. See the main text of sections 4.2 and 5.4, and Appendix E.1 for details.}
\end{minipage}
\end{center}
\label{Table - List of Specifications}
\end{table}

\paragraph{Mapper Specification.} Table \ref{Table - List of Specifications} reports descriptive statistics of the Mapper graphs under 16 different specifications. Our baseline Specification 1 (S1) generates a graph with 1,214 nodes, 2,926 edges, the average degree of 4.82 (edges per node), 27 connected components, 8.70 nodes per firm, and the average flare length of 0.77. Most of the alternative specifications lead to changes that are either small (S6--S14) or in directions that are consistent with Mapper’s mechanism (S2, S4--S5, and S15--S16). S3’s direction of change is less obvious because it is the only one that uses a non-PCA filter (i.e., takes a different ``angle'' at the data). Nevertheless, its descriptive statistics are comparable to others. Given the diverse set of specifications, perhaps the most surprising finding is that their regression results are remarkably similar to the baseline. Appendix E.1 explains S2--S16 in detail and shows the correlations between firms’ performances and flare length (based on the 16 different graphs) are always positive and statistically significant, with comparable magnitudes.

\paragraph{Survivorship.}
Appendix E.2 shows the results are robust to (i) the elimination of firms that exited our sample before 2005 and (ii) conditioning on the balanced panel.

\paragraph{Subsampling by Sector and Industry.}
These findings are not an artifact of aggregation or driven by a few specific sectors and industries. Appendix E.3 plots revenues and flares by economic sector defined by Standard and Poor's (S\&P), a credit-rating agency. Appendix E.3 also studies the technology sector more deeply at the SIC-code level, with a focus on computers and semiconductor industries. The positive correlations are preserved within each sector and industry. 

\paragraph{Panel Data Regressions}

Whereas our analysis in section 5.2 focuses on the relationships between the firms' flares in the whole graph for 1976--2005 and their eventual performances in 2005, Appendix F shows our findings hold more generally---at different points in time, with many years of lags, and in terms of out-of-sample predictions.

\subsection{Comparison with Other Measures}

This section compares flare length with other measures, including more conventional network-centrality measures and the Jaffe measure of technological distance.

\paragraph{Centrality Measures.} Flare length is the focus of our quantitative analysis because (i) long flares are the most salient feature of our Mapper graph and (ii) our model suggests the length of unique technological trajectories may reflect the firms’ profitability and capabilities. Nevertheless, flare length is not the only way to measure locations on a graph. Measures of network centrality offer more conventional alternatives. Appendix G.1 shows five centrality measures (degree, closeness, harmonic, betweenness, and eigenvector centralities) correlate with the firms' financial performances less strongly than our flare-based measures.

\paragraph{Jaffe's Technological Distance.} 

Both our Mapper graph and Jaffe's (1989) measure of technological distance use patent count and almost identical dissimilarity functions. Hence, one might expect Jaffe's measure to produce similar results. When we regress revenue, EBIT, and market value on the Jaffe distance, however, the fit is nearly zero in many cases (columns 1, 4, and 7 of the table in Appendix G.2). It achieves a reasonable fit when patent count is also included (columns 2, 5, and 8), but its coefficient estimate is statistically insignificant and difficult to interpret (i.e., negative) in most cases. Finally, the inclusion of our flares and islands further improves the adjusted $R^2$, but the coefficient on Jaffe's measure remains insignificant and lacks cohesive patterns.


\section{Conclusion}

This paper proposes a new method to map, describe, and characterize firms' inventive activities. The shape graph from the Mapper procedure helps us understand where firms and industries are located, how they connect with each other (or not), and how their innovative activities evolve over time. In the past, economists' ability to answer these basic, descriptive questions---and hence the ability to ask and answer deeper, causal/policy questions \textit{that presuppose reliable descriptions or stylized facts}---have been constrained by the ``curse of dimensionality'' of the technological space. With the new tool, we can start revisiting and answering some of the long-standing questions in economics, including the rate and direction of inventive activity. Because its underlying mathematics is general, we believe this method is potentially useful for describing and characterizing other high-dimensional data in economics as well, such as product characteristics and international trade.


\clearpage
\pagenumbering{arabic} \renewcommand*{\thepage}{A-\arabic{page}} \appendix 
\section*{Appendix A \ Details of the Economic Model}

\subsection*{A.1 \ Demand and Supply}

We omit time subscripts in this section because it is about the micro-foundation of equation \ref{eq - period profit} in the static part of the model.

\paragraph{Demand.}

Each market $m$ is populated by a mass $M_{m}$ of consumers indexed by $h$.
Consumer $h$'s utility from buying and consuming product $i$ (offered by firm $i$) is%
\begin{equation}
u_{h,i}=\theta p_{i}+\xi _{i}+\varepsilon _{h,i},  \label{eq - utility}
\end{equation}%
where $\theta $ is the (dis)taste for paying the price $p_{i}$, $\xi _{i}$
is product quality,\footnote{We do not distinguish between observed and unobserved qualities
because we do not observe any.} and $\varepsilon _{h,i}$ is $h$'s individual
taste for product $i$, which is independently and identically distributed
(i.i.d.) type-1 extreme value. Each consumer chooses up to one product from
the set of available products in $m$, which are supplied by $i=1,2,...,N_{m}$
firms---each of which produces a single product---to maximize utility:%
\begin{equation}
d_{h}=\arg \max_{i}\left \{ u_{h,0},u_{h,1},u_{h,2},...,u_{h,N_{m}}\right \}
,  \label{eq - utility maximization}
\end{equation}%
where $u_{h,0}\equiv \varepsilon _{h,0}$ is the utility from the outside option of not buying
anything. The demand for firm/product $i$ is%
\begin{equation}
q_{i}=M_{m}\times s_{i}=M_{m}\times \frac{\exp \left( \delta _{i}\right) }{%
\sum_{i^{\prime }=0}^{N_{m}}\exp \left( \delta _{i^{\prime }}\right) },
\label{eq - demand}
\end{equation}%
where $\delta _{i}\equiv \theta _{h}p_{i}+\xi _{i}$ is the deterministic
part of utility.

\paragraph{Supply.}

Firm $i$ chooses price $p_{i}$ to maximize profit,%
\begin{equation}
\pi _{i}=\left( p_{i}-c_{i}\right) \times q_{i},  \label{eq - profit}
\end{equation}%
where $c_{i}\in \left( 0,\infty \right) $ is its constant marginal
cost. We assume all of the $N_{m}$ active firms simultaneously choose
prices, and focus on the Nash equilibrium of this Bertrand competition with
differentiated products.

\subsection*{A.2 \ Birth and Death of Firms}

\paragraph{Birth of Firms.}

In every period, $N_{t}^{PE}$ potential entrants are born,
each with the initial endowment of cash, $cash_{i,t}=fin_{i,t}=\phi _{i}$,
as well as the initial levels of productivity for R\&D, marketing, and
operations, $\left( \mathbf{\omega}_{i,t}^{x},\omega _{i,t}^{\xi },\omega
_{i,t}^{c}\right) $, where $\mathbf{\omega}_{i,t}^{x}\equiv \left( \omega
_{i,t,k}^{x}\right) _{k=1}^{K}$. These productivity levels evolve
according to an exogenous first-order Markov process. These initial values
are drawn from some distribution, such as log-normal one ($\log \left( \phi
_{i}\right) \sim N\left( \mu ^{\phi },\sigma ^{\phi }\right) $), whereas the
new firm's technological state $x_{i,t}=\left( 0,0,...,0\right) $ and
quality $\xi _{it}=0$ must start from zero, and its cost $c_{i,t}=\bar{c}\in
\left( 0,\infty \right) $ from the highest (i.e., least competitive) level.
Subsequently, each of them could become an actual entrant by investing in $%
x_{i,t}$ (to enter one of the many markets), as well as $\xi _{i,t}$ and $c_{i,t}$ (to offer a competitive product).

\paragraph{Liquidation of Firms.}

A firm is liquidated (i.e., permanently ceases all activities) if $%
cash_{i,t}<0$, which is possible when it is hit by a sufficiently negative
financial shock. Let $N_{t}^{X}$ denote the number
of such permanent exits.

\section*{Appendix B \ Raw Data: Where Do Firms Patent?}

Let us illustrate with examples what the firms' patent portfolios look like. Figure \ref{Figure - bubble (examples)} visualizes the evolution of patenting activities at six major firms. Each plot lists the 430 USPTO patent classes on the vertical axis, and the year of application (for R\&D patents) or acquisition (for M\&A patents) on the horizontal axis. The circle size represents the number of patents in each class-year.

\begin{figure}[htb!!!!]
\caption{Acquiring a String of Pearls}%

\begin{subfigure}{0.5\textwidth}
\caption{Cisco Systems}%
\centering
\includegraphics[width=0.95\linewidth]{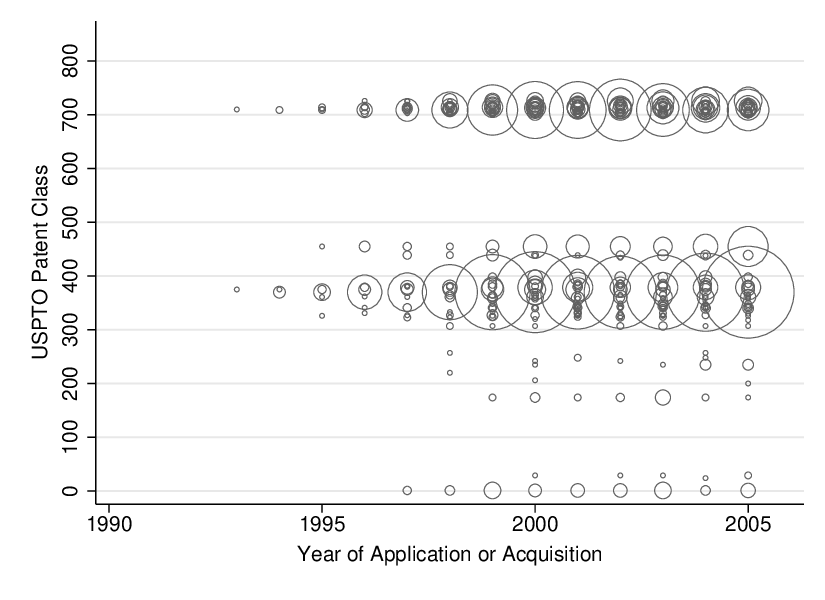}
\end{subfigure}
\begin{subfigure}{0.5\textwidth}
\caption{Seagate Technology}%
\centering
\includegraphics[width=0.95\linewidth]{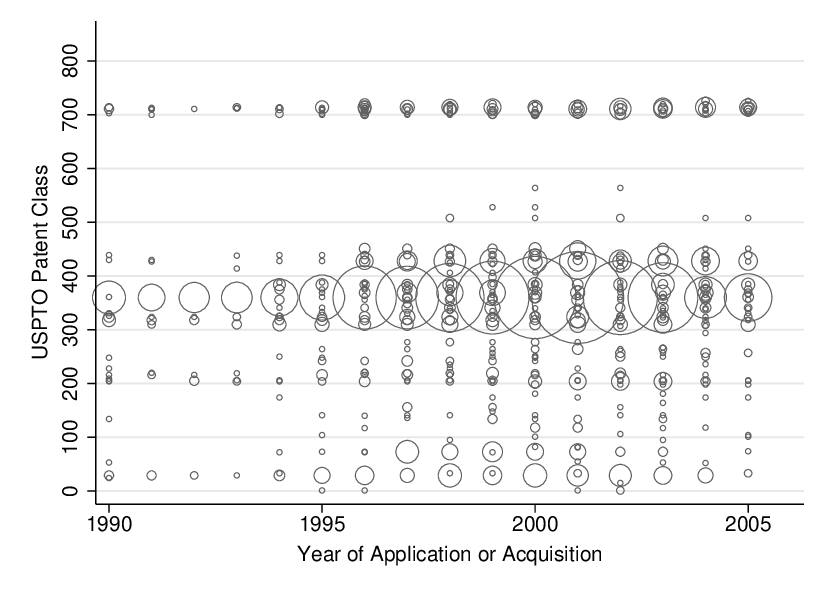}
\end{subfigure}
\begin{subfigure}{0.5\textwidth}

\caption{Pfizer}%
\centering
\includegraphics[width=0.95\linewidth]{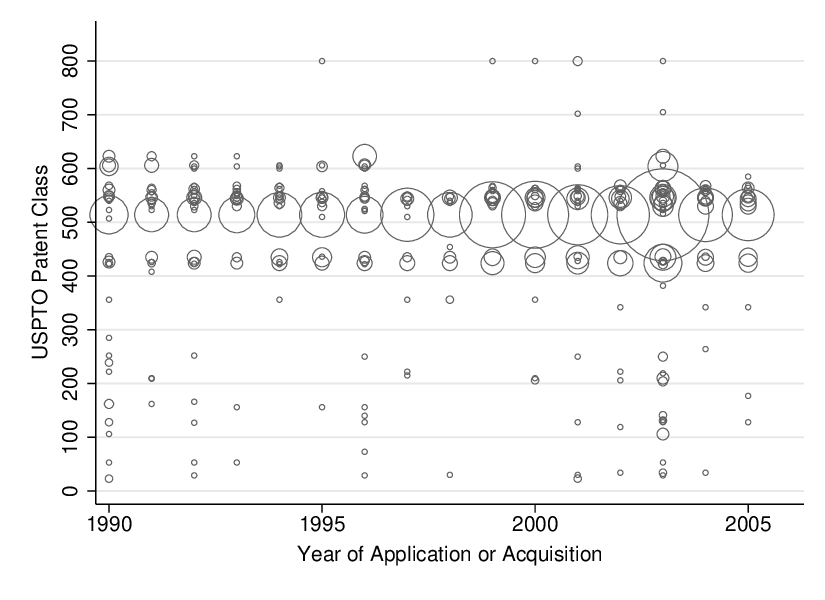}
\end{subfigure}
\begin{subfigure}{0.5\textwidth}
\caption{Medtronic}%
\centering
\includegraphics[width=0.95\linewidth]{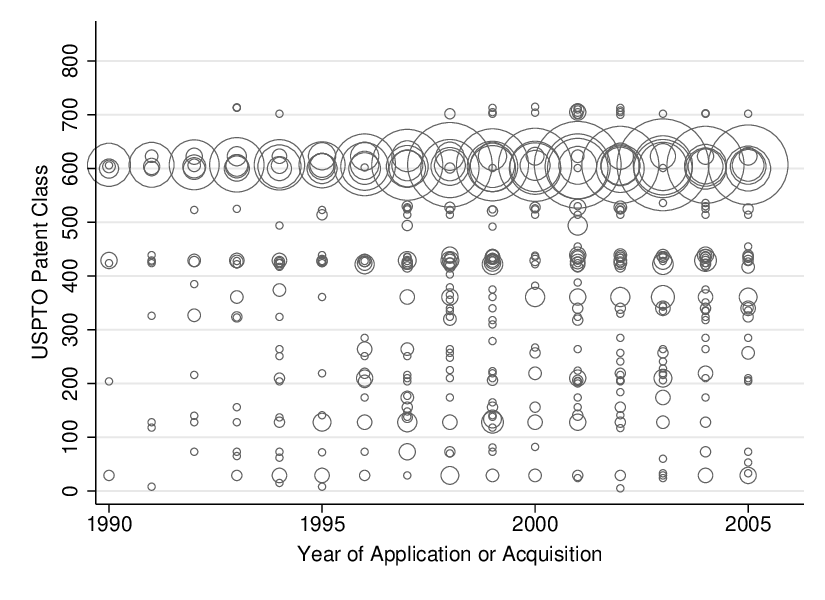}
\end{subfigure}
\begin{subfigure}{0.5\textwidth}

\caption{GE}%
\centering
\includegraphics[width=0.95\linewidth]{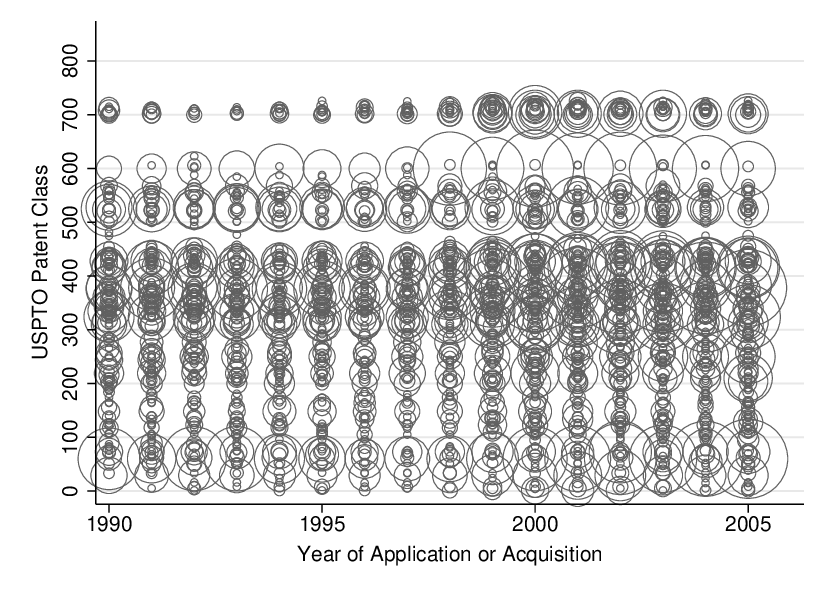}
\end{subfigure}
\begin{subfigure}{0.5\textwidth}
\caption{IBM}%
\centering
\includegraphics[width=0.95\linewidth]{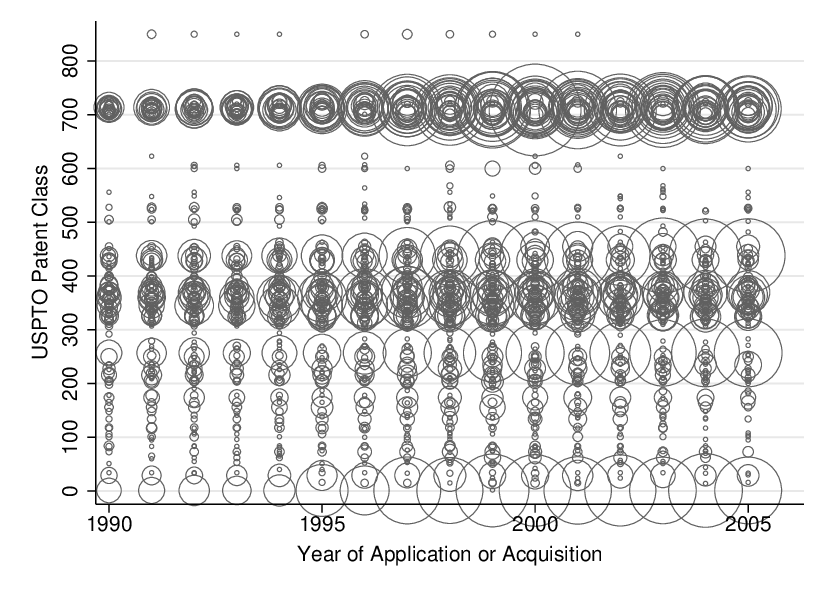}
\end{subfigure}
\caption*{\footnotesize {%
\textit{Note}: The circle size represents the number of patents in each class-year. Based on our method and analysis in sections 3 and 4, the ``flare lengths'' (our measure of the lengths of unique technological trajectories) of these firms' portfolios are: 3 (Cisco), 2 (Seagate), 1 (Pfizer), 2 (Medtronic), 4 (GE), and $\infty$ (IBM).}}%
\label{Figure - bubble (examples)}
\end{figure}%

The top panels show two IT firms. Cisco Systems makes network equipment (e.g., routers) and is famous for its active use of M\&As to acquire new products and talents; it acquired the largest number of target firms with patents in our sample. Nevertheless, most of Cisco's patents are obtained by in-house R\&D and are concentrated in classes 370 (multiplex communications) and 709 (electrical computers and digital processing systems: multicomputer data transferring). Seagate Technology makes hard disk drives (HDDs) and is another example of specialized IT firms. Its main patent class is 360 (dynamic magnetic information storage or retrieval), which is central to the HDD technology, but its portfolio gradually diversified as the firm intensified efforts to manufacture key components as well, including heads, media, and their interface.\footnote{See \cite{IgamiSubrahmanyam2019} for the details of patents and innovation in the HDD industry.}

The middle panels show two health care firms. The pharmaceutical industry is R\&D-intensive, but the patent portfolio of Pfizer looks simpler than the IT examples. Most of the drug patents are in classes 424 and 514 (drug, bio-affecting, and body treating compositions), and drug makers hardly patent elsewhere. By contrast, medical devices rely on a variety of technologies, even though their main classes are relatively few (600--607). The plot shows Medtronic, a leading medical-device maker, is active in many areas.

The bottom panels present extreme cases, for a reference. GE, a conglomerate, has one of the most diversified portfolios in our sample, with patents in more than 300 classes. The picture becomes too messy for human eyes to draw insights. Finally, IBM has by far the largest number of patents in our sample, but its portfolio looks more organized than GE's, because its activities are more focused. Most of the computers and electronics technologies are in the 300s and the early 700s, which are where IBM's portfolio is concentrated. 

These examples suggest the portfolio aspect of patents and technologies is interesting and contains potentially important information. However, the high dimensionality of technological space makes conventional data analysis difficult.

\paragraph{Comparison of HP, Dell, and Qualcomm}

Figure \ref{Figure - bubble (HP, Dell, Qualcomm)} shows the raw-data patterns for the three IT firms that we discuss in the first half of section 5.2.

HP and Dell are among the largest computer makers, and their main patent classes are similar, but their approaches to R\&D are different. HP is a traditional computer maker, whereas Dell's success is usually attributed to its unique business model in which the company sells directly to consumers and most of the manufacturing is outsourced to third-party suppliers in Asia. Such ``business-model innovations'' do not represent patentable inventions in most cases. Hence, patent statistics (and their topological representations) do not reflect Dell's ``uniqueness'' in this sense.

Qualcomm, a manufacturer of telecommunication chips, exemplifies this point with a unique portfolio (length 3) despite having relatively few patents and seemingly simple distribution across classes.

\begin{figure}[htb!!!!]
    \caption{Raw Data on Selected Technology Firms}%
    \begin{subfigure}{0.3\textwidth}
        \caption{Hewlett Packard}%
        \centering
        \includegraphics[width=\textwidth]{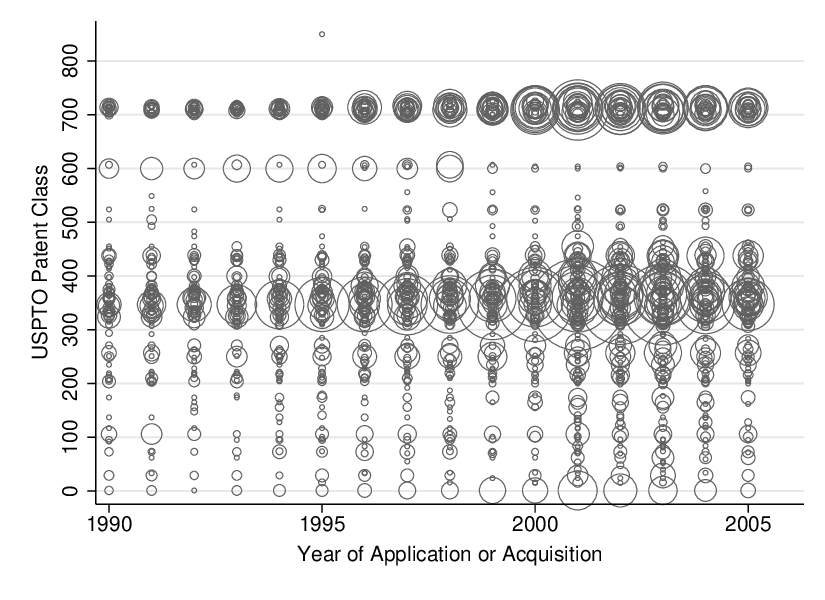}
    \end{subfigure}
    \hfill
    \begin{subfigure}{0.3\textwidth}
        \caption{Dell}%
        \centering
        \includegraphics[width=\textwidth]{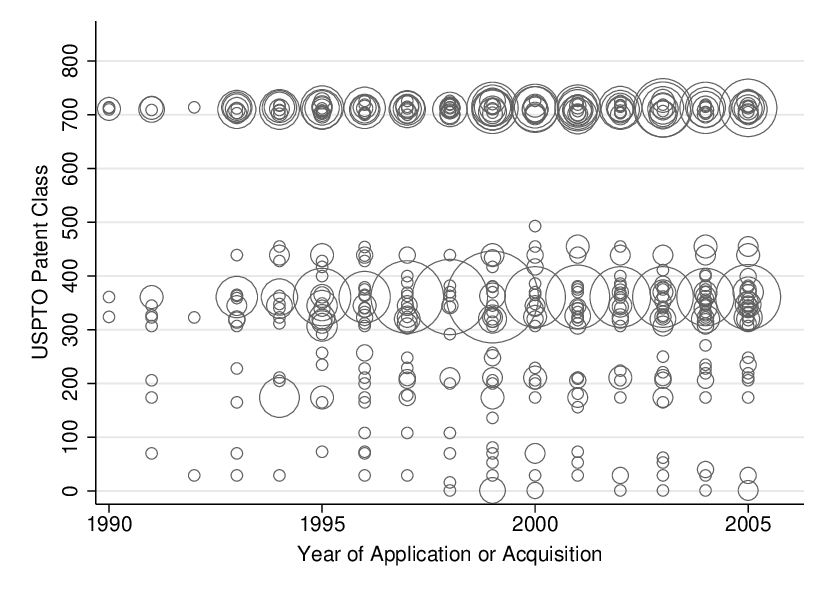}
    \end{subfigure}
    \hfill
    \begin{subfigure}{0.3\textwidth}
        \caption{Qualcomm}%
        \centering
        \includegraphics[width=\textwidth]{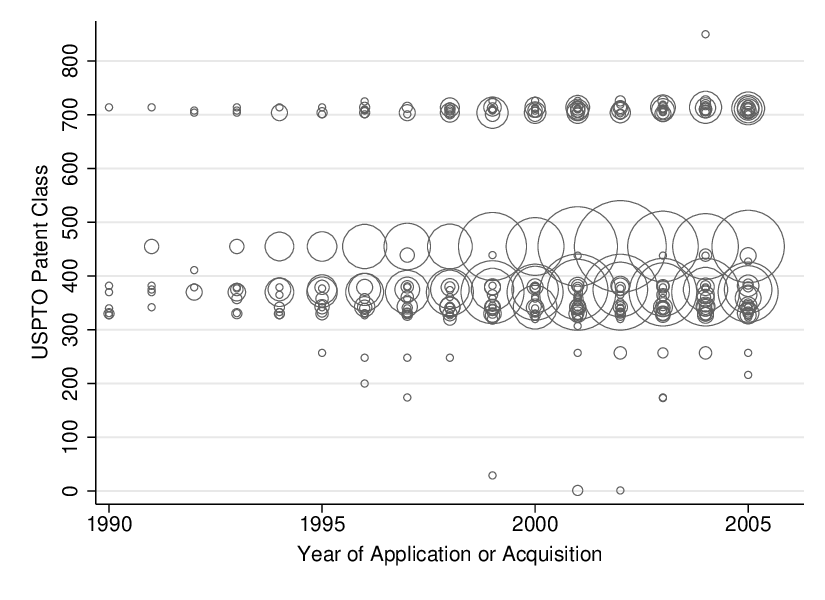}
    \end{subfigure}
    \caption*{\footnotesize {%
    \textit{Note}: The circle size represents the number of patents in each class-year. The flare lengths of these firms' portfolios are: 6 (HP), 1 (Dell), and 3 (Qualcomm).}}%
    \label{Figure - bubble (HP, Dell, Qualcomm)}
\end{figure}%

\section*{Appendix C \ Topological Data Analysis}

This section explains the idea of TDA, the Mapper algorithm, our specifications, and our original method for detecting and measuring flares.

\subsection*{C.1 \ Introduction to TDA}

Most data-analysis techniques in economics and elsewhere concern the evaluation of parameters or other quantities that characterize the system (the data-generating process, or DGP).\footnote{This and the next paragraphs borrow expositions from \cite{EpsteinCarlssonEdelsbrunner2011} and \cite{SizemoreEtAl2018}.} However, not all aspects of a system are readily summarized by numerical quantities. In particular, the “shape” of the data (i.e., the properties that remain invariant under “stretching” and “shrinking,” e.g., loops and branching patterns) could constitute a significant insight about real phenomena.

Shape is a somewhat nebulous concept and may appear too intuitive to define precisely and describe quantitatively, but the unique strength of TDA is its ability to capture and summarize such information in a useful, small representation of the data. Even though it is not among the usual tools for empirical economists, topology as an area of pure mathematics has existed for more than a century, and provides a theoretical foundation for the analysis of shapes. The adaptation of topological techniques to real data has been undertaken only recently (\cite{edelsbrunner2000topological}, \cite{zomorodian2005computing}, \cite{carlsson2009topology}, \cite{edelsbrunner2010computational}). Nevertheless, TDA has already been successfully applied to an increasing number of fields, including biology, chemistry, and materials science (e.g., \cite{nicolau2011topology}, \cite{hiraoka2016hierarchical}). See \cite{chazal2017introduction} for a brief introduction.

Among the techniques in TDA, the study of persistent homology has emerged as the most popular.\footnote{\cite{EpsteinCarlssonEdelsbrunner2011} explain the popularity of homology groups by pointing out that they offer an attractive combination of strong explanatory power, a clear intuitive meaning, and a low computational cost. Because the notion of shape within (finite) datasets is inevitably stochastic, and because homology is sensitive to noise in the data, \textit{persistent} homology is used to quantify the stability of geometric features with respect to perturbations, so that real phenomena could be distinguished from artifacts of noise.} However, its application to high-dimensional data is constrained by the computational cost of constructing combinatorial models (e.g., \v{C}ech complex, Alpha complex, Rips complex, etc.), which requires one to check higher-order intersections of the balls in that space and to store all the information. Various methods have been proposed to address this “curse of dimensionality,” but persistent homology can handle only tens of dimensions in the current state of the art. By contrast, Mapper can easily handle thousands and even millions of dimensions, by focusing on the global topology of the data and providing simplified representations of their shape via nonlinear transformations.\footnote{For example, \cite{rizvi2017single} use Mapper to study single-cell gene expression, where the number of dimensions equals the number of expressed genes (up to 10,000).} Thus, whereas persistent homology offers a fine-grained characterization of cavities in relatively low-dimensional data, Mapper enables a relatively coarse characterization of \textit{very} high-dimensional data, which makes it particularly suitable for our empirical context.

Since \cite{singh2007topological} introduced Mapper, it has been applied to study an RNA folding pathway \parencite{yao2009topological}, the DNA microarray data of breast cancer \parencite{nicolau2011topology}, cellular differentiation and development \parencite{rizvi2017single}, and the organization of whole-brain activity maps \parencite{saggar2018towards}. Methodologically, \cite{lum2013extracting} is the most closely related work to ours, because they also propose a flare-detection algorithm. Their method uses global graph-theoretic properties that are applicable to any graph, without using any additional information from the Mapper algorithm.\footnote{Specifically, their flare detection algorithm uses the $0$-dimensional persistent homology \parencite{edelsbrunner2000topological} of the graph filtered by an eccentricity measure on its nodes. An eccentricity measure tends to give a higher value to nodes that are ``eccentric'' (on tips of flares) compared with central nodes (on the trunks).}
By contrast, our algorithm takes advantage of particularities of our Mapper graph, where each node is a set of firm-years. We ensure each flare that we identify is associated with a specific firm. Hence, it can be interpreted as \textit{a flare of that firm}.

\subsection*{C.2 \ Illustrated Example of the Mapper Procedure}
\label{sec:mapper_algorithm}

Figure \ref{fig:mapper_proc} illustrates the Mapper procedure with a simple example. Let us start with data $L$ given by the points in two-dimensional space. Our goal is to obtain a simplified representation of $L$ while preserving its topological features, such as holes and branches. In step 1, we project $L$ onto the horizontal axis (i.e., $d=1$). This operation reduces the dimensionality of the data by eliminating the second dimension (i.e., information on the vertical axis in this case). In step 2, we cover these points on the horizontal axis by four equal-sized intervals (i.e., cover elements) $C_1, C_2, C_3$, and $C_4$ (i.e., $J=4$) with overlaps.\footnote{The degree of overlap is approximately 20\% in the pictured example.} In step 3, we look at each interval $C_j$, and cluster adjacent points \textit{in the original data space} with two dimensions. In step 4, we represent these clusters by nodes, and connect them with edges whenever adjacent clusters share the same points within their overlapping regions.

\begin{figure}[htb!!!!]\centering%
\caption{Illustration of the Mapper Procedure}
\includegraphics[width=0.95\textwidth]{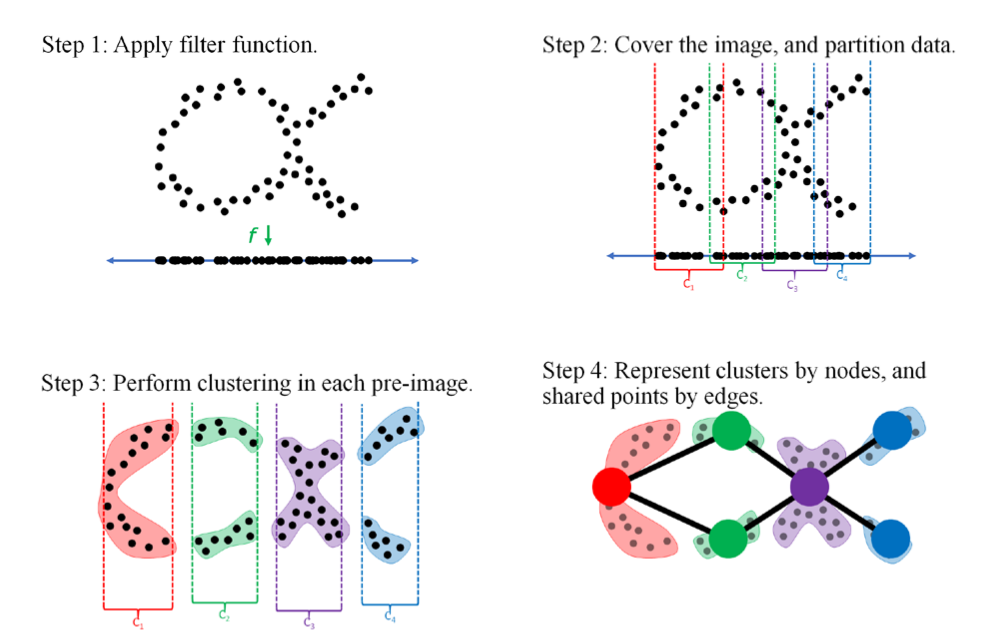}
\label{fig:mapper_proc}
\end{figure}%

The resulting graph is much simpler than the original data and amenable to graph-theoretic analyses, but it still preserves the ``global structure'' of $L$ (i.e., topological features that span multiple local regions, such as loops and long branches/flares). By contrast, using conventional techniques for dimensionality reduction alone would be similar to performing only step 1. Likewise, directly performing clustering in the original data would be the same as skipping steps 1 and 2, which would probably generate a single big cluster for the entire data in this case. Neither approach would be able to recover the \textit{shape} of the data (i.e., a collection of global structures). For this particular example, the usefulness of the Mapper graph is limited, as the original data itself is only two-dimensional and can be readily visualized. However, for more complicated high-dimensional data, a simplified graph representation offers a helpful visual aid. 

One way to interpret the Mapper procedure is to view it as a kind of local clustering together with ``global reconstruction'' (i.e., replication of global structures). The choice of the filter function and cover determines the local regions $f^{-1}(C_j) \subset L$ of the data. Then, the clustering algorithm is applied only locally, to each local region. The construction of the graph $G$ recovers some of the global information by connecting nodes (each of which is a cluster of points in $L$) whenever they share points in the original data.

\subsection*{C.3 \ Definitions, Proofs, and Computation}

This section formally presents the definitions, proofs, and computational methods for section 5.1.

\paragraph{Preparation.}

Let us review some basic concepts from graph theory. In general, a \emph{graph} $G=(V,E)$ is a set $V$ of nodes (vertices) and a set $E$ of edges. We assume that each edge $e \in E$ of $G$ is assigned the weight $w(e) = 1$.\footnote{The theory can be extended to handle positive weights $w(e) > 0$ that are different across edges.} For $u, v \in G$, the \emph{length} $\ell(p)$ of a path $p$ from $u$ to $v$ is the sum of the weights of the edges of $p$. The \emph{distance} $d_G(u,v)$ between $u$ and $v$ is the minimum length of all paths $p$ in $G$ from $u$ to $v$. For simplicity, we write $d(u,v)$ for $d_G(u,v)$.

For a graph $G$ and a subset $V'$ of the nodes of $G$, the \emph{full subgraph} of $G$ with nodes $V'$, denoted by $G[V']$, is the graph with the set of nodes $V'$ and edges consisting of all edges of $G$ whose endpoints are both in $V'$. It is the maximal subgraph of $G$ with set of nodes $V'$.

\begin{definition}[Ball]
  Let $r \in \mathbb{R}$ and $u \in G$. The (closed) ball $B_r(u)$ in $G$ is
  \[
    B_r(u) = G[\{v \in G \suchthat d(u,v) \leq r\}].
  \]
  In words, it is the full subgraph of $G$ of all nodes at most distance $r$ from $u$. 
\end{definition}

Now, consider a Mapper graph $G=(V,E)$ of our data. From the construction of the Mapper graph, each node $v \in V$ will consist of points (firm-years) of the form $l_{i,t}$. To simplify, we adopt the following notation, because we want to consider firms and not firm-years for the analysis.
\begin{notation}
  In the setting above, firm $i$ is said to be in node $v$, or, equivalently, $v$ contains firm $i$ if node $v$ contains an observation of firm $i$ at some time $t$, that is, $l_{i,t} \in v$ for some $t$. In this situation, we write $i \in v$.
\end{notation}

For each firm $i$, we want to determine whether $i$ appears as a flare in $G$. One way to extract flares is to use global graph-theoretic properties of $G$, as in the method proposed in \cite{lum2013extracting} using $0$-persistence of eccentricity (or centrality). Instead, we start with the requirement that we only consider a structure to be a ``flare of $i$'' if each node in the flare contains $i$. This way, we focus on a smaller graph $G_i$ defined below, which contains only nodes that involve $i$, and look for flares therein.\footnote{More generally, one may consider a flare that involves multiple firms. We restrict our attention to single-firm flares in this paper because they are the most salient feature of our Mapper graphs.} We see later that this perspective simplifies computations.
\begin{definition}[Induced subgraph $G_i$ of firm $i$]
  Let $i$ be a firm.
  Define $G_i$ to be
  \[
    G_i = G[\{v \in G \suchthat i \in v \}].
  \]  
\end{definition}
That is, $G_i$ is the full subgraph of $G$ formed by nodes that contain firm $i$.  
We decompose the nodes of $G_i$ into ``interior'' and ``boundary.''
\begin{definition}[Interior and boundary of  $G_i$]
  \leavevmode
  \begin{enumerate}
  \item The \emph{interior} $F_i$ of $i$ in $G$ is defined to be
    $
    F_i = G[\{v \in G_i \suchthat B_1(v) \subseteq G_i\}].
    $
  \item The \emph{boundary} of $i$ in $G$ is $G_i \setminus F_i$.
  \end{enumerate}  
\end{definition}
In words, the interior $F_i$ contains all nodes $v$ of $G_i$ such that $G_i$ contains all neighbors of $v$ (i.e., the ball of radius $1$ around $v$).
Lemma~\ref{lem:pathexit} shows that the boundary $G_i\setminus F_i$ indeed serves as a ``boundary'' for $F_i$: to get outside of $G_i$, one always needs to go through the boundary.

Figure \ref{fig:interior_boundary} illustrates the definitions of interior and boundary. The pink region represents firm $i$'s subgraph $G_i$, the green nodes are in the interior $F_i$, and the purple nodes are in the boundary $G_i\setminus F_i$.

\begin{figure}[htb!!!!]\centering%
 \caption{Interior and Boundary}
 \includegraphics[width=0.4\textwidth]{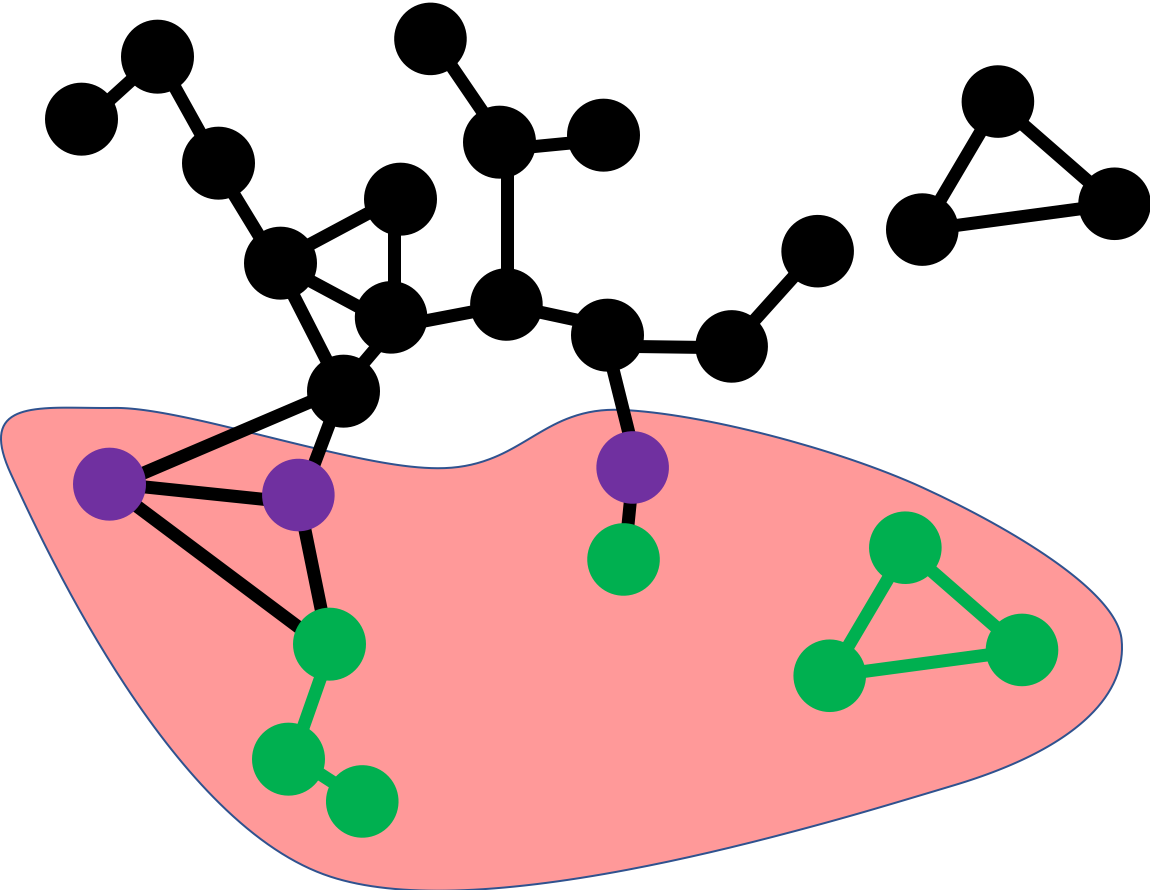}
 \label{fig:interior_boundary}
\end{figure}%


Next, let us define flares and islands in graph-theoretic terms.
\begin{definition}[Flares and Islands]
  A connected component $R$ of the interior $F_i$ of firm $i$ is said to be an \emph{island of firm $i$} if
  $R$ is also a connected component of $G$, and said to be a \emph{flare of firm $i$}, otherwise.
\end{definition}
For example, two flares and one island (the triangle on the right) exist in Figure \ref{fig:interior_boundary}. In the following, we refine these notions using numerical indices. As defined above, a flare may not always ``look like'' what one may imagine to be a flare.

\paragraph{Measuring Flares.}

We introduce the following definition and proposition, which serve as the foundations for defining our concept of flare length.
\begin{definition}[Exit distance]
  \label{defn:exit}
  Let $u \in F_i$ be a node in the interior of firm $i$. The exit distance of $u$ in $F_i$ is
  \[
    e_i(u) = \min\{d(u,w) \suchthat {w\in G\setminus F_i}\}.
  \]
  In the case in which no path exists from $u$ to any $w \in G\setminus F_i$, we put $e_i(u) = \infty$.
\end{definition}

\begin{restatable}{proposition}{thmexit}
  \label{thm:exit}
  Let $u \in F_i$. Then,
  \[
    e_i(u) = \min\{d_{G_i}(u,v) \suchthat {v \in G_i\setminus F_i}\},
  \]
   where $d_{G_i}(u,v)$ is the distance between $u$ and $v$ in $G_i$.
\end{restatable}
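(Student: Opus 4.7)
The plan is to prove equality by showing the two inequalities. Throughout, I will use the key fact (Lemma~\ref{lem:pathexit}) that any path in $G$ leaving $G_i$ must cross the boundary $G_i \setminus F_i$, together with the definition of $F_i$: every $v \in F_i$ has $B_1(v) \subseteq G_i$, i.e., all of $v$'s neighbors in $G$ lie inside $G_i$.

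For the easy direction ($\leq$), I would observe that $G_i \setminus F_i \subseteq G \setminus F_i$, so the minimum defining $e_i(u)$ is taken over a set containing $G_i \setminus F_i$. Moreover, since $G_i$ is a subgraph of $G$, any path in $G_i$ is also a path in $G$, giving $d(u,v) \leq d_{G_i}(u,v)$ for every $v \in G_i$. Combining these two observations yields
\[
    e_i(u) = \min_{w \in G\setminus F_i} d(u,w) \leq \min_{v \in G_i\setminus F_i} d(u,v) \leq \min_{v \in G_i\setminus F_i} d_{G_i}(u,v).
\]

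The substantive direction ($\geq$) is the main obstacle and requires producing, from an arbitrary shortest $G$-path witnessing $e_i(u)$, a comparable $G_i$-path ending in the boundary. The plan is as follows: take any $w \in G \setminus F_i$ with $d(u,w) = e_i(u)$ (assuming $e_i(u) < \infty$; otherwise the argument trivializes, because $G_i \setminus F_i \subseteq G \setminus F_i$ means no $G_i$-path exits $F_i$ either), and let $u = u_0, u_1, \ldots, u_k = w$ be a shortest $G$-path of length $k = e_i(u)$. Since $u_0 \in F_i$ while $u_k \notin F_i$, there is a smallest index $j \geq 1$ with $u_j \notin F_i$. Because $u_{j-1} \in F_i$ and $u_j \in B_1(u_{j-1})$, the definition of interior forces $u_j \in G_i$, hence $u_j \in G_i \setminus F_i$. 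All earlier nodes $u_0,\ldots,u_{j-1}$ lie in $F_i \subseteq G_i$, so every edge of the prefix $u_0,\ldots,u_j$ has both endpoints in $G_i$; since $G_i$ is a full subgraph, these edges belong to $G_i$ as well. Thus this prefix is a path in $G_i$ from $u$ to $v := u_j \in G_i \setminus F_i$ of length $j \leq k$, giving $d_{G_i}(u,v) \leq j \leq e_i(u)$. Taking the minimum over $v$ completes the proof.

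The key insight to emphasize is the interaction between the interior condition $B_1(v) \subseteq G_i$ and shortest paths: the interior definition ensures that a $G$-path cannot ``jump'' from $F_i$ directly into $G \setminus G_i$, so the first boundary-crossing node automatically lands in $G_i \setminus F_i$, which is exactly the set that defines the right-hand side.
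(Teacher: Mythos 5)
Your proposal is correct and follows essentially the same route as the paper's proof: both establish the easy inequality $\leq$ from $G_i\setminus F_i \subseteq G\setminus F_i$ and the fact that $G_i$-paths are $G$-paths, and both obtain $\geq$ by truncating a shortest $G$-path at its first node outside $F_i$, using the interior condition $B_1(v)\subseteq G_i$ to guarantee that this node lands in $G_i\setminus F_i$ and that the prefix stays in the full subgraph $G_i$. The only difference is presentational: the paper packages the boundary-crossing fact in Lemma~\ref{lem:pathexit} and finishes by contradiction with a case split on whether the first boundary node equals $w$, whereas you argue the second inequality directly, which is if anything slightly cleaner.
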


To prove Proposition~\ref{thm:exit}, we first show the boundary $G_i\setminus F_i$ indeed serves as a ``boundary'' for $F_i$: to get outside of $G_i$, one always needs to go through the boundary.
\begin{lemma}
  \label{lem:pathexit}
  Let $u \in F_i$ and $w \in G\setminus G_i$, and let $p$ be a path from $u$ to $w$. Then, the path $p$ passes through some node $v \in G_i \setminus F_i$.
\end{lemma}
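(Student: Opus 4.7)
} The plan is to walk along the path $p$ from $u$ to $w$ and identify the last node that still lies in $G_i$; this node, I will argue, must lie in the boundary $G_i \setminus F_i$.

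More concretely, write the path as a sequence of nodes $p = (u = v_0, v_1, \ldots, v_n = w)$, where consecutive nodes are connected by an edge of $G$. Since $v_0 = u \in F_i \subseteq G_i$ and $v_n = w \notin G_i$, the set of indices $\{j : v_j \notin G_i\}$ is nonempty, so it has a smallest element $j^*$. Then $j^* \geq 1$, and by minimality $v_{j^*-1} \in G_i$ while $v_{j^*} \notin G_i$. Set $v = v_{j^*-1}$; my task reduces to showing $v \notin F_i$.

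This is immediate from the definition of the interior: $F_i$ consists of those nodes $v' \in G_i$ such that $B_1(v') \subseteq G_i$. But $v_{j^*}$ is a neighbor of $v$ (they are adjacent along $p$) and $v_{j^*} \notin G_i$, so $B_1(v) \not\subseteq G_i$. Hence $v \in G_i \setminus F_i$, and $v$ lies on the path $p$ by construction.

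There is no real obstacle here; the argument is essentially a discrete intermediate value observation. The only thing worth being careful about is handling the case $n = 0$, but that is excluded since $u \in G_i$ while $w \notin G_i$ forces $u \neq w$ and therefore $n \geq 1$, ensuring the predecessor $v_{j^*-1}$ exists.
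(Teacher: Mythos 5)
Your proof is correct and rests on the same core idea as the paper's: walking along $p$ and locating the point where it first leaves $G_i$, whose predecessor must then fail the condition $B_1(v)\subseteq G_i$ defining the interior. The only difference is presentational --- you argue directly by taking the minimal index $j^*$ with $v_{j^*}\notin G_i$, whereas the paper assumes no node of $p$ lies in the boundary and derives a contradiction by inducting to show every $v_j\in F_i$; both are the same discrete intermediate-value observation.
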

\begin{proof}
  Let $p$ be such a path from $u\in F_i$ to $w\in G\setminus G_i$, which passes through the nodes
  \[
    u=v_0, v_1, v_2,\hdots v_{n-1}, v_n = w
  \]
  in that order.

  Suppose, to the contrary, that all $v_j$ are not in the boundary $G_i\setminus F_i$. We show by induction that $v_j \in F_i$ for all $j \in \{0,\hdots,n\}$. First, $v_0=u \in F_i$ is clear. Suppose $v_j \in F_i$. Because $v_{j+1} \in B_1(v_j) \subseteq G_i$ by definition of the interior $F_i$, and because $v_{j+1} \notin G_i\setminus F_i$ by assumption, we see $v_{j+1} \in F_i$. Thus, by induction, $v_j \in F_i$ for all $j \in \{0,\hdots,n\}$. In particular, $v_n = w \in F_i$, which is a contradiction, because $w \in G\setminus G_i \subseteq G\setminus F_i$.
  
  Therefore, some $v_j$ exists in the boundary $G_i \setminus F_i$.
\end{proof}

Now we prove Proposition~\ref{thm:exit}.

\begin{proof}
  It is clear that
  \[
    \min\{d(u,w) \suchthat {w\in G\setminus F_i}\} \leq \min\{d_{G_i}(u,v) \suchthat {v\in G_i\setminus F_i}\}.
  \]
  Suppose the minimum of the left-hand side is achieved by a $w \in G\setminus F_i$, and let $d(u,w) = \ell(p)$, the length of a minimum path $p$ in $G$ from $u \in F_i$ to $w \in G\setminus F_i$. Let $v$ be the first node $v \in G_i \setminus F_i$ that $p$ passes through. Note such $v$ exists by Lemma~\ref{lem:pathexit}.

  In the case in which $v \neq w$, truncate $p$ to the path $p'$ from $u$ to $v$. By choice of $v$, $p'$ is fully contained in $G_i$, and $\ell(p') < \ell(p)$ because we only have positive weights and $p'$ has strictly fewer edges than $p$. 
  It follows that
  \[
    \min\{d(u,w) \suchthat {w\in G\setminus F_i}\} = \ell(p) > \ell(p') \geq \min\{d_{G_i}(u,v) \suchthat {v\in G_i\setminus F_i}\},
  \]
  because $p'$ is a path from $u$ to $v$ that is contained in $G_i$. This is a contradiction.

  Thus, $v=w$, and it follows that
  \[
    \min\{d(u,w) \suchthat {w\in G\setminus F_i}\} = \ell(p) \geq \min\{d_{G_i}(u,v) \suchthat {v\in G_i\setminus F_i}\},
  \]
  which shows the required equality.
\end{proof}

Using Proposition~\ref{thm:exit}, we can compute $e_i(u)$ using only the information of $G_i$, because the distance $d_{G_i}(u,v)$ is the minimum length of all \ul{paths in $G_i$} from $u$ to $v$. By contrast, directly using Definition~\ref{defn:exit} would necessitate the computation of $d(u,w)$, the  minimum length of all \ul{paths in $G$} from $u$ to $w$.

We use the exit distance $e_i(u)$ to refine our notion of flares.
\begin{definition}[Flare index]
  For a connected component $R$ of $F_i$ (a flare or island of firm $i$), the \emph{flare index} of $R$ is defined to be 
  \[
    \lambda_i(R) = \max_{u\in R} e_i(u).
  \]  
\end{definition}

We immediately obtain the following characterization of islands using $\lambda_i$.
\begin{lemma}
  Let $R$ be a connected component of $F_i$.
  Then, $\lambda_i(R) = \infty$ if and only if $R$ is an island of firm $i$.
\end{lemma}
\begin{proof}
  Immediate from the definitions.
\end{proof}

Finally, to aggregate all the information, we define flare signature.
\begin{definition}[Flare signature]
  Let $F_i = R_1 \sqcup R_2 \sqcup \hdots \sqcup R_S$ be a decomposition of $F_i$ into its connected components. The \emph{flare signature} of $i$ is the multiset
  \[
    \vec{\lambda}_i = \{\{\lambda_i(R_s) \suchthat s =1,\hdots,S\}\}.
  \]
  Note that if $F_i$ is empty, we simply put the empty multiset as the flare signature of $i$.  
\end{definition}

We link the flare signature to the following ``types.''
\begin{enumerate}
\item \textbf{$\vec{\lambda}_i$ is empty.} This case occurs if and only if $F_i = \emptyset$, meaning every node containing firm $i$ neighbors at least one node not containing $i$.  We call this case \textbf{Type 0: no flare or island}.
\item \textbf{$\vec{\lambda}_i$ contains only finite elements}. In this case, each connected component $R$ of $F_i$ is connected to some point $w \in G \setminus F_i$, meaning each $R$ itself cannot be a connected component of $G$. Thus, each $R$ is not an island; it is a flare. We call this case \textbf{Type 1: flares only}.
\item \textbf{$\vec{\lambda}_i$ contains finite elements, and some copies of $\infty$}. This case corresponds to \textbf{Type 2: flares and islands}.
\item \textbf{$\vec{\lambda}_i$ contains only copies of $\infty$}. This case corresponds to \textbf{Type 3: islands only}.
\end{enumerate}

The flare signature is defined as a multiset of flare indices. Sometimes, having one number describing how much firm $i$ looks like a flare in the Mapper graph may be convenient. Thus, we define the following.

\begin{definition}[{Flare length}]
  The \emph{flare length} (or just \emph{length}, for short) of firm $i$ is
  \[
    \lambda_i = \left\{
      \begin{array}{ll}
        0 & \text{if } \vec{\lambda}_i \text{ is empty,}\\
        \mathop{\mathrm{finmax}}(\vec{\lambda}_i) & \text{if } \vec{\lambda}_i \text{ has at least one finite element,}\\
        \infty & \text{otherwise,}
      \end{array}
    \right.        
  \]
  where $\mathop{\mathrm{finmax}}(\vec{\lambda}_i)$ is the maximum among all finite elements of $\vec{\lambda}_i$.
\end{definition}
Type 0 gets {flare length} $0$, type 3 is sent to index $\infty$, and types 1 and 2 occupy the range in between, where the {flare length} of a firm is determined by the ``longest'' flare of firm $i$. 

\paragraph{Computation of Flare Signatures.}
Let $G=(V,E)$ be the Mapper graph of our data $L$.
For each firm $i$, the computation of the subgraph $G_i$ involving $i$ can be done by iterating through all nodes $v \in V$ and checking membership of firm $i$ in $v$. The interior-boundary decomposition of $G_i$ can be computed by considering the boundary first. For each $v \in G_i$, we simply check if $v$ has a neighbor that is not in $G_i$; if so, $v$ is part of the boundary $G_i\setminus F_i$. The nodes of $G_i$ not in the boundary are then automatically part of the interior.

Next, let us consider the computation of the flare signature $\vec{\lambda}_i$ of firm $i$.
First, we need a decomposition of $F_i$ into its connected components:
\[
  F_i = R_1 \sqcup R_2 \sqcup \hdots \sqcup R_S,
\]
which can be done, for example, via a breadth-first search.
For each connected component $R_{s}$ of $F_i$, its flare index is given by
\[
  \lambda_i(R_{s}) = \max_{u\in R_{s}} e_i(u).
\]
Because we need to do the same for each connected component $R_{s}$ of $F_i$, we compute $e_i(u)$ for all $u \in F_i$. By Proposition~\ref{thm:exit}, the exit distance is
\[
  e_i(u) = \min\{d_{G_i}(u,v) \suchthat {v \in G_i\setminus F_i}\},
\]
which can be computed using a multi-source version of Dijkstra's shortest-path algorithm, with sources $G_i\setminus F_i$.

\subsection*{C.4 \ How the Time Dimension Interacts with Flares}

One might wonder how our definition of flare length---which does not explicitly incorporate the time dimension---may (or may not) capture product market competition---which occurs (only) among firms that are located in the same market \textit{in the same period}.

First, note both current and \textit{past} locations of rivals affect the focal firm's profit in our model. The dynamics of the potential demand in equation \ref{eq - market size, transition} imply the ``existing'' markets (i.e., those in which some firms have previously operated) are less profitable than new markets. Because of this intertemporal linkage, entering the market in the trails of other firms is not a particularly attractive strategy even if it is not currently populated by rivals. Therefore, measuring flares based on the entire graph $G$ (i.e., without distinguishing time periods) makes sense from the perspective of ``followers.''

Second, from the perspective of the pioneering firm, its profit at the time of entry into a new market is not affected by whether some other firms follow its footsteps in later years. Hence, in principle, our measurement of flares should ignore the fact that its trajectory intersects with those of the latecomers. In practice, this issue does not seem to affect our measures because most of our firm-year observations are either surrounded by many contemporaneous rivals or breaking into new regions in unique trajectories, as the case studies in section 4.3 show. Based on these empirical patterns, our ``timeless'' measurement of flares provides a reasonable approximation to the firms' unique technological trajectories.


\section*{Appendix D \ Comparison with Jaffe-style Clustering}

This section supplements the comparison of Mapper with Jaffe (1989). We explain their methodological similarities and differences, as well as present an alternative map of technological space based on his data-transformation convention. 

\paragraph{\protect Methodological Differences.} Whereas section 4.4 highlights the differences in results, Table \ref{Table - Jaffe comparison} clarifies two methodological differences.

\begin{table}[tbh]
\caption{Comparison with Jaffe (1989)}
\begin{center}
\fontsize{9pt}{11pt}\selectfont%
\begin{tabular}{lccc}
\hline \hline
Procedures & Ours & Ours  & Jaffe (1989) \\ 
& (main text) & (this Appendix) &  \\ \hline
1. Re-scaling & Log & Share & Share \\ 
2. Distance metric & Cosine & Cosine & Cosine \\ 
3. Clustering & Local & Local & Global \\ 
4. Reconstruction & Edges & Edges & None \\
5. Final output & Graph & Graph & Clusters \\ \hline \hline
\end{tabular}
\begin{minipage}{450pt}
{\fontsize{9pt}{9pt}\selectfont \smallskip  \textit{Note}: One can use other distance metrics in the Mapper procedures, including Euclidean, correlation, min-complement, and Mahalanobis. See various sensitivity analyses in Appendix E.}
\end{minipage}
\end{center}
\label{Table - Jaffe comparison}
\end{table}

First, we take a logarithm of patent count, $l_{i,t,c}=\ln(\tilde{p}_{i,t,c}+1)$, whereas he takes a share of each class within a firm-year, $l_{i,t,c}=\frac{\tilde{p}_{i,t,c}}{\sum_{c} \tilde{p}_{i,t,c}}$. These rescaling protocols transform the metric space itself and lead to significant differences in the outputs. Hence, how one pre-processes raw data is an important, substantive choice. Nevertheless, this difference is secondary in terms of methodology, because it is a matter of data pre-processing rather than the analytical procedure itself. As we demonstrate in this section, we can easily switch to Jaffe's share-based measure while sticking to our overall framework.

The second and more important difference is that Jaffe performs clustering at the global level to generate a list of mutually exclusive clusters of firms, whereas our “clusters” are local and retain connections through edges between them (which reflect the existence of commonly shared members). In other words, his algorithm is a big \textit{discretization} operation, whereas ours is designed to recover the \textit{continuum} of firms and industries in the data. Uncovering the original, continuous data patterns is important because industry boundaries could be fluid especially when innovative activities are concerned. In the following, we demonstrate how our method can help reveal the global shape of the data and generate additional insights beyond what Jaffe-style clustering does.

\begin{figure}[htb!!!!]\centering%
\caption{Mapper Graph Based on Jaffe's Measure}%
\includegraphics[width=0.60\textwidth]{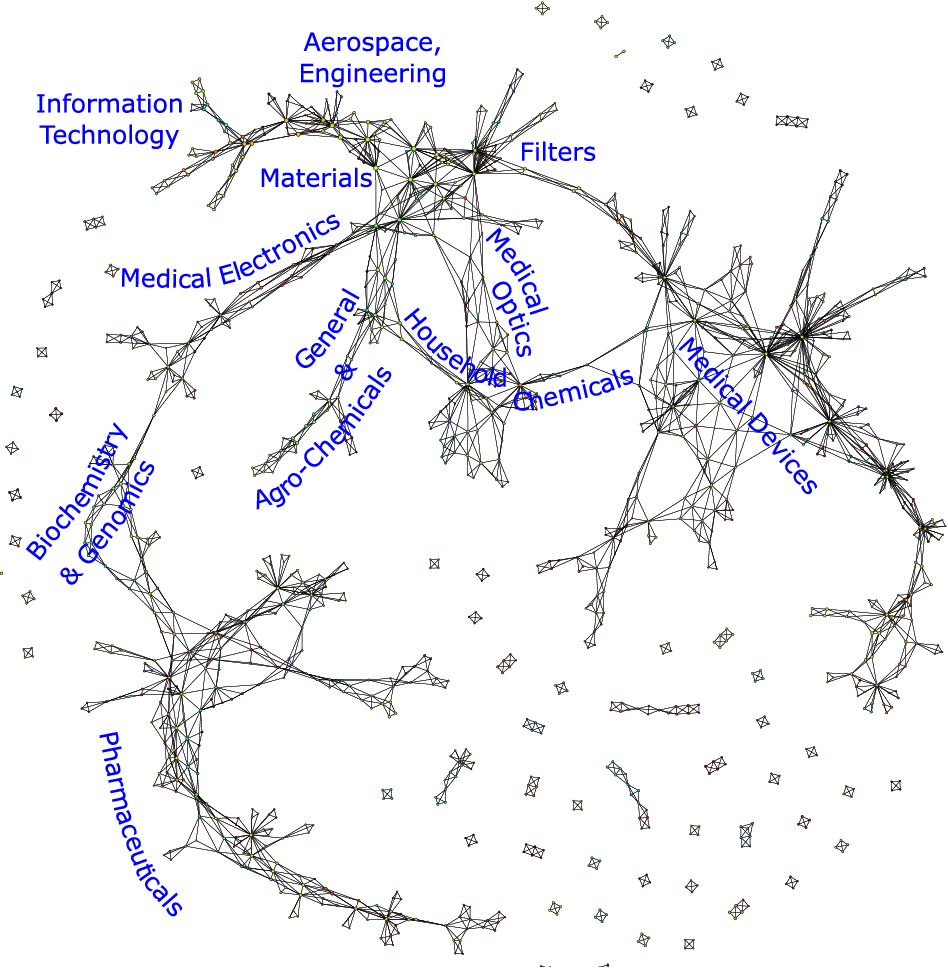}
\caption*{\footnotesize {%
\textit{Note}: Node colors represent the average year of the firm-years in that cluster, with earlier years in blue and later years in red. This figure is a shape-graph representation of 333 major firms' R\&D patents in 1976--2005 based on shares, cosine distance, $n=40$, and $o=0.5$.}}%
\label{Figure - mapper(cos_sumone_n40)}
\end{figure}%

\paragraph{\protect Mapper Graph Based on Jaffe's Measure.} Our approach preserves the underlying continuity in the data. Figure \ref{Figure - mapper(cos_sumone_n40)} is the Mapper graph of the same data, based on Jaffe's rescaling protocol (percentage shares) and distance metric. Unlike the 21 mutually exclusive groups from the global clustering method (Table \ref{Table - K-Medoids Clustering} in section 4.4), the shape graph recovers a \textit{continuum of industries} from the data. Indeed, its main insight is that industries are connected, sometimes in unanticipated ways. 

\paragraph{\protect ``Shrinking'' High-Tech Industries.} Many firms populate the upper-north-west corner of the graph. This high-tech region is so densely populated that disentangling it is difficult (see Figure \ref{Figure - mapper(cos_sumone_n40_details)}, panel a). These firms conduct R\&D in relatively many patent classes. Raw patent counts (and their logged version in section 4.3) preserve the uniqueness of each firm’s portfolio. However, after their conversion into percentage shares (and hence the loss of information on volumes in absolute terms), most portfolios end up looking alike. Thus, the non-share-based Mapper graphs of section 4.3 seem more informative about high-tech industries.

\paragraph{\protect Biomedical Super Flare.} By contrast, the share-based Mapper graph maps biomedical areas more clearly and reveals interesting technological connections between industries. Pharmaceutical companies live in their own world (in the south-west corner of Figure \ref{Figure - mapper(cos_sumone_n40)}), patenting only in a few drug-related classes. Nevertheless, they are not completely isolated, because biochemistry and medical electronics firms stretch from the northern ``heartland'' of engineering, materials, and general chemicals. The detailed maps in Figure \ref{Figure - mapper(cos_sumone_n40_details)} (panels a and b) show medical-equipment manufacturers (e.g., Perkin Elmer and Beckman Coulter) and genomics-based drug developers (e.g., Amgen and Genzyme) connect with pharmaceutical companies (e.g., Merck and Pfizer), collectively forming a long ``archipelago'' of biomedical industries. These connections are intuitive because genomics firms rely on measurement and data processing to develop new drugs. Uncovering them from Table \ref{Table - K-Medoids Clustering} alone would be difficult because it classifies general and agro-chemicals in cluster 6 and biochemicals and medical electronics in cluster 15.\footnote{Both clusters prominently feature Monsanto as a member, but its unique trajectory does not conform to the patterns of any other firms in either cluster (except Bayer, which acquired it in 2018). Figure \ref{Figure - mapper(cos_sumone_n40_details)} shows Bayer did not move much throughout the sample period, whereas Monsanto made a long trip from the crowded center of materials and chemicals industries to Bayer's location. The fact that Bayer acquired Monsanto in 2018 might suggest patent portfolios are a useful predictor of competitive positions and mergers. See \cite{EC2017}.}

\begin{figure}[htb!!!!]
\caption{Mapper Graph Based on Jaffe's Measure (Details)}%

\begin{subfigure}{1\textwidth}
\centering
\includegraphics[width=0.9\linewidth]{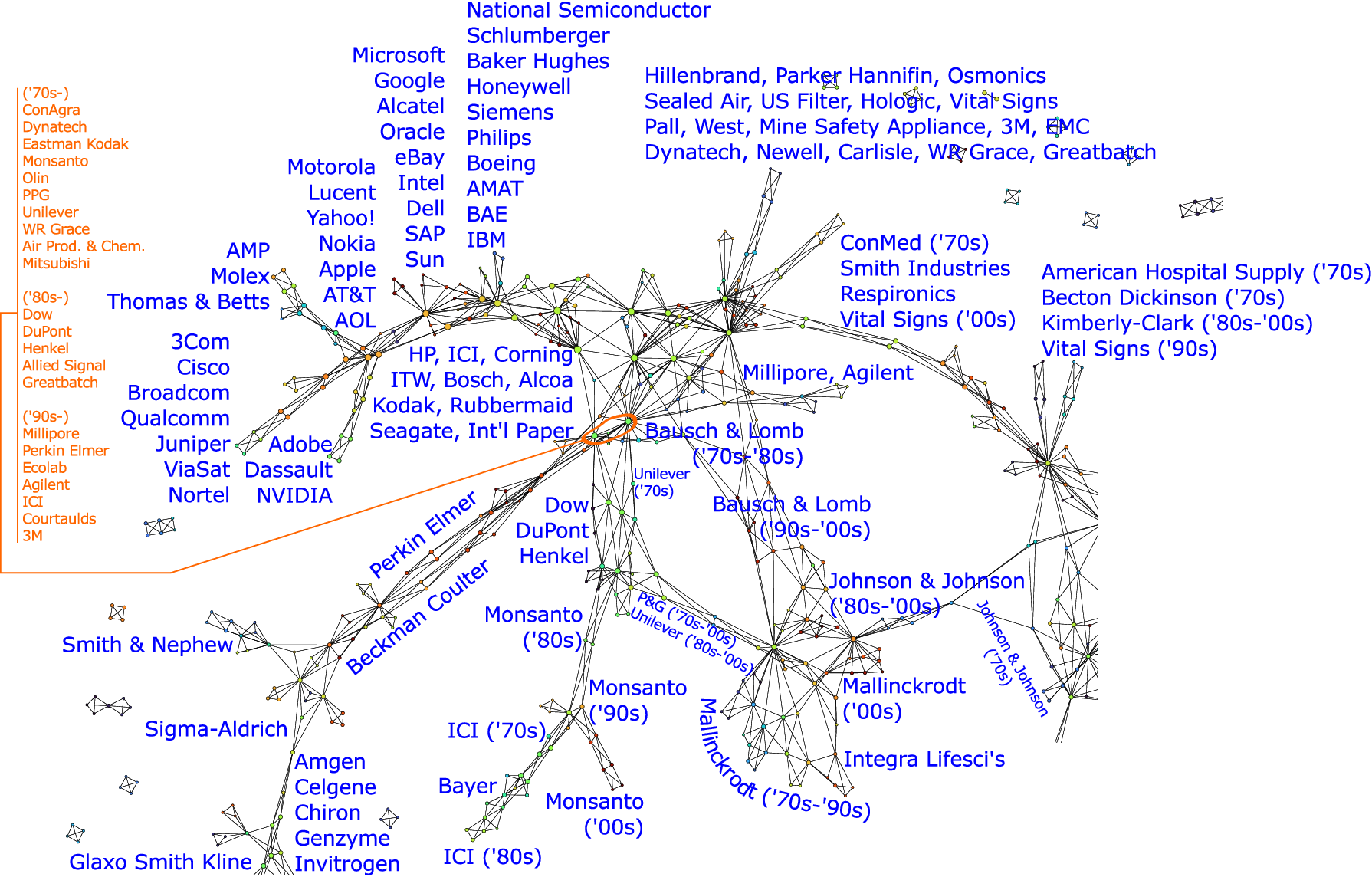}
\caption{IT, Engineering, Materials, and Chemicals}%
\end{subfigure}

\begin{subfigure}{0.5\textwidth}
\centering
\includegraphics[width=0.9\linewidth]{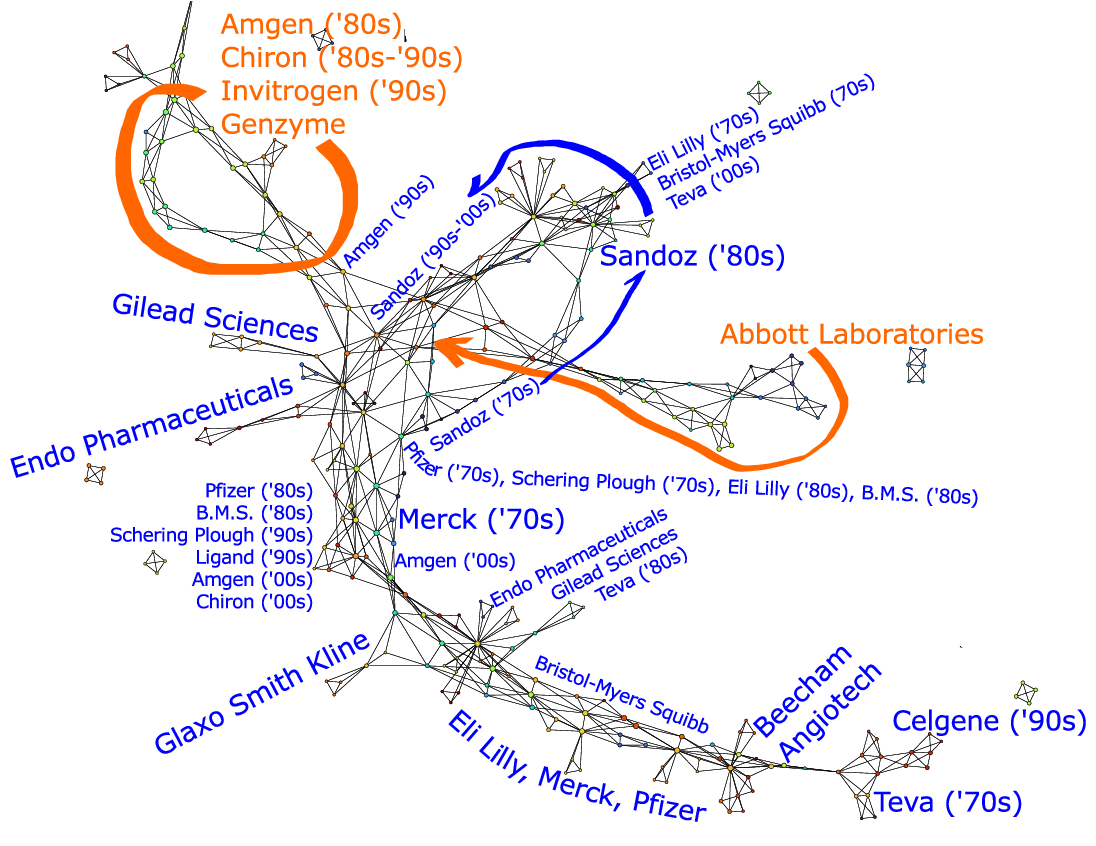}
\caption{Biomedicals and Pharmaceuticals}%
\end{subfigure}
\begin{subfigure}{0.5\textwidth}
\centering
\includegraphics[width=0.9\linewidth]{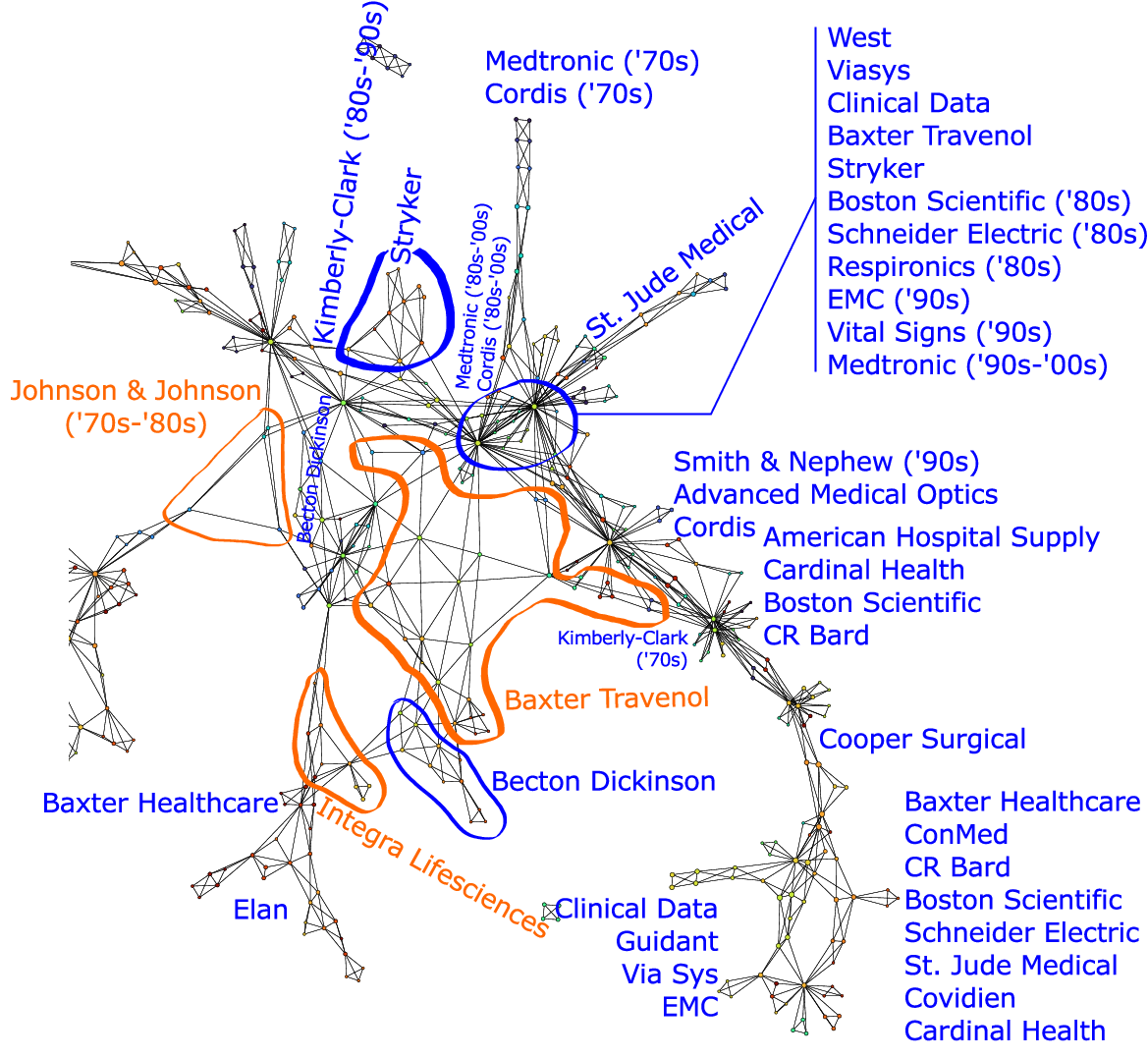}
\caption{Medical Devices}%
\end{subfigure}
\caption*{\footnotesize {%
\textit{Note}: These figures are enlarged and more detailed versions of the Mapper graph in Figure \ref{Figure - mapper(cos_sumone_n40)}.}}%
\label{Figure - mapper(cos_sumone_n40_details)}
\end{figure}%

\paragraph{\protect Two Bridges to Medical Devices.} Medical-device manufacturers occupy a large territory in the eastern half of Figure \ref{Figure - mapper(cos_sumone_n40)}. The Mapper graph reveals somewhat surprising ways in which this industry connects with others. Specifically, two types of firms bridge between medical devices and the engineering heartland.

One bridge consists of household chemicals and contact lenses. Figure \ref{Figure - mapper(cos_sumone_n40_details)} (panels a and c) shows household names, such as Unilever, P\&G, and Bausch \& Lomb, were close to the center of materials and general chemicals in the 1970s and the 1980s. But then their R\&D efforts moved in the south-east direction to form their own peninsulas by the 1990s and the 2000s. J\&J has a major health care division and bridges between household chemicals and medical devices.

The other bridge is located in the north and builds on dense clusters of less well-known firms specializing in aerodynamics and filters (e.g., Sealed Air, U.S. Filter, and Mine Safety Appliance). It then extends in the south-east direction and connects with more obviously medical-device-related names, such as Respironics and Vital Signs. The two groups of firms are seemingly unrelated at first glace, but their underlying technologies are common: breathing requires clean air, and the monitoring of vital signs concerns fluid dynamics. Thus, technologically speaking, mine safety and medical devices are closer neighbors than what a conventional industry-classification system would suggest. By contrast, the global clustering in Table \ref{Table - K-Medoids Clustering} is not particularly informative about these connections:
P\&G and J\&J appear in cluster 4; the aerodynamics-and-filters firms appear separately in cluster 8; and medical devices are split into clusters 4 and 20.

\clearpage

\paragraph{\protect K-Means Clustering.} Whereas Jaffe (1989) uses k-means clustering, we use its variant, k-medoids clustering, in section 4.4. Table \ref{Table - K-means clustering} shows K-means clustering of our data leads to an extreme result in which a single cluster contains more than 70\% of all firm-years, because so many firm-years are located in the densely populated neighborhood of electronics and engineering (i.e., the lower middle part of Figure \ref{Figure - mapper(cos_log_details_1)}).

\begin{table}[tbh]
\caption{K-Means Clustering}
\begin{center}
\fontsize{9pt}{11pt}\selectfont%
\begin{tabular}{lccc}
\hline \hline
Cluster & Number of & Number of & Representative \\ 
& firm-years & unique firms & firms \\ \hline
1 & 5,118 & 303 & (Too many firms to list) \\ 
2 & 438 & 47 & Tellabs, 3Com, Ericsson, Qualcomm, Broadcom \\ 
3 & 421 & 35 & Baxter Travenol, Cordis, C.R.Bard, Medtronic, St. Jude Medical
\\ 
4 & 154 & 19 & Amgen, Chiron, Celgene, Genzyme, Invitrogen \\ 
5 & 53 & 8 & BAE Systems, Trimble Navigation, Lockheed Martin \\ 
6 & 44 & 5 & Leggett \& Platt, Hillenbrand, Stryker \\ 
7 & 36 & 9 & FLIR Systems, Veeco Instruments, Titan, Lockheed Martin \\ 
8 & 29 & 4 & Federal Signal, Zero \\ 
9 & 18 & 4 & Morgan Crucible, Solectron, Emhart \\ 
10 & 18 & 3 & Veeco Instruments, Power-One \\ 
11 & 15 & 4 & RPM, Cookson \\ 
12 & 15 & 4 & Roper Industries, Varian \\ 
13 & 15 & 3 & Newell, Carlisle, Avant! \\ 
14 & 12 & 2 & SPS Technologies, Carpenter Technology \\ 
15 & 12 & 1 & Zebra Technologies \\ 
16 & 10 & 1 & Verifone Systems \\ 
17 & 9 & 2 & Carpenter Technology, Lucent \\ 
18 & 8 & 3 & Magne Tek, Franklin Electric \\ 
19 & 5 & 1 & Roper Industries \\ 
20 & 3 & 2 & Terex, Meggitt \\ 
21 & 1 & 1 & Datum \\ 
Total & 6,434 & 461 &  \\ \hline \hline
\end{tabular}
\begin{minipage}{450pt}
{\fontsize{9pt}{9pt}\selectfont \smallskip  \textit{Note}: The number of clusters (21) follows Jaffe's original specification. The total number of unique firms exceeds 333, because many firms appear in multiple clusters.}
\end{minipage}
\end{center}
\label{Table - K-means clustering}
\end{table}

\clearpage


\section*{Appendix E \ Sensitivity Analysis}

This section reports the details of the sensitivity analysis in section 5.4.

\subsection*{E.1 \ Alternative Specifications of Mapper}

The 15 alternative specifications in Table \ref{Table - List of Specifications} (in the main text) cover all of the five ``tuning parameters'' in section 4.2: (i) the filter function (S2--S3), (ii) resolution (S4--S5), (iii) clustering method (S6--S10), (iv) the dissimilarity function (S11--S14), and (v) overlap (S15--S16). 

S2 uses three-dimensional PCA as $f$ with $n=20$, which drastically increases the effective resolution level to $J=n^{d}=8,000$ and leads to a larger, finer graph. S3 uses MDS as $f$ and the Euclidean distance as $\delta$. S3 generates a smaller graph with fewer flares and more lattice-like parts, which make topological structures less visible. S4 and S5 use lower and higher resolution $J\in\{225,625\}$, respectively, which generates correspondingly coarser and finer graphs. S6--S10 alter the details of hierarchical clustering (HC) methods, whereas S11--S14 use alternative distance metrics, but all of them produce graphs that are similar to S1 both qualitatively and quantitatively. Finally, S15--S16 change the overlap to $o\in\{0.3,0.7\}$, which result in more and less fragmented graphs, respectively, as expected.

 Tables \ref{Table - Sensitivity of Revenue Reg}--\ref{Table - Sensitivity of Mcap Reg} summarize the regressions using the flare/island measures based on all of the 16 different specifications. The results are surprisingly homogeneous: the correlations between firms’ performances and flare length are always positive and statistically significant, with comparable magnitudes.

\begin{table}[tbh]
\caption{Sensitivity Analysis of Revenue Regression}
\begin{center}
\fontsize{9pt}{11pt}\selectfont%
\begin{tabular}{ccccccccc}
\hline \hline
LHS variable: & \multicolumn{8}{c}{Log(Revenue)} \\ 
\cline{2-4}\cline{3-4}\cline{5-7}\cline{8-9}
Mapper specification: & (1) & (2) & (3) & (4) & (5) & (6) & (7) & (8) \\ 
& Baseline &  &  &  &  &  &  &  \\ \hline
\multicolumn{1}{l}{Flare length} & $0.34$ & $0.36$ & $0.50$ & $0.30$ & $0.28$
& $0.33$ & $0.34$ & $0.30$ \\ 
\multicolumn{1}{l}{} & $\left( 0.08\right) $ & $\left( 0.07\right) $ & $%
\left( 0.15\right) $ & $\left( 0.07\right) $ & $\left( 0.07\right) $ & $%
\left( 0.09\right) $ & $\left( 0.10\right) $ & $\left( 0.11\right) $ \\ 
\multicolumn{1}{l}{Islands only} & $0.96$ & $-$ & $-$ & $1.57$ & $1.77$ & $%
1.52$ & $1.53$ & $-$ \\ 
\multicolumn{1}{l}{} & $\left( 0.84\right) $ & $\left( -\right) $ & $\left(
-\right) $ & $\left( 0.42\right) $ & $\left( 0.84\right) $ & $\left(
1.43\right) $ & $\left( 1.43\right) $ & $\left( -\right) $ \\ 
\multicolumn{1}{l}{Log(Patents)} & $0.28$ & $0.24$ & $0.31$ & $0.28$ & $0.28$
& $0.31$ & $0.31$ & $0.33$ \\ 
\multicolumn{1}{l}{} & $\left( 0.04\right) $ & $\left( 0.05\right) $ & $%
\left( 0.04\right) $ & $\left( 0.04\right) $ & $\left( 0.04\right) $ & $%
\left( 0.04\right) $ & $\left( 0.04\right) $ & $\left( 0.04\right) $ \\ 
\multicolumn{1}{l}{Constant} & $6.08$ & $6.17$ & $5.96$ & $6.08$ & $6.09$ & $%
5.98$ & $5.97$ & $5.89$ \\ 
\multicolumn{1}{l}{} & $\left( 0.22\right) $ & $\left( 0.22\right) $ & $%
\left( 0.22\right) $ & $\left( 0.21\right) $ & $\left( 0.22\right) $ & $%
\left( 0.22\right) $ & $\left( 0.22\right) $ & $\left( 0.22\right) $ \\ 
\multicolumn{1}{l}{$R^{2}$} & $0.346$ & $0.352$ & $0.329$ & $0.361$ & $0.343$
& $0.332$ & $0.331$ & $0.322$ \\ 
\multicolumn{1}{l}{Adjusted $R^{2}$} & $0.340$ & $0.348$ & $0.324$ & $0.355$
& $0.337$ & $0.325$ & $0.324$ & $0.317$ \\ 
\multicolumn{1}{l}{Number of observations} & $328$ & $328$ & $328$ & $328$ & 
$328$ & $328$ & $328$ & $328$ \\ 
\multicolumn{1}{l}{Filter function $f$} & 2D-PCA & 3D-PCA & 2D-MDS & 2D-PCA
& 2D-PCA & 2D-PCA & 2D-PCA & 2D-PCA \\ 
\multicolumn{1}{l}{Resolution $J$} & $400$ & $8,000$ & $400$ & $225$ & $625$
& $400$ & $400$ & $400$ \\ 
\multicolumn{1}{l}{Clustering: HC linkage} & Single & Single & Single & 
Single & Single & Weighted & Average & Complete \\ 
\multicolumn{1}{l}{Clustering: \# clusters} & First gap & First gap & First
gap & First gap & First gap & First gap & First gap & First gap \\ 
\multicolumn{1}{l}{Dissimilarity function $\delta $} & Cosine & Cosine & 
Euclid & Cosine & Cosine & Cosine & Cosine & Cosine \\ 
\multicolumn{1}{l}{Overlap $o$} & $0.5$ & $0.5$ & $0.5$ & $0.5$ & $0.5$ & $%
0.5$ & $0.5$ & $0.5$ \\ \hline \hline
LHS variable: & \multicolumn{8}{c}{Log(Revenue)} \\ 
\cline{2-3}\cline{3-3}\cline{4-6}\cline{7-9}
Mapper specification: & (9) & (10) & (11) & (12) & (13) & (14) & (15) & (16)
\\ 
&  &  &  &  &  &  &  &  \\ \hline
\multicolumn{1}{l}{Flare length} & $0.38$ & $0.34$ & $0.32$ & $0.33$ & $0.28$
& $0.24$ & $0.37$ & $0.53$ \\ 
\multicolumn{1}{l}{} & $\left( 0.08\right) $ & $\left( 0.09\right) $ & $%
\left( 0.08\right) $ & $\left( 0.08\right) $ & $\left( 0.08\right) $ & $%
\left( 0.15\right) $ & $\left( 0.09\right) $ & $\left( 0.11\right) $ \\ 
\multicolumn{1}{l}{Islands only} & $0.97$ & $1.59$ & $1.04$ & $1.25$ & $2.00$
& $-$ & $0.88$ & $1.82$ \\ 
\multicolumn{1}{l}{} & $\left( 0.84\right) $ & $\left( 1.43\right) $ & $%
\left( 1.02\right) $ & $\left( 0.73\right) $ & $\left( 0.74\right) $ & $%
\left( -\right) $ & $\left( 0.34\right) $ & $\left( 0.61\right) $ \\ 
\multicolumn{1}{l}{Log(Patents)} & $0.28$ & $0.29$ & $0.28$ & $0.28$ & $0.29$
& $0.36$ & $0.29$ & $0.27$ \\ 
\multicolumn{1}{l}{} & $\left( 0.04\right) $ & $\left( 0.04\right) $ & $%
\left( 0.04\right) $ & $\left( 0.04\right) $ & $\left( 0.04\right) $ & $%
\left( 0.04\right) $ & $\left( 0.04\right) $ & $\left( 0.04\right) $ \\ 
\multicolumn{1}{l}{Constant} & $6.08$ & $6.03$ & $6.06$ & $6.08$ & $6.05$ & $%
5.78$ & $5.98$ & $6.13$ \\ 
\multicolumn{1}{l}{} & $\left( 0.22\right) $ & $\left( 0.22\right) $ & $%
\left( 0.22\right) $ & $\left( 0.22\right) $ & $\left( 0.22\right) $ & $%
\left( 0.22\right) $ & $\left( 0.21\right) $ & $\left( 0.22\right) $ \\ 
\multicolumn{1}{l}{$R^{2}$} & $0.346$ & $0.335$ & $0.338$ & $0.346$ & $0.339$
& $0.310$ & $0.344$ & $0.355$ \\ 
\multicolumn{1}{l}{Adjusted $R^{2}$} & $0.339$ & $0.329$ & $0.332$ & $0.339$
& $0.333$ & $0.306$ & $0.338$ & $0.349$ \\ 
\multicolumn{1}{l}{Number of observations} & $328$ & $328$ & $328$ & $328$ & 
$328$ & $328$ & $328$ & $328$ \\ 
\multicolumn{1}{l}{Filter function $f$} & 2D-PCA & 2D-PCA & 2D-PCA & 2D-PCA
& 2D-PCA & 2D-PCA & 2D-PCA & 2D-PCA \\ 
\multicolumn{1}{l}{Resolution $J$} & $400$ & $400$ & $400$ & $400$ & $400$ & 
$400$ & $400$ & $400$ \\ 
\multicolumn{1}{l}{Clustering: HC linkage} & Single & Single & Single & 
Single & Single & Single & Single & Single \\ 
\multicolumn{1}{l}{Clustering: \# clusters} & Mid gap & Last gap & First gap
& First gap & First gap & First gap & First gap & First gap \\ 
\multicolumn{1}{l}{Dissimilarity function $\delta $} & Cosine & Cosine & 
Euclid & Correl. & Min-c. & Mahal. & Cosine & Cosine \\ 
\multicolumn{1}{l}{Overlap $o$} & $0.5$ & $0.5$ & $0.5$ & $0.5$ & $0.5$ & $%
0.5$ & $0.3$ & $0.7$ \\ \hline \hline
\end{tabular}
\begin{minipage}{475pt}
{\fontsize{9pt}{9pt}\selectfont \smallskip  \textit{Note}: See sections 4.2 and 5.4 for the meaning and discussion of these Mapper specifications, respectively. Standard errors are in parentheses.}
\end{minipage}
\end{center}
\label{Table - Sensitivity of Revenue Reg}
\end{table}

\begin{table}[tbh]
\caption{Sensitivity Analysis of EBIT Regression}
\begin{center}
\fontsize{9pt}{11pt}\selectfont%
\begin{tabular}{ccccccccc}
\hline \hline
LHS variable: & \multicolumn{8}{c}{Log(EBIT)} \\ 
\cline{2-4}\cline{3-4}\cline{5-7}\cline{8-9}
Mapper specification: & (1) & (2) & (3) & (4) & (5) & (6) & (7) & (8) \\ 
& Baseline &  &  &  &  &  &  &  \\ \hline
\multicolumn{1}{l}{Flare length} & $0.33$ & $0.23$ & $0.33$ & $0.27$ & $0.20$
& $0.27$ & $0.27$ & $0.27$ \\ 
\multicolumn{1}{l}{} & $\left( 0.08\right) $ & $\left( 0.08\right) $ & $%
\left( 0.16\right) $ & $\left( 0.07\right) $ & $\left( 0.07\right) $ & $%
\left( 0.10\right) $ & $\left( 0.11\right) $ & $\left( 0.11\right) $ \\ 
\multicolumn{1}{l}{Islands only} & $0.94$ & $-$ & $-$ & $1.14$ & $1.40$ & $%
1.22$ & $1.20$ & $-$ \\ 
\multicolumn{1}{l}{} & $\left( 0.88\right) $ & $\left( -\right) $ & $\left(
-\right) $ & $\left( 0.44\right) $ & $\left( 0.89\right) $ & $\left(
1.51\right) $ & $\left( 1.51\right) $ & $\left( -\right) $ \\ 
\multicolumn{1}{l}{Log(Patents)} & $0.29$ & $0.31$ & $0.35$ & $0.31$ & $0.32$
& $0.33$ & $0.34$ & $0.35$ \\ 
\multicolumn{1}{l}{} & $\left( 0.05\right) $ & $\left( 0.05\right) $ & $%
\left( 0.05\right) $ & $\left( 0.04\right) $ & $\left( 0.05\right) $ & $%
\left( 0.05\right) $ & $\left( 0.05\right) $ & $\left( 0.04\right) $ \\ 
\multicolumn{1}{l}{Constant} & $3.97$ & $3.89$ & $3.75$ & $3.91$ & $3.88$ & $%
3.83$ & $3.81$ & $3.77$ \\ 
\multicolumn{1}{l}{} & $\left( 0.24\right) $ & $\left( 0.25\right) $ & $%
\left( 0.24\right) $ & $\left( 0.23\right) $ & $\left( 0.24\right) $ & $%
\left( 0.24\right) $ & $\left( 0.24\right) $ & $\left( 0.24\right) $ \\ 
\multicolumn{1}{l}{$R^{2}$} & $0.345$ & $0.327$ & $0.317$ & $0.345$ & $0.328$
& $0.326$ & $0.323$ & $0.321$ \\ 
\multicolumn{1}{l}{Adjusted $R^{2}$} & $0.338$ & $0.322$ & $0.313$ & $0.339$
& $0.322$ & $0.319$ & $0.316$ & $0.317$ \\ 
\multicolumn{1}{l}{Number of observations} & $301$ & $301$ & $301$ & $301$ & 
$301$ & $301$ & $301$ & $301$ \\ 
\multicolumn{1}{l}{Filter function $f$} & 2D-PCA & 3D-PCA & 2D-MDS & 2D-PCA
& 2D-PCA & 2D-PCA & 2D-PCA & 2D-PCA \\ 
\multicolumn{1}{l}{Resolution $J$} & $400$ & $8,000$ & $400$ & $225$ & $625$
& $400$ & $400$ & $400$ \\ 
\multicolumn{1}{l}{Clustering: HC linkage} & Single & Single & Single & 
Single & Single & Weighted & Average & Complete \\ 
\multicolumn{1}{l}{Clustering: \# clusters} & First gap & First gap & First
gap & First gap & First gap & First gap & First gap & First gap \\ 
\multicolumn{1}{l}{Dissimilarity function $\delta $} & Cosine & Cosine & 
Euclid & Cosine & Cosine & Cosine & Cosine & Cosine \\ 
\multicolumn{1}{l}{Overlap $o$} & $0.5$ & $0.5$ & $0.5$ & $0.5$ & $0.5$ & $%
0.5$ & $0.5$ & $0.5$ \\ \hline \hline
LHS variable: & \multicolumn{8}{c}{Log(EBIT)} \\ 
\cline{2-3}\cline{3-3}\cline{4-6}\cline{7-9}
Mapper specification: & (9) & (10) & (11) & (12) & (13) & (14) & (15) & (16)
\\ 
&  &  &  &  &  &  &  &  \\ \hline
\multicolumn{1}{l}{Flare length} & $0.33$ & $0.33$ & $0.25$ & $0.31$ & $0.21$
& $0.18$ & $0.28$ & $0.38$ \\ 
\multicolumn{1}{l}{} & $\left( 0.09\right) $ & $\left( 0.10\right) $ & $%
\left( 0.08\right) $ & $\left( 0.08\right) $ & $\left( 0.08\right) $ & $%
\left( 0.17\right) $ & $\left( 0.10\right) $ & $\left( 0.12\right) $ \\ 
\multicolumn{1}{l}{Islands only} & $0.87$ & $1.35$ & $0.70$ & $1.15$ & $1.48$
& $-$ & $0.74$ & $1.32$ \\ 
\multicolumn{1}{l}{} & $\left( 0.88\right) $ & $\left( 1.50\right) $ & $%
\left( 1.08\right) $ & $\left( 0.77\right) $ & $\left( 0.78\right) $ & $%
\left( -\right) $ & $\left( 0.37\right) $ & $\left( 0.65\right) $ \\ 
\multicolumn{1}{l}{Log(Patents)} & $0.31$ & $0.31$ & $0.32$ & $0.30$ & $0.33$
& $0.38$ & $0.33$ & $0.32$ \\ 
\multicolumn{1}{l}{} & $\left( 0.05\right) $ & $\left( 0.05\right) $ & $%
\left( 0.05\right) $ & $\left( 0.05\right) $ & $\left( 0.05\right) $ & $%
\left( 0.05\right) $ & $\left( 0.04\right) $ & $\left( 0.05\right) $ \\ 
\multicolumn{1}{l}{Constant} & $3.92$ & $3.91$ & $3.88$ & $3.96$ & $3.86$ & $%
3.65$ & $3.79$ & $3.90$ \\ 
\multicolumn{1}{l}{} & $\left( 0.24\right) $ & $\left( 0.24\right) $ & $%
\left( 0.24\right) $ & $\left( 0.24\right) $ & $\left( 0.24\right) $ & $%
\left( 0.24\right) $ & $\left( 0.23\right) $ & $\left( 0.24\right) $ \\ 
\multicolumn{1}{l}{$R^{2}$} & $0.337$ & $0.334$ & $0.328$ & $0.342$ & $0.326$
& $0.311$ & $0.331$ & $0.333$ \\ 
\multicolumn{1}{l}{Adjusted $R^{2}$} & $0.331$ & $0.327$ & $0.321$ & $0.336$
& $0.320$ & $0.306$ & $0.324$ & $0.326$ \\ 
\multicolumn{1}{l}{Number of observations} & $301$ & $301$ & $301$ & $301$ & 
$301$ & $301$ & $301$ & $301$ \\ 
\multicolumn{1}{l}{Filter function $f$} & 2D-PCA & 2D-PCA & 2D-PCA & 2D-PCA
& 2D-PCA & 2D-PCA & 2D-PCA & 2D-PCA \\ 
\multicolumn{1}{l}{Resolution $J$} & $400$ & $400$ & $400$ & $400$ & $400$ & 
$400$ & $400$ & $400$ \\ 
\multicolumn{1}{l}{Clustering: HC linkage} & Single & Single & Single & 
Single & Single & Single & Single & Single \\ 
\multicolumn{1}{l}{Clustering: \# clusters} & Mid gap & Last gap & First gap
& First gap & First gap & First gap & First gap & First gap \\ 
\multicolumn{1}{l}{Dissimilarity function $\delta $} & Cosine & Cosine & 
Euclid & Correl. & Min-c. & Mahal. & Cosine & Cosine \\ 
\multicolumn{1}{l}{Overlap $o$} & $0.5$ & $0.5$ & $0.5$ & $0.5$ & $0.5$ & $%
0.5$ & $0.3$ & $0.7$ \\ \hline \hline
\end{tabular}
\begin{minipage}{475pt}
{\fontsize{9pt}{9pt}\selectfont \smallskip  \textit{Note}: See sections 4.2 and 5.4 for the meaning and discussion of these Mapper specifications, respectively. Standard errors are in parentheses.}
\end{minipage}
\end{center}
\label{Table - Sensitivity of EBIT Reg}
\end{table}

\begin{table}[tbh]
\caption{Sensitivity Analysis of Market-Value Regression}
\begin{center}
\fontsize{9pt}{11pt}\selectfont%
\begin{tabular}{ccccccccc}
\hline \hline
LHS variable: & \multicolumn{8}{c}{Log(Market value)} \\ 
\cline{2-4}\cline{3-4}\cline{5-7}\cline{8-9}
Mapper specification: & (1) & (2) & (3) & (4) & (5) & (6) & (7) & (8) \\ 
& Baseline &  &  &  &  &  &  &  \\ \hline
\multicolumn{1}{l}{Flare length} & $0.28$ & $0.18$ & $0.13$ & $0.24$ & $0.15$
& $0.27$ & $0.25$ & $0.27$ \\ 
\multicolumn{1}{l}{} & $\left( 0.08\right) $ & $\left( 0.08\right) $ & $%
\left( 0.16\right) $ & $\left( 0.07\right) $ & $\left( 0.07\right) $ & $%
\left( 0.10\right) $ & $\left( 0.11\right) $ & $\left( 0.11\right) $ \\ 
\multicolumn{1}{l}{Islands only} & $0.70$ & $-$ & $-$ & $1.10$ & $0.91$ & $%
0.76$ & $0.71$ & $-$ \\ 
\multicolumn{1}{l}{} & $\left( 0.90\right) $ & $\left( -\right) $ & $\left(
-\right) $ & $\left( 0.45\right) $ & $\left( 0.91\right) $ & $\left(
1.54\right) $ & $\left( 1.54\right) $ & $\left( -\right) $ \\ 
\multicolumn{1}{l}{Log(Patents)} & $0.34$ & $0.36$ & $0.42$ & $0.34$ & $0.37$
& $0.36$ & $0.37$ & $0.37$ \\ 
\multicolumn{1}{l}{} & $\left( 0.05\right) $ & $\left( 0.05\right) $ & $%
\left( 0.05\right) $ & $\left( 0.04\right) $ & $\left( 0.05\right) $ & $%
\left( 0.05\right) $ & $\left( 0.05\right) $ & $\left( 0.04\right) $ \\ 
\multicolumn{1}{l}{Constant} & $6.20$ & $6.12$ & $5.93$ & $6.19$ & $6.09$ & $%
6.12$ & $6.08$ & $6.08$ \\ 
\multicolumn{1}{l}{} & $\left( 0.24\right) $ & $\left( 0.25\right) $ & $%
\left( 0.24\right) $ & $\left( 0.23\right) $ & $\left( 0.24\right) $ & $%
\left( 0.24\right) $ & $\left( 0.24\right) $ & $\left( 0.23\right) $ \\ 
\multicolumn{1}{l}{$R^{2}$} & $0.342$ & $0.330$ & $0.321$ & $0.349$ & $0.329$
& $0.335$ & $0.331$ & $0.332$ \\ 
\multicolumn{1}{l}{Adjusted $R^{2}$} & $0.336$ & $0.326$ & $0.317$ & $0.343$
& $0.323$ & $0.328$ & $0.325$ & $0.328$ \\ 
\multicolumn{1}{l}{Number of observations} & $325$ & $325$ & $325$ & $325$ & 
$325$ & $325$ & $325$ & $325$ \\ 
\multicolumn{1}{l}{Filter function $f$} & 2D-PCA & 3D-PCA & 2D-MDS & 2D-PCA
& 2D-PCA & 2D-PCA & 2D-PCA & 2D-PCA \\ 
\multicolumn{1}{l}{Resolution $J$} & $400$ & $8,000$ & $400$ & $225$ & $625$
& $400$ & $400$ & $400$ \\ 
\multicolumn{1}{l}{Clustering: HC linkage} & Single & Single & Single & 
Single & Single & Weighted & Average & Complete \\ 
\multicolumn{1}{l}{Clustering: \# clusters} & First gap & First gap & First
gap & First gap & First gap & First gap & First gap & First gap \\ 
\multicolumn{1}{l}{Dissimilarity function $\delta $} & Cosine & Cosine & 
Euclid & Cosine & Cosine & Cosine & Cosine & Cosine \\ 
\multicolumn{1}{l}{Overlap $o$} & $0.5$ & $0.5$ & $0.5$ & $0.5$ & $0.5$ & $%
0.5$ & $0.5$ & $0.5$ \\ \hline \hline
LHS variable: & \multicolumn{8}{c}{Log(Market value)} \\ 
\cline{2-3}\cline{3-3}\cline{4-6}\cline{7-9}
Mapper specification: & (9) & (10) & (11) & (12) & (13) & (14) & (15) & (16)
\\ 
&  &  &  &  &  &  &  &  \\ \hline
\multicolumn{1}{l}{Flare length} & $0.29$ & $0.30$ & $0.17$ & $0.25$ & $0.20$
& $0.16$ & $0.26$ & $0.36$ \\ 
\multicolumn{1}{l}{} & $\left( 0.09\right) $ & $\left( 0.10\right) $ & $%
\left( 0.08\right) $ & $\left( 0.08\right) $ & $\left( 0.08\right) $ & $%
\left( 0.16\right) $ & $\left( 0.10\right) $ & $\left( 0.12\right) $ \\ 
\multicolumn{1}{l}{Islands only} & $0.68$ & $0.84$ & $0.15$ & $0.89$ & $1.11$
& $-$ & $0.55$ & $1.15$ \\ 
\multicolumn{1}{l}{} & $\left( 0.91\right) $ & $\left( 1.53\right) $ & $%
\left( 1.11\right) $ & $\left( 0.79\right) $ & $\left( 0.79\right) $ & $%
\left( -\right) $ & $\left( 0.37\right) $ & $\left( 0.67\right) $ \\ 
\multicolumn{1}{l}{Log(Patents)} & $0.34$ & $0.35$ & $0.38$ & $0.35$ & $0.36$
& $0.41$ & $0.37$ & $0.35$ \\ 
\multicolumn{1}{l}{} & $\left( 0.05\right) $ & $\left( 0.05\right) $ & $%
\left( 0.05\right) $ & $\left( 0.05\right) $ & $\left( 0.05\right) $ & $%
\left( 0.05\right) $ & $\left( 0.05\right) $ & $\left( 0.05\right) $ \\ 
\multicolumn{1}{l}{Constant} & $6.18$ & $6.18$ & $6.07$ & $6.17$ & $6.13$ & $%
5.94$ & $6.08$ & $6.18$ \\ 
\multicolumn{1}{l}{} & $\left( 0.24\right) $ & $\left( 0.24\right) $ & $%
\left( 0.24\right) $ & $\left( 0.24\right) $ & $\left( 0.24\right) $ & $%
\left( 0.24\right) $ & $\left( 0.23\right) $ & $\left( 0.24\right) $ \\ 
\multicolumn{1}{l}{$R^{2}$} & $0.340$ & $0.338$ & $0.328$ & $0.339$ & $0.332$
& $0.321$ & $0.335$ & $0.339$ \\ 
\multicolumn{1}{l}{Adjusted $R^{2}$} & $0.334$ & $0.332$ & $0.322$ & $0.332$
& $0.326$ & \thinspace $0.317$ & $0.329$ & $0.333$ \\ 
\multicolumn{1}{l}{Number of observations} & $325$ & $325$ & $325$ & $325$ & 
$325$ & $325$ & $325$ & $325$ \\ 
\multicolumn{1}{l}{Filter function $f$} & 2D-PCA & 2D-PCA & 2D-PCA & 2D-PCA
& 2D-PCA & 2D-PCA & 2D-PCA & 2D-PCA \\ 
\multicolumn{1}{l}{Resolution $J$} & $400$ & $400$ & $400$ & $400$ & $400$ & 
$400$ & $400$ & $400$ \\ 
\multicolumn{1}{l}{Clustering: HC linkage} & Single & Single & Single & 
Single & Single & Single & Single & Single \\ 
\multicolumn{1}{l}{Clustering: \# clusters} & Mid gap & Last gap & First gap
& First gap & First gap & First gap & First gap & First gap \\ 
\multicolumn{1}{l}{Dissimilarity function $\delta $} & Cosine & Cosine & 
Euclid & Correl. & Min-c. & Mahal. & Cosine & Cosine \\ 
\multicolumn{1}{l}{Overlap $o$} & $0.5$ & $0.5$ & $0.5$ & $0.5$ & $0.5$ & $%
0.5$ & $0.3$ & $0.7$ \\ \hline \hline
\end{tabular}
\begin{minipage}{475pt}
{\fontsize{9pt}{9pt}\selectfont \smallskip  \textit{Note}: See sections 4.2 and 5.4 for the meaning and discussion of these Mapper specifications, respectively. Standard errors are in parentheses.}
\end{minipage}
\end{center}
\label{Table - Sensitivity of Mcap Reg}
\end{table}

\clearpage

\subsection*{E.2 \ Survivorship}

Tables \ref{Table - Regressions (Survivors thru 2005)} and \ref{Table - Regressions (Balanced 1976-2005)} show the results are robust to (i) the elimination of firms that exited our sample before 2005 and (ii) conditioning on the balanced panel of firms.

\begin{table}[tbh!!!]
\caption{Flares, Counts, and Performances (Survivors through 2005)}
\begin{center}
\fontsize{9pt}{11pt}\selectfont%
\begin{tabular}{cccccccccccc}
\hline \hline
LHS variable: & \multicolumn{3}{c}{Log(Revenue)} &  & \multicolumn{3}{c}{
Log(EBIT)} &  & \multicolumn{3}{c}{Log(Market value)} \\ 
\cline{2-4}\cline{3-4}\cline{6-8}\cline{10-12}
& (1) & (2) & (3) &  & (4) & (5) & (6) &  & (7) & (8) & (9) \\ \hline
\multicolumn{1}{l}{Flare length} & $0.65$ & $-$ & $0.35$ &  & $0.64$ & $-$ & 
$0.33$ &  & $0.61$ & $-$ & $0.26$ \\ 
\multicolumn{1}{l}{} & $\left( 0.07\right) $ & $\left( -\right) $ & $\left(
0.08\right) $ &  & $\left( 0.07\right) $ & $\left( -\right) $ & $\left(
0.09\right) $ &  & $\left( 0.07\right) $ & $\left( -\right) $ & $\left(
0.09\right) $ \\ 
\multicolumn{1}{l}{Islands only} & $2.49$ & $-$ & $1.17$ &  & $2.61$ & $-$ & 
$1.21$ &  & $2.57$ & $-$ & $1.04$ \\ 
\multicolumn{1}{l}{} & $\left( 1.04\right) $ & $\left( -\right) $ & $\left(
1.01\right) $ &  & $\left( 1.11\right) $ & $\left( -\right) $ & $\left(
1.08\right) $ &  & $\left( 1.11\right) $ & $\left( -\right) $ & $\left(
1.07\right) $ \\ 
\multicolumn{1}{l}{Log(Patents)} & $-$ & $0.41$ & $0.27$ &  & $-$ & $0.42$ & 
$0.29$ &  & $-$ & $0.42$ & $0.32$ \\ 
\multicolumn{1}{l}{} & $\left( -\right) $ & $\left( 0.04\right) $ & $\left(
0.05\right) $ &  & $\left( -\right) $ & $\left( 0.04\right) $ & $\left(
0.05\right) $ &  & $\left( -\right) $ & $\left( 0.04\right) $ & $\left(
0.05\right) $ \\ 
\multicolumn{1}{l}{Constant} & $7.47$ & $5.66$ & $6.15$ &  & $5.46
$ & $3.62$ & $4.07$ &  & $8.01$ & $6.11$ & $6.48$ \\ 
\multicolumn{1}{l}{} & $\left( 0.11\right) $ & $\left( 0.24\right) $ & $%
\left( 0.26\right) $ &  & $\left( 0.12\right) $ & $\left( 0.25\right) $ & $%
\left( 0.27\right) $ &  & $\left( 0.11\right) $ & $\left( 0.24\right) $ & $%
\left( 0.27\right) $ \\ 
\multicolumn{1}{l}{$R^{2}$} & $0.285$ & $0.317$ & $0.365$ &  & $0.266$ & $0.313$
& $0.351$ &  & $0.238$ & $0.314$ & $0.339$ \\ 
\multicolumn{1}{l}{Adjusted $R^{2}$} & $0.279$ & $0.314$ & $0.357$ &  & $0.260$
& $0.310$ & $0.343$ &  & $0.232$ & $0.311$ & $0.332$ \\ 
\multicolumn{1}{l}{Number of observations} & $256$ & $256$ & $256$ &  & $238$
& $238$ & $238$ &  & $255$ & $255$ & $255$ \\ \hline \hline
\end{tabular}
\begin{minipage}{450pt}
{\fontsize{9pt}{9pt}\selectfont \smallskip  \textit{Note}: This table is the same as Table \ref{Table - Regressions} except for conditioning on the availability of financial data in 2005.}
\end{minipage}
\end{center}
\label{Table - Regressions (Survivors thru 2005)}
\end{table}

\begin{table}[tbh!!!]
\caption{Flares, Counts, and Performances (Balanced Panel, 1976--2005)}
\begin{center}
\fontsize{9pt}{11pt}\selectfont%
\begin{tabular}{cccccccccccc}
\hline \hline
LHS variable: & \multicolumn{3}{c}{Log(Revenue)} &  & \multicolumn{3}{c}{
Log(EBIT)} &  & \multicolumn{3}{c}{Log(Market value)} \\ 
\cline{2-4}\cline{3-4}\cline{6-8}\cline{10-12}
& (1) & (2) & (3) &  & (4) & (5) & (6) &  & (7) & (8) & (9) \\ \hline
\multicolumn{1}{l}{Flare length} & $0.54$ & $-$ & $0.25$ &  & $0.55$ & $-$ & 
$0.27$ &  & $0.58$ & $-$ & $0.26$ \\ 
\multicolumn{1}{l}{} & $\left( 0.07\right) $ & $\left( -\right) $ & $\left(
0.08\right) $ &  & $\left( 0.08\right) $ & $\left( -\right) $ & $\left(
0.10\right) $ &  & $\left( 0.08\right) $ & $\left( -\right) $ & $\left(
0.09\right) $ \\ 
\multicolumn{1}{l}{Islands only} & $1.39$ & $-$ & $0.10$ &  & $2.01$ & $-$ & 
$0.73$ &  & $2.47$ & $-$ & $0.99$ \\ 
\multicolumn{1}{l}{} & $\left( 1.16\right) $ & $\left( -\right) $ & $\left(
1.08\right) $ &  & $\left( 1.38\right) $ & $\left( -\right) $ & $\left(
1.32\right) $ &  & $\left( 1.30\right) $ & $\left( -\right) $ & $\left(
1.20\right) $ \\ 
\multicolumn{1}{l}{Log(Patents)} & $-$ & $0.44$ & $0.32$ &  & $-$ & $0.45$ & 
$0.32$ &  & $-$ & $0.50$ & $0.37$ \\ 
\multicolumn{1}{l}{} & $\left( -\right) $ & $\left( 0.05\right) $ & $\left(
0.06\right) $ &  & $\left( -\right) $ & $\left( 0.06\right) $ & $\left(
0.08\right) $ &  & $\left( -\right) $ & $\left( 0.05\right) $ & $\left(
0.07\right) $ \\
\multicolumn{1}{l}{Constant} & $8.12$ & $5.76$ & $6.29$ &  & $5.92
$ & $3.52$ & $4.12$ &  & $8.37$ & $5.71$ & $6.28$ \\ 
\multicolumn{1}{l}{} & $\left( 0.13\right) $ & $\left( 0.34\right) $ & $%
\left( 0.38\right) $ &  & $\left( 0.16\right) $ & $\left( 0.41\right) $ & $%
\left( 0.46\right) $ &  & $\left( 0.15\right) $ & $\left( 0.37\right) $ & $%
\left( 0.42\right) $ \\ 
\multicolumn{1}{l}{$R^{2}$} & $0.350$ & $0.431$ & $0.476$ &  & $0.297$ & $0.355$
& $0.395$ &  & $0.343$ & $0.440$ & $0.478$ \\ 
\multicolumn{1}{l}{Adjusted $R^{2}$} & $0.338$ & $0.425$ & $0.462$ &  & $0.284$
& $0.349$ & $0.378$ &  & $0.331$ & $0.435$ & $0.463$ \\ 
\multicolumn{1}{l}{Number of observations} & $112$ & $112$ & $112$ &  & $109$
& $109$ & $109$ &  & $112$ & $112$ & $112$ \\ \hline \hline
\end{tabular}
\begin{minipage}{450pt}
{\fontsize{9pt}{9pt}\selectfont \smallskip  \textit{Note}: This table is the same as Table \ref{Table - Regressions} except for conditioning on the availability of financial data in 1976--2005.}
\end{minipage}
\end{center}
\label{Table - Regressions (Balanced 1976-2005)}
\end{table}

\clearpage

\subsection*{E.3 \ Subsampling by Sector and Industry}

Figure \ref{Figure - bubble plots)} (a) plots each firm's revenue in 2005 (on the vertical axis) against the flare length of its patents in 1976--2005 (on the horizontal axis). The circle size reflects the total count of patents in 1976--2005. The maximum finite flare length of all firms is 8; the figure shows infinitely long flares (i.e., islands-only type) at length 10 for ease of visualization. Two patterns emerge. First, the upper-triangle-like shape of the scatter plot suggests long flares always entail high revenues, but the reverse is not true. Some high-revenue firms show short or no flares. Second, the prevalence of large circles in the upper region suggests large portfolios are frequently associated with both high revenues and long flares. However, some firms have many patents but only short flares of length 2 or 3. Thus, long flares predict high revenues and many patents, but not all ``large'' firms exhibit long flares. Panels (b) and (c) show similar patterns for profit and market value, respectively.

\begin{figure}[htb!!!!]
    \caption{Flares and Financial Performances}
    \begin{subfigure}{0.3\textwidth}
        \centering
        \caption{Revenue}
        \includegraphics[width=\textwidth]{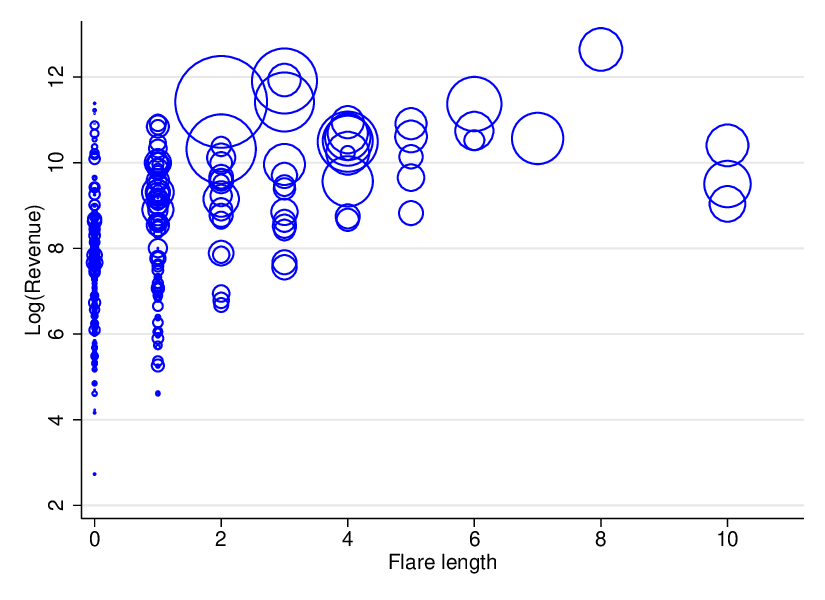}
    \end{subfigure}
    \hfill
    \begin{subfigure}{0.3\textwidth}
        \centering
        \caption{EBIT}
        \includegraphics[width=\textwidth]{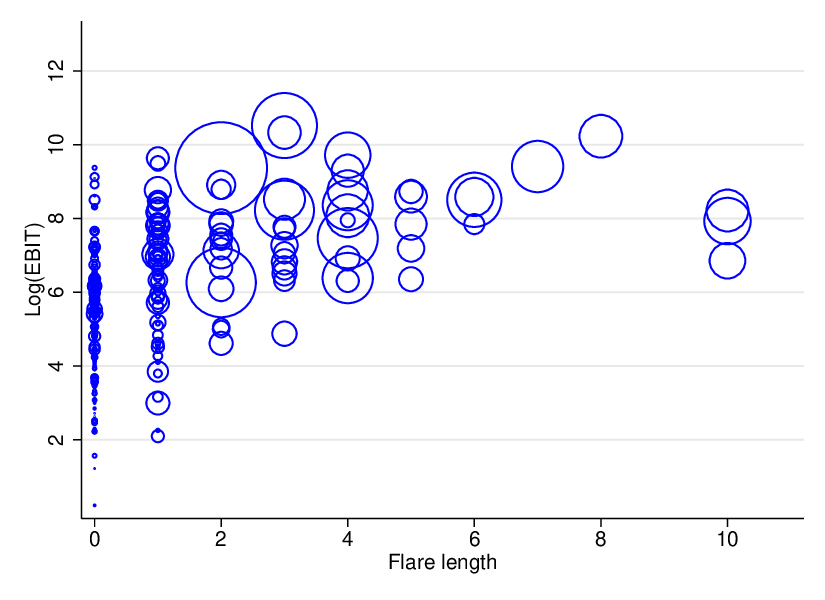}
    \end{subfigure}
    \hfill
    \begin{subfigure}{0.3\textwidth}
        \centering
        \caption{Market value}
        \includegraphics[width=\textwidth]{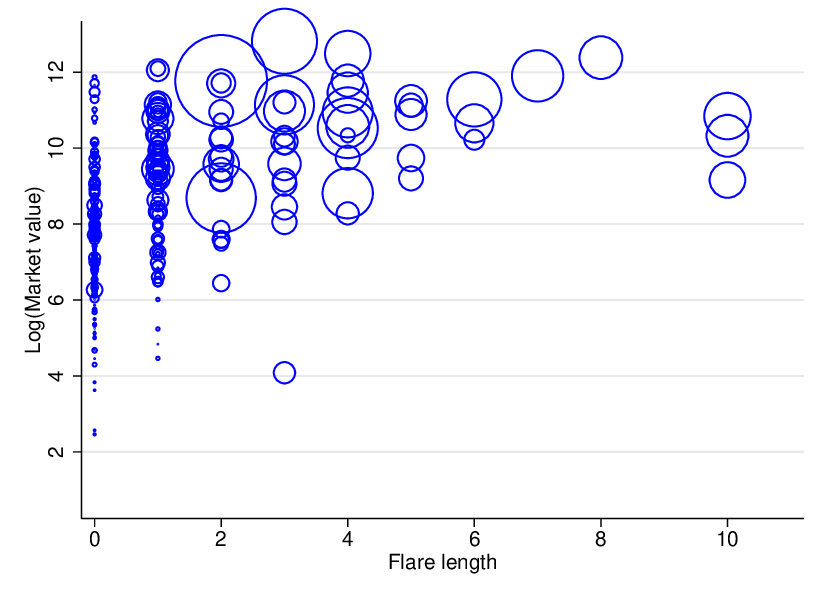}
    \end{subfigure}
    \caption*{\footnotesize {%
        \textit{Note}: The center of each circle represents the firm's revenue in 2005 and the flare length of its patent portfolio in 1976--2005 (based on cosine distance). The circle size reflects the firm's total patent count across all classes and all years. Infinitely long flares (i.e., islands-only type) are shown at length 10 for illustration purposes.}}%
    \label{Figure - bubble plots)}
\end{figure}%

These patterns are not an artifact of aggregation or driven by a few specific sectors and industries. Figure \ref{Figure - revenues & flares by sector} plots revenues and flares by economic sector defined by Standard and Poor's (S\&P), a credit-rating agency. Figure \ref{Figure - revenues & flares by SIC code} studies the technology sector more deeply at the SIC-code level, with a focus on computers and semiconductor industries. These additional scatter plots show the positive correlations are preserved within each sector and industry. 

\begin{figure}[htb!!!!]
    \caption{Revenues and Flares by Sector}%
    \begin{subfigure}{0.3\textwidth}
        \caption{Technology}%
        \centering
        \includegraphics[width=\textwidth]{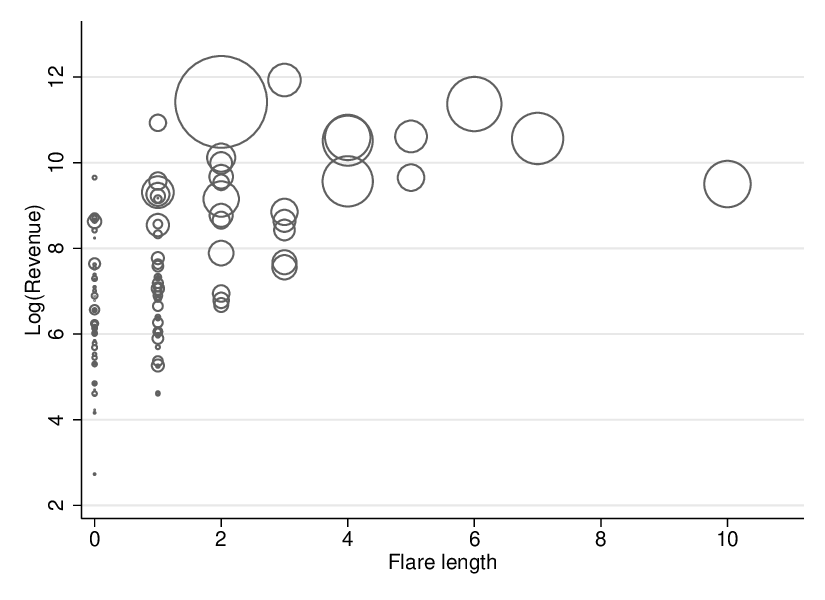}
    \end{subfigure}
    \hfill
    \begin{subfigure}{0.3\textwidth}
        \caption{Capital Goods}%
        \centering
        \includegraphics[width=\textwidth]{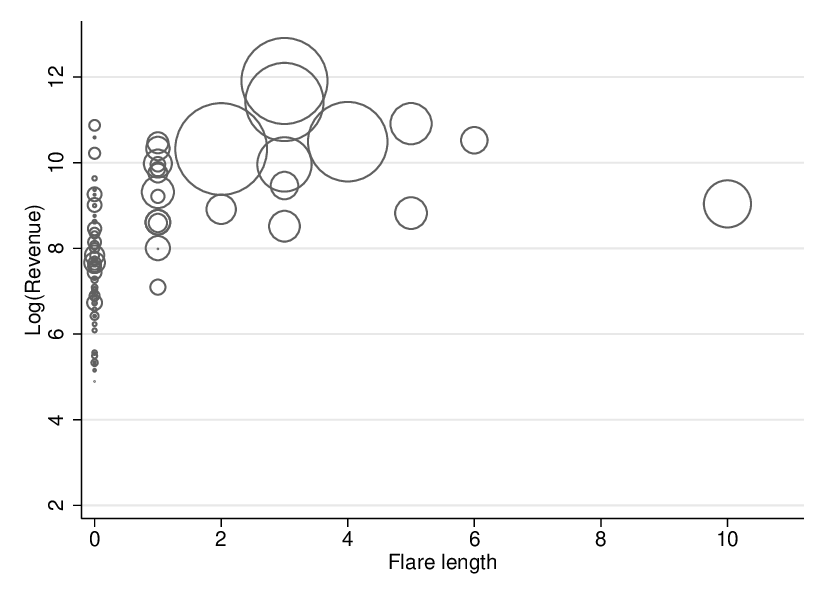}
    \end{subfigure}
    \hfill
    \begin{subfigure}{0.3\textwidth}
        \caption{Health Care}%
        \centering
        \includegraphics[width=\textwidth]{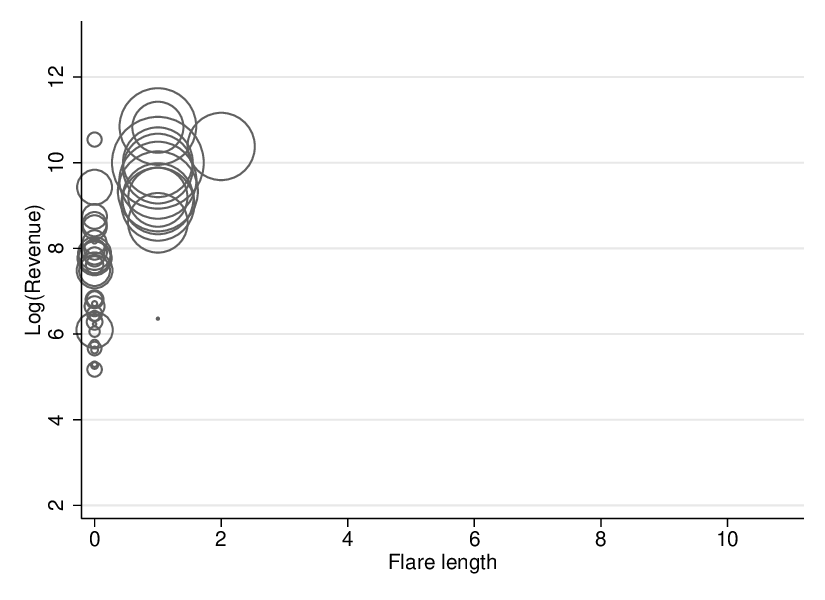}
    \end{subfigure}
    
    \begin{subfigure}{0.3\textwidth}
        \caption{Consumer Goods}%
        \centering
        \includegraphics[width=\textwidth]{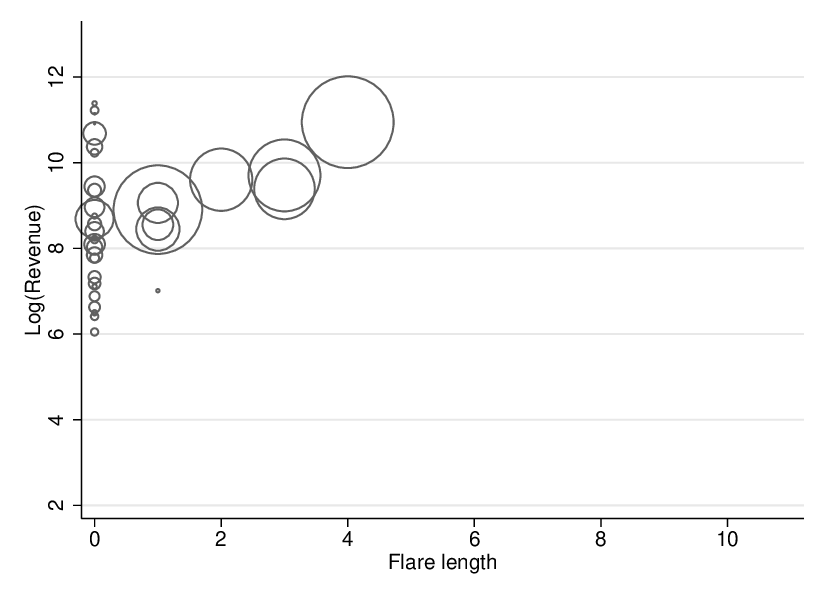}
    \end{subfigure}
    \hfill
    \begin{subfigure}{0.3\textwidth}
        \caption{Basic Materials}%
        \centering
        \includegraphics[width=\textwidth]{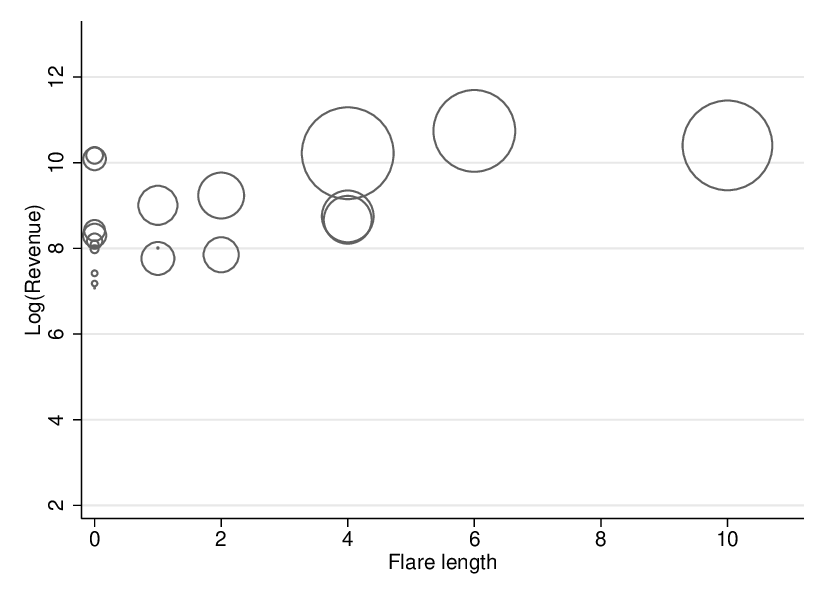}
    \end{subfigure}
    \hfill
    \begin{subfigure}{0.3\textwidth}
        \caption{Others}%
        \centering
        \includegraphics[width=\textwidth]{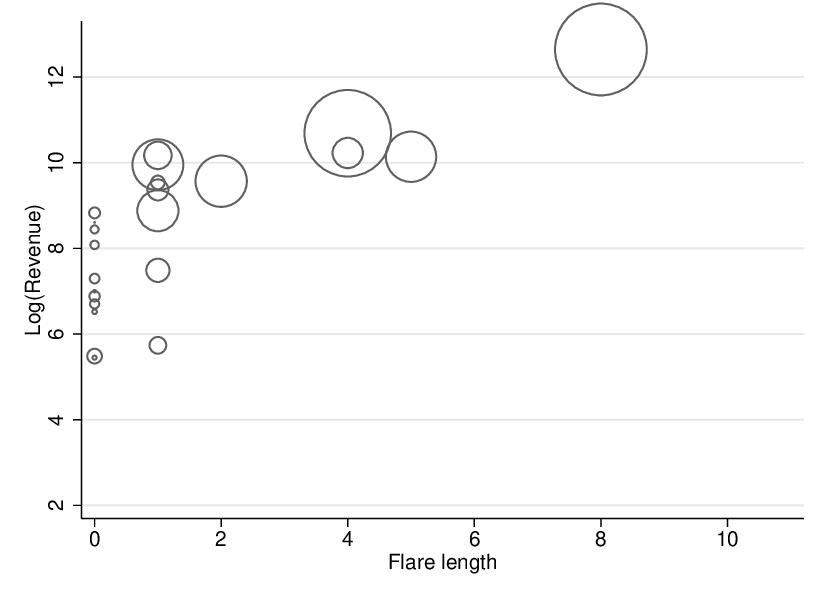}
    \end{subfigure}
    \caption*{\footnotesize {%
        \textit{Note}: ``Consumer goods'' include the S\&P consumer-cyclicals and consumer-staples sectors. ``Others'' include the S\&P energy, communication services, transport, and utilities sectors.}}%
    \label{Figure - revenues & flares by sector}
\end{figure}%

\begin{figure}[htb!!!!]
    \caption{Revenues and Flares by SIC Code}%
    \begin{subfigure}{0.5\textwidth}
        \caption{Computers and Peripherals}%
        \centering
        \includegraphics[width=0.9\linewidth]{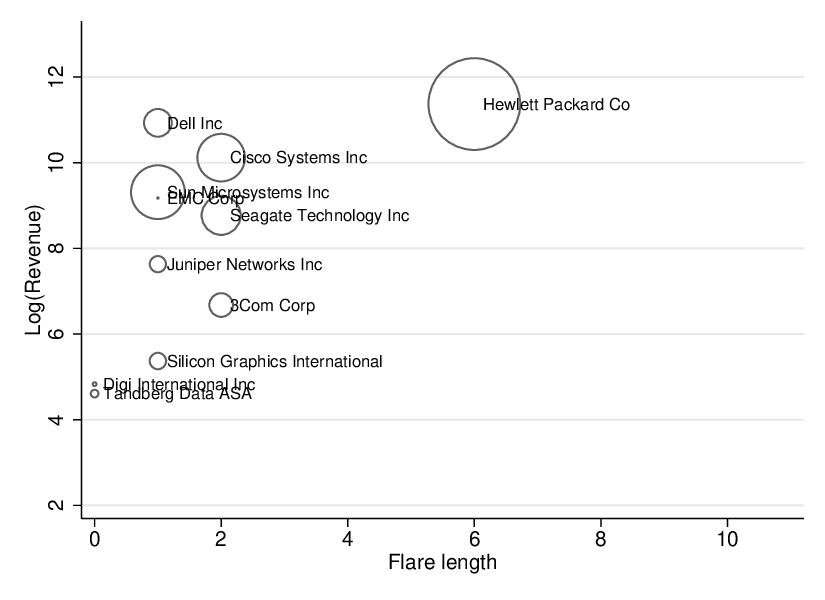}
    \end{subfigure}
    \begin{subfigure}{0.5\textwidth}
        \caption{Semiconductors}%
        \centering
        \includegraphics[width=0.9\linewidth]{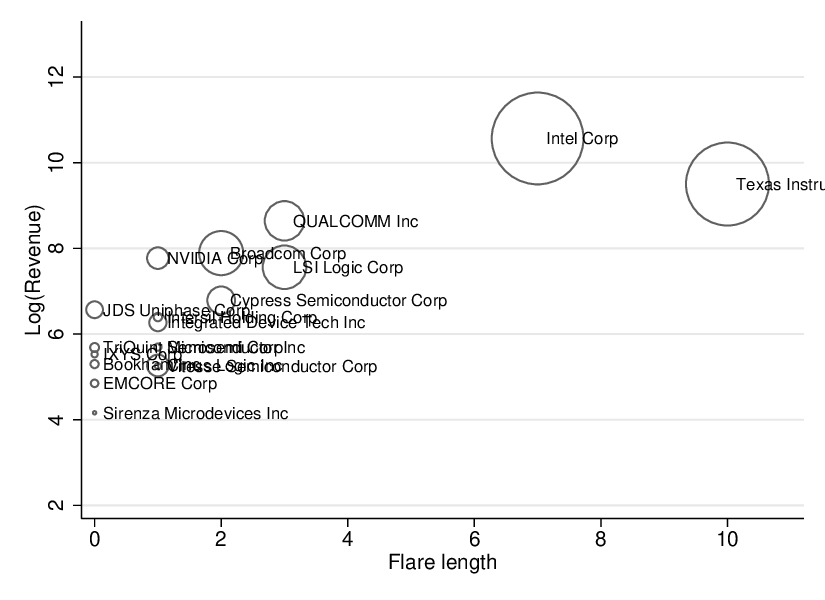}
    \end{subfigure}
    \caption*{\footnotesize {%
        \textit{Note}: For computers and their peripherals, we use 3570 (computer and office equipment), 3571 (electronic computers), 3572 (computer storage devices), 3575 (computer terminals), and 3576 (computer communications equipment). For semiconductors, we use SIC code 3674 (semiconductors and related devices).}}%
    \label{Figure - revenues & flares by SIC code}
\end{figure}%

\clearpage

\subsection*{E.4 \ Patents Acquired by M\&As}

How does the picture change if we incorporate M\&A patents as well? Figure \ref{Figure - mapper(cos_log_m2)} shows another graph based on both R\&D and M\&A patents. Because only 11.4\% of all patents are obtained by M\&As, the overall pattern looks familiar: 1,253 nodes, 3,084 edges, and 31 connected components. The average degree is 4.92, the average number of nodes per firm is 9.44, and the average flare length is 0.87. The main change is that slightly more connections are formed. Thus, M\&A patents seem to marginally expand the firms' coverage areas, ``fill in the gaps'' between firms, and make their eventual portfolios more similar to each other than the R\&D-only versions are. This tendency seems particularly strong in IT-related industries. By contrast, engineering conglomerates, pharmaceuticals, and chemical firms exhibit relatively small changes. They are already clustered together and densely connected in the previous graph; hence, M\&A patents can add only so many connections. 

\begin{figure}[htb!!!!]
\caption{Mapper Graph of Both R\&D and M\&A Patents}%
\centering
\includegraphics[width=0.5\linewidth]{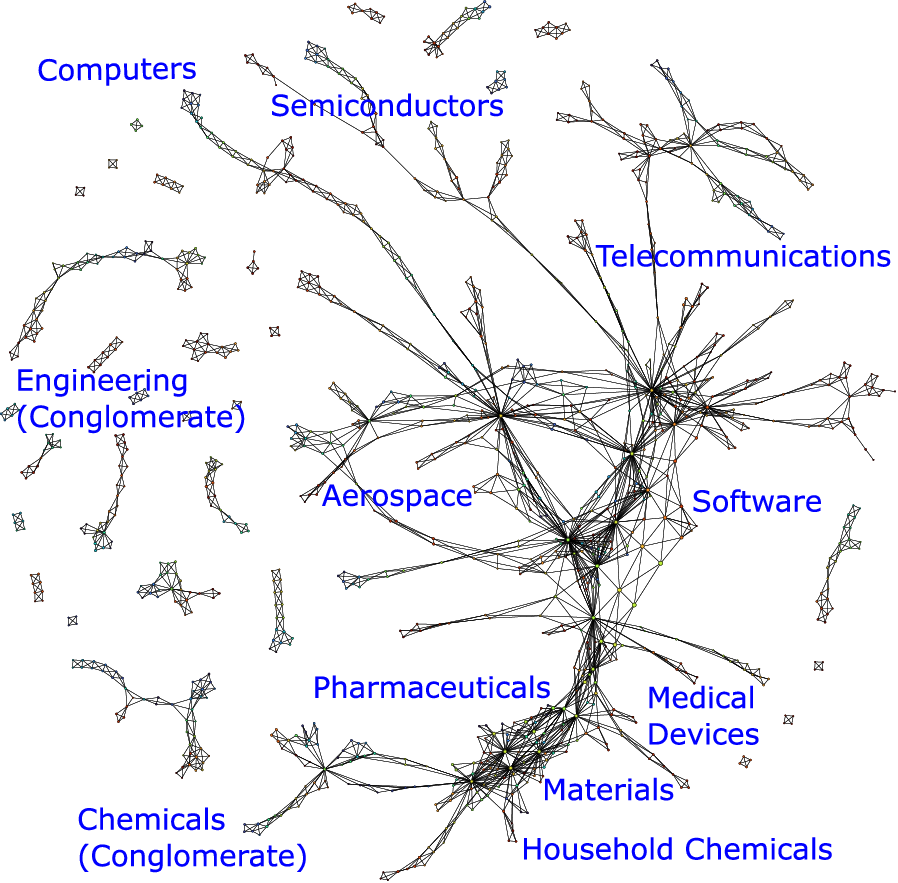}
\caption*{\footnotesize {%
\textit{Note}: This version uses both R\&D and M\&A patents, whereas other figures use only R\&D patents.}}%
\label{Figure - mapper(cos_log_m2)}
\end{figure}%

\clearpage

\clearpage

\section*{Appendix F \ Panel Data and Out-of-Sample Predictions}

Our main output in section 4 is the Mapper graph of the entire dataset. Accordingly, our regressions in section 5 study the statistical relationships between the firms' flares in the whole graph for 1976--2005 and their eventual performances in 2005. By contrast, this section investigates whether these relationships hold more generally---at different points in time, with many years of lags, with firm fixed effects, and in terms of out-of-sample predictions.

\paragraph{Panel-Data Regressions.}

We run the panel-data regressions of the form:
\begin{equation}
\ln (y_{i,t})=\alpha _{1}+\alpha _{2}\lambda_{i,t-\tau}+\alpha _{3}\mathbb{I}\left \{ \lambda_{i,t-\tau}=\infty \right \}
+\alpha _{4} \ln (p_{i,t-\tau}) +\varepsilon _{i,t},
\label{eq - panel regression}
\end{equation}%
where $y_{i,t}$ is firm $i$'s revenue (or other
performance metrics) in year $t$, $\lambda_{i,t}$ is the flare length of its patenting history from 1976 to $t$, $p_{i,t}$ is firm $i$'s patent applications in year $t$ (i.e., $p_{i,t}=\sum_{c}p_{i,t,c}$), and $\tau=0,1,2,...$ is the length of time lag.

\begin{table}[tbh]
\caption{Summary Statistics of 1980--2005 Panel Data}
\label{Table - Panel Sumstats}
\begin{center}
\fontsize{9pt}{11pt}\selectfont%
\begin{tabular}{lcccccc}
\hline \hline
Variable & Unit of & Number of & Mean & Standard & Minimum & Maximum \\ 
& measurement & observations &  & deviation &  &  \\ \hline
Revenue & USD \$1,000 & $5,599$ & $18,729$ & $199,999$ & $0$ & $4,129,493$
\\ 
EBIT & USD \$1,000 & $5,599$ & $1,141$ & $7,830$ & $-68,027$ & $227,497$ \\ 
Market value & USD \$1,000 & $5,183$ & $11,456$ & $32,550$ & $0.65$ & $%
508,330$ \\ 
R\&D patents & Count & $5,607$ & $101$ & $280$ & $0$ & $4,420$ \\ 
Log(Revenue) & $-$ & $5,598$ & $7.37$ & $2.00$ & $-2.04$ & $15.23$ \\ 
Log(EBIT) & $-$ & $5,129$ & $5.28$ & $1.94$ & $-3.91$ & $12.33$ \\ 
Log(Market value) & $-$ & $5,183$ & $7.62$ & $1.96$ & $-0.43$ & $13.14$ \\ 
Log(R\&D patents + 1) & $-$ & $5,607$ & $2.83$ & $1.96$ & $0$ & $8.39$ \\ 
Flare length & See main text & $5,607$ & $0.64$ & $1.11$ & $0$ & $11$ \\ 
Islands only & 0 or 1 & $5,607$ & $0.10$ & $0.30$ & $0$ & $1$ \\ \hline \hline
\end{tabular}
\begin{minipage}{450pt}
{\fontsize{9pt}{9pt}\selectfont \smallskip  \textit{Note}: Each firm's flare length and islands-only indicator in each year are based on the Mapper graph of all firms' patent data from 1976 to that year (1980, 1981, ..., 2005).}
\end{minipage}
\end{center}
\end{table}

Table \ref{Table - Panel Sumstats} summarizes the panel data we use in this section. The effective sample period starts in 1980 instead of 1976 because we use the five-year moving window to construct the Mapper graph (see section 4.1). Tables \ref{Table - Panel Revenue Reg}--\ref{Table - Panel Mcap Reg} report the results of revenue, EBIT, and market-value regressions, respectively. Each table shows three sets of results in the top, middle, and bottom panels, which correspond to regressions with (i) only the main regressors (lagged flare length and islands-only dummy), (ii) the main regressors and patent count, and (iii) the main regressors and firm fixed effects (this specification replaces $\alpha_{1}$ with $\alpha_{1,i}$ in equation \ref{eq - panel regression}), respectively. The 11 columns reflect different lags, $\tau=0,1,2,...,10$.

\begin{table}[tbh]
\caption{Revenue Regression Using 1980--2005 Panel Data}
\label{Table - Panel Revenue Reg}
\begin{center}
\fontsize{9pt}{11pt}\selectfont%
\begin{tabular}{cccccccccccc}
\hline \hline
LHS variable: & \multicolumn{11}{c}{Log(Revenue)} \\ 
\cline{2-4}\cline{3-4}\cline{5-7}\cline{8-12}
Lag of RHS variables: & $0$ & $1$ & $2$ & $3$ & $4$ & $5$ & $6$ & $7$ & $8$
& $9$ & $10$ \\ \hline
\multicolumn{1}{l}{Lagged flare length} & $0.90$ & $0.88$ & $0.87$ & $0.87$
& $0.89$ & $0.91$ & $0.92$ & $0.90$ & $0.91$ & $0.93$ & $0.92$ \\ 
\multicolumn{1}{l}{} & $\left( 0.02\right) $ & $\left( 0.02\right) $ & $%
\left( 0.02\right) $ & $\left( 0.02\right) $ & $\left( 0.02\right) $ & $%
\left( 0.03\right) $ & $\left( 0.03\right) $ & $\left( 0.03\right) $ & $%
\left( 0.03\right) $ & $\left( 0.03\right) $ & $\left( 0.03\right) $ \\ 
\multicolumn{1}{l}{Lagged islands only} & $2.13$ & $2.08$ & $2.03$ & $1.98$
& $1.95$ & $1.89$ & $1.86$ & $1.82$ & $1.77$ & $1.73$ & $1.65$ \\ 
\multicolumn{1}{l}{} & $\left( 0.08\right) $ & $\left( 0.07\right) $ & $%
\left( 0.07\right) $ & $\left( 0.07\right) $ & $\left( 0.07\right) $ & $%
\left( 0.07\right) $ & $\left( 0.07\right) $ & $\left( 0.07\right) $ & $%
\left( 0.07\right) $ & $\left( 0.08\right) $ & $\left( 0.08\right) $ \\ 
\multicolumn{1}{l}{Constant} & $6.57$ & $6.66$ & $6.75$ & $6.83$ & $6.89$ & $%
6.95$ & $7.01$ & $7.09$ & $7.15$ & $7.19$ & $7.26$ \\ 
\multicolumn{1}{l}{} & $\left( 0.03\right) $ & $\left( 0.03\right) $ & $%
\left( 0.03\right) $ & $\left( 0.03\right) $ & $\left( 0.03\right) $ & $%
\left( 0.03\right) $ & $\left( 0.03\right) $ & $\left( 0.03\right) $ & $%
\left( 0.03\right) $ & $\left( 0.04\right) $ & $\left( 0.04\right) $ \\ 
\multicolumn{1}{l}{Firm fixed effects} & No & No & No & No & No & No & No & 
No & No & No & No \\ 
\multicolumn{1}{l}{$R^{2}$} & $0.289$ & $0.289$ & $0.290$ & $0.286$ & $0.285$
& $0.286$ & $0.283$ & $0.275$ & $0.273$ & $0.274$ & $0.263$ \\ 
\multicolumn{1}{l}{Adjusted $R^{2}$} & $0.288$ & $0.289$ & $0.289$ & $0.285$
& $0.285$ & $0.285$ & $0.282$ & $0.274$ & $0.273$ & $0.273$ & $0.263$ \\ 
\multicolumn{1}{l}{Number of observations} & $5,598$ & $5,286$ & $4,973$ & $%
4,662$ & $4,354$ & $4,048$ & $3,749$ & $3,455$ & $3,170$ & $2,903$ & $2,651$
\\ \hline \hline
LHS variable: & \multicolumn{11}{c}{Log(Revenue)} \\ \cline{2-12}
Lag of RHS variables: & $0$ & $1$ & $2$ & $3$ & $4$ & $5$ & $6$ & $7$ & $8$
& $9$ & $10$ \\ \hline
\multicolumn{1}{l}{Lagged flare length} & $0.36$ & $0.33$ & $0.32$ & $0.30$
& $0.30$ & $0.30$ & $0.29$ & $0.26$ & $0.26$ & $0.26$ & $0.24$ \\ 
& $\left( 0.02\right) $ & $\left( 0.02\right) $ & $\left( 0.03\right) $ & $%
\left( 0.03\right) $ & $\left( 0.03\right) $ & $\left( 0.03\right) $ & $%
\left( 0.03\right) $ & $\left( 0.03\right) $ & $\left( 0.04\right) $ & $%
\left( 0.04\right) $ & $\left( 0.04\right) $ \\ 
\multicolumn{1}{l}{Lagged islands only} & $0.80$ & $0.74$ & $0.69$ & $0.64$
& $0.60$ & $0.55$ & $0.50$ & $0.44$ & $0.39$ & $0.34$ & $0.26$ \\ 
\multicolumn{1}{l}{} & $\left( 0.08\right) $ & $\left( 0.07\right) $ & $%
\left( 0.08\right) $ & $\left( 0.08\right) $ & $\left( 0.08\right) $ & $%
\left( 0.08\right) $ & $\left( 0.08\right) $ & $\left( 0.08\right) $ & $%
\left( 0.09\right) $ & $\left( 0.09\right) $ & $\left( 0.09\right) $ \\ 
\multicolumn{1}{l}{Lagged log patents} & $0.49$ & $0.50$ & $0.49$ & $0.49$ & 
$0.48$ & $0.48$ & $0.48$ & $0.48$ & $0.48$ & $0.48$ & $0.48$ \\ 
\multicolumn{1}{l}{} & $\left( 0.01\right) $ & $\left( 0.01\right) $ & $%
\left( 0.01\right) $ & $\left( 0.01\right) $ & $\left( 0.02\right) $ & $%
\left( 0.02\right) $ & $\left( 0.02\right) $ & $\left( 0.02\right) $ & $%
\left( 0.02\right) $ & $\left( 0.02\right) $ & $\left( 0.02\right) $ \\ 
\multicolumn{1}{l}{Constant} & $5.66$ & $5.76$ & $5.86$ & $5.96$ & $6.06$ & $%
6.15$ & $6.24$ & $6.32$ & $6.40$ & $6.48$ & $6.56$ \\ 
\multicolumn{1}{l}{} & $\left( 0.04\right) $ & $\left( 0.04\right) $ & $%
\left( 0.04\right) $ & $\left( 0.04\right) $ & $\left( 0.04\right) $ & $%
\left( 0.04\right) $ & $\left( 0.04\right) $ & $\left( 0.04\right) $ & $%
\left( 0.04\right) $ & $\left( 0.04\right) $ & $\left( 0.05\right) $ \\ 
\multicolumn{1}{l}{Firm fixed effects} & No & No & No & No & No & No & No & 
No & No & No & No \\ 
\multicolumn{1}{l}{$R^{2}$} & $0.413$ & $0.420$ & $0.422$ & $0.418$ & $0.416$
& $0.413$ & $0.411$ & $0.406$ & $0.405$ & $0.400$ & $0.394$ \\ 
\multicolumn{1}{l}{Adjusted $R^{2}$} & $0.412$ & $0.420$ & $0.422$ & $0.418$
& $0.415$ & $0.413$ & $0.410$ & $0.406$ & $0.404$ & $0.400$ & $0.393$ \\ 
\multicolumn{1}{l}{Number of observations} & $5,598$ & $5,286$ & $4,973$ & $%
4,662$ & $4,354$ & $4,048$ & $3,749$ & $3,455$ & $3,170$ & $2,903$ & $2,651$
\\ \hline \hline
LHS variable: & \multicolumn{11}{c}{Log(Revenue)} \\ 
\cline{2-3}\cline{3-3}\cline{4-6}\cline{7-12}
Lag of RHS variables: & $0$ & $1$ & $2$ & $3$ & $4$ & $5$ & $6$ & $7$ & $8$
& $9$ & $10$ \\ \hline
\multicolumn{1}{l}{Lagged flare length} & $0.15$ & $0.13$ & $0.11$ & $0.09$
& $0.06$ & $0.05$ & $0.03$ & $0.02$ & $0.02$ & $0.01$ & $-0.01$ \\ 
\multicolumn{1}{l}{} & $\left( 0.02\right) $ & $\left( 0.02\right) $ & $%
\left( 0.02\right) $ & $\left( 0.02\right) $ & $\left( 0.02\right) $ & $%
\left( 0.02\right) $ & $\left( 0.02\right) $ & $\left( 0.02\right) $ & $%
\left( 0.02\right) $ & $\left( 0.02\right) $ & $\left( 0.02\right) $ \\ 
\multicolumn{1}{l}{Lagged islands only} & $-0.10$ & $-0.13$ & $-0.17$ & $%
-0.21$ & $-0.24$ & $-0.25$ & $-0.25$ & $-0.23$ & $-0.22$ & $-0.21$ & $-0.26$
\\ 
\multicolumn{1}{l}{} & $\left( 0.06\right) $ & $\left( 0.06\right) $ & $%
\left( 0.05\right) $ & $\left( 0.05\right) $ & $\left( 0.05\right) $ & $%
\left( 0.05\right) $ & $\left( 0.05\right) $ & $\left( 0.05\right) $ & $%
\left( 0.04\right) $ & $\left( 0.04\right) $ & $\left( 0.04\right) $ \\ 
\multicolumn{1}{l}{Constant} & $7.28$ & $7.38$ & $7.48$ & $7.57$ & $7.66$ & $%
7.74$ & $7.82$ & $7.89$ & $7.95$ & $8.01$ & $8.09$ \\ 
\multicolumn{1}{l}{} & $\left( 0.02\right) $ & $\left( 0.02\right) $ & $%
\left( 0.02\right) $ & $\left( 0.02\right) $ & $\left( 0.02\right) $ & $%
\left( 0.02\right) $ & $\left( 0.02\right) $ & $\left( 0.02\right) $ & $%
\left( 0.02\right) $ & $\left( 0.02\right) $ & $\left( 0.02\right) $ \\ 
\multicolumn{1}{l}{Firm fixed effects} & Yes & Yes & Yes & Yes & Yes & Yes & 
Yes & Yes & Yes & Yes & Yes \\ 
\multicolumn{1}{l}{Within-firm $R^{2}$} & $0.029$ & $0.026$ & $0.026$ & $%
0.026$ & $0.023$ & $0.023$ & $0.021$ & $0.019$ & $0.019$ & $0.018$ & $0.022$
\\ 
\multicolumn{1}{l}{Between-firm $R^{2}$} & $0.313$ & $0.277$ & $0.197$ & $%
0.079$ & $0.002$ & $0.008$ & $0.049$ & $0.074$ & $0.078$ & $0.128$ & $0.225$
\\ 
\multicolumn{1}{l}{Overall $R^{2}$} & $0.141$ & $0.105$ & $0.068$ & $0.028$
& $0.003$ & $0.000$ & $0.008$ & $0.018$ & $0.016$ & $0.031$ & $0.080$ \\ 
\multicolumn{1}{l}{Number of observations} & $5,598$ & $5,286$ & $4,973$ & $%
4,662$ & $4,354$ & $4,048$ & $3,749$ & $3,455$ & $3,170$ & $2,903$ & $2,651$
\\ 
\multicolumn{1}{l}{Number of firms} & $317$ & $314$ & $311$ & $308$ & $306$
& $299$ & $294$ & $285$ & $267$ & $253$ & $241$ \\ \hline \hline
\end{tabular}
\begin{minipage}{475pt}
{\fontsize{9pt}{9pt}\selectfont \smallskip  \textit{Note}: Standard errors are in parentheses.}
\end{minipage}
\end{center}
\end{table}

\begin{table}[tbh]
\caption{EBIT Regression Using 1980--2005 Panel Data}
\label{Table - Panel EBIT Reg}
\begin{center}
\fontsize{9pt}{11pt}\selectfont%
\begin{tabular}{cccccccccccc}
\hline \hline
LHS variable: & \multicolumn{11}{c}{Log(EBIT)} \\ 
\cline{2-4}\cline{3-4}\cline{5-7}\cline{8-12}
Lag of RHS variables: & $0$ & $1$ & $2$ & $3$ & $4$ & $5$ & $6$ & $7$ & $8$
& $9$ & $10$ \\ \hline
\multicolumn{1}{l}{Lagged flare length} & $0.82$ & $0.80$ & $0.79$ & $0.78$
& $0.82$ & $0.83$ & $0.85$ & $0.84$ & $0.85$ & $0.88$ & $0.87$ \\ 
\multicolumn{1}{l}{} & $\left( 0.02\right) $ & $\left( 0.02\right) $ & $%
\left( 0.02\right) $ & $\left( 0.02\right) $ & $\left( 0.03\right) $ & $%
\left( 0.03\right) $ & $\left( 0.03\right) $ & $\left( 0.03\right) $ & $%
\left( 0.03\right) $ & $\left( 0.04\right) $ & $\left( 0.04\right) $ \\ 
\multicolumn{1}{l}{Lagged islands only} & $1.73$ & $1.70$ & $1.67$ & $1.62$
& $1.62$ & $1.61$ & $1.64$ & $1.59$ & $1.56$ & $1.51$ & $1.43$ \\ 
\multicolumn{1}{l}{} & $\left( 0.08\right) $ & $\left( 0.08\right) $ & $%
\left( 0.08\right) $ & $\left( 0.08\right) $ & $\left( 0.08\right) $ & $%
\left( 0.08\right) $ & $\left( 0.08\right) $ & $\left( 0.08\right) $ & $%
\left( 0.08\right) $ & $\left( 0.08\right) $ & $\left( 0.09\right) $ \\ 
\multicolumn{1}{l}{Constant} & $4.56$ & $4.64$ & $4.71$ & $4.79$ & $4.83$ & $%
4.88$ & $4.93$ & $4.99$ & $5.04$ & $5.08$ & $5.14$ \\ 
\multicolumn{1}{l}{} & $\left( 0.03\right) $ & $\left( 0.03\right) $ & $%
\left( 0.03\right) $ & $\left( 0.03\right) $ & $\left( 0.03\right) $ & $%
\left( 0.03\right) $ & $\left( 0.04\right) $ & $\left( 0.04\right) $ & $%
\left( 0.04\right) $ & $\left( 0.04\right) $ & $\left( 0.04\right) $ \\ 
\multicolumn{1}{l}{Firm fixed effects} & No & No & No & No & No & No & No & 
No & No & No & No \\ 
\multicolumn{1}{l}{$R^{2}$} & $0.244$ & $0.237$ & $0.237$ & $0.230$ & $0.232$
& $0.228$ & $0.228$ & $0.217$ & $0.218$ & $0.214$ & $0.205$ \\ 
\multicolumn{1}{l}{Adjusted $R^{2}$} & $0.243$ & $0.237$ & $0.236$ & $0.229$
& $0.232$ & $0.228$ & $0.228$ & $0.216$ & $0.218$ & $0.213$ & $0.204$ \\ 
\multicolumn{1}{l}{Number of observations} & $5,129$ & $4,846$ & $4,563$ & $%
4,281$ & $4,007$ & $3,738$ & $3,472$ & $3,211$ & $2,954$ & $2,705$ & $2,476$
\\ \hline \hline
LHS variable: & \multicolumn{11}{c}{Log(EBIT)} \\ \cline{2-12}
Lag of RHS variables: & $0$ & $1$ & $2$ & $3$ & $4$ & $5$ & $6$ & $7$ & $8$
& $9$ & $10$ \\ \hline
\multicolumn{1}{l}{Lagged flare length} & $0.23$ & $0.20$ & $0.19$ & $0.17$
& $0.18$ & $0.16$ & $0.16$ & $0.12$ & $0.14$ & $0.14$ & $0.14$ \\ 
& $\left( 0.02\right) $ & $\left( 0.03\right) $ & $\left( 0.03\right) $ & $%
\left( 0.03\right) $ & $\left( 0.03\right) $ & $\left( 0.03\right) $ & $%
\left( 0.04\right) $ & $\left( 0.04\right) $ & $\left( 0.04\right) $ & $%
\left( 0.04\right) $ & $\left( 0.05\right) $ \\ 
\multicolumn{1}{l}{Lagged islands only} & $0.29$ & $0.24$ & $0.20$ & $0.13$
& $0.13$ & $0.11$ & $0.12$ & $0.05$ & $0.06$ & $-0.01$ & $-0.08$ \\ 
\multicolumn{1}{l}{} & $\left( 0.08\right) $ & $\left( 0.08\right) $ & $%
\left( 0.08\right) $ & $\left( 0.08\right) $ & $\left( 0.08\right) $ & $%
\left( 0.08\right) $ & $\left( 0.09\right) $ & $\left( 0.09\right) $ & $%
\left( 0.09\right) $ & $\left( 0.10\right) $ & $\left( 0.10\right) $ \\ 
\multicolumn{1}{l}{Lagged log patents} & $0.54$ & $0.54$ & $0.54$ & $0.54$ & 
$0.53$ & $0.53$ & $0.53$ & $0.54$ & $0.53$ & $0.53$ & $0.53$ \\ 
\multicolumn{1}{l}{} & $\left( 0.01\right) $ & $\left( 0.01\right) $ & $%
\left( 0.02\right) $ & $\left( 0.02\right) $ & $\left( 0.02\right) $ & $%
\left( 0.02\right) $ & $\left( 0.02\right) $ & $\left( 0.02\right) $ & $%
\left( 0.02\right) $ & $\left( 0.02\right) $ & $\left( 0.02\right) $ \\ 
\multicolumn{1}{l}{Constant} & $3.57$ & $3.65$ & $3.74$ & $3.83$ & $3.92$ & $%
3.99$ & $4.07$ & $4.14$ & $4.21$ & $4.28$ & $4.36$ \\ 
\multicolumn{1}{l}{} & $\left( 0.04\right) $ & $\left( 0.04\right) $ & $%
\left( 0.04\right) $ & $\left( 0.04\right) $ & $\left( 0.04\right) $ & $%
\left( 0.04\right) $ & $\left( 0.04\right) $ & $\left( 0.05\right) $ & $%
\left( 0.05\right) $ & $\left( 0.05\right) $ & $\left( 0.05\right) $ \\ 
\multicolumn{1}{l}{Firm fixed effects} & No & No & No & No & No & No & No & 
No & No & No & No \\ 
\multicolumn{1}{l}{$R^{2}$} & $0.403$ & $0.403$ & $0.400$ & $0.397$ & $0.391$
& $0.384$ & $0.379$ & $0.371$ & $0.368$ & $0.359$ & $0.349$ \\ 
\multicolumn{1}{l}{Adjusted $R^{2}$} & $0.402$ & $0.402$ & $0.399$ & $0.397$
& $0.390$ & $0.383$ & $0.378$ & $0.371$ & $0.367$ & $0.358$ & $0.348$ \\ 
\multicolumn{1}{l}{Number of observations} & $5,129$ & $4,846$ & $4,563$ & $%
4,281$ & $4,007$ & $3,738$ & $3,472$ & $3,211$ & $2,954$ & $2,705$ & $2,476$
\\ \hline \hline
LHS variable: & \multicolumn{11}{c}{Log(EBIT)} \\ 
\cline{2-3}\cline{3-3}\cline{4-6}\cline{7-12}
Lag of RHS variables: & $0$ & $1$ & $2$ & $3$ & $4$ & $5$ & $6$ & $7$ & $8$
& $9$ & $10$ \\ \hline
\multicolumn{1}{l}{Lagged flare length} & $0.14$ & $0.12$ & $0.12$ & $0.08$
& $0.08$ & $0.05$ & $0.05$ & $0.04$ & $0.06$ & $0.05$ & $0.02$ \\ 
\multicolumn{1}{l}{} & $\left( 0.02\right) $ & $\left( 0.02\right) $ & $%
\left( 0.02\right) $ & $\left( 0.02\right) $ & $\left( 0.02\right) $ & $%
\left( 0.02\right) $ & $\left( 0.02\right) $ & $\left( 0.02\right) $ & $%
\left( 0.02\right) $ & $\left( 0.03\right) $ & $\left( 0.03\right) $ \\ 
\multicolumn{1}{l}{Lagged islands only} & $-0.21$ & $-0.21$ & $-0.24$ & $%
-0.32$ & $-0.28$ & $-0.29$ & $-0.22$ & $-0.21$ & $-0.17$ & $-0.19$ & $-0.24$
\\ 
\multicolumn{1}{l}{} & $\left( 0.07\right) $ & $\left( 0.07\right) $ & $%
\left( 0.07\right) $ & $\left( 0.06\right) $ & $\left( 0.06\right) $ & $%
\left( 0.07\right) $ & $\left( 0.07\right) $ & $\left( 0.06\right) $ & $%
\left( 0.06\right) $ & $\left( 0.06\right) $ & $\left( 0.06\right) $ \\ 
\multicolumn{1}{l}{Constant} & $5.22$ & $5.30$ & $5.37$ & $5.48$ & $5.53$ & $%
5.62$ & $5.67$ & $5.73$ & $5.77$ & $5.83$ & $5.91$ \\ 
\multicolumn{1}{l}{} & $\left( 0.02\right) $ & $\left( 0.02\right) $ & $%
\left( 0.02\right) $ & $\left( 0.02\right) $ & $\left( 0.02\right) $ & $%
\left( 0.02\right) $ & $\left( 0.03\right) $ & $\left( 0.03\right) $ & $%
\left( 0.03\right) $ & $\left( 0.03\right) $ & $\left( 0.03\right) $ \\ 
\multicolumn{1}{l}{Firm fixed effects} & Yes & Yes & Yes & Yes & Yes & Yes & 
Yes & Yes & Yes & Yes & Yes \\ 
\multicolumn{1}{l}{Within-firm $R^{2}$} & $0.026$ & $0.023$ & $0.026$ & $%
0.025$ & $0.024$ & $0.018$ & $0.013$ & $0.012$ & $0.013$ & $0.014$ & $0.014$
\\ 
\multicolumn{1}{l}{Between-firm $R^{2}$} & $0.221$ & $0.177$ & $0.128$ & $%
0.006$ & $0.017$ & $0.013$ & $0.003$ & $0.011$ & $0.003$ & $0.004$ & $0.072$
\\ 
\multicolumn{1}{l}{Overall $R^{2}$} & $0.082$ & $0.059$ & $0.046$ & $0.007$
& $0.011$ & $0.000$ & $0.000$ & $0.000$ & $0.004$ & $0.000$ & $0.011$ \\ 
\multicolumn{1}{l}{Number of observations} & $5,129$ & $4,846$ & $4,563$ & $%
4,281$ & $4,007$ & $3,738$ & $3,472$ & $3,211$ & $2,954$ & $2,705$ & $2,476$
\\ 
\multicolumn{1}{l}{Number of firms} & $314$ & $309$ & $306$ & $301$ & $299$
& $293$ & $285$ & $277$ & $260$ & $247$ & $234$ \\ \hline \hline
\end{tabular}
\begin{minipage}{475pt}
{\fontsize{9pt}{9pt}\selectfont \smallskip  \textit{Note}: Standard errors are in parentheses.}
\end{minipage}
\end{center}
\end{table}

\begin{table}[tbh]
\caption{Market-Value Regression Using 1980--2005 Panel Data}
\label{Table - Panel Mcap Reg}
\begin{center}
\fontsize{9pt}{11pt}\selectfont%
\begin{tabular}{cccccccccccc}
\hline \hline
LHS variable: & \multicolumn{11}{c}{Log(Market value)} \\ 
\cline{2-4}\cline{3-4}\cline{5-7}\cline{8-12}
Lag of RHS variables: & $0$ & $1$ & $2$ & $3$ & $4$ & $5$ & $6$ & $7$ & $8$
& $9$ & $10$ \\ \hline
\multicolumn{1}{l}{Lagged flare length} & $0.80$ & $0.79$ & $0.79$ & $0.79$
& $0.81$ & $0.83$ & $0.85$ & $0.84$ & $0.85$ & $0.88$ & $0.88$ \\ 
\multicolumn{1}{l}{} & $\left( 0.02\right) $ & $\left( 0.02\right) $ & $%
\left( 0.02\right) $ & $\left( 0.03\right) $ & $\left( 0.03\right) $ & $%
\left( 0.03\right) $ & $\left( 0.03\right) $ & $\left( 0.03\right) $ & $%
\left( 0.03\right) $ & $\left( 0.04\right) $ & $\left( 0.04\right) $ \\ 
\multicolumn{1}{l}{Lagged islands only} & $1.40$ & $1.41$ & $1.42$ & $1.43$
& $1.40$ & $1.39$ & $1.38$ & $1.38$ & $1.33$ & $1.35$ & $1.29$ \\ 
\multicolumn{1}{l}{} & $\left( 0.08\right) $ & $\left( 0.08\right) $ & $%
\left( 0.08\right) $ & $\left( 0.08\right) $ & $\left( 0.08\right) $ & $%
\left( 0.08\right) $ & $\left( 0.08\right) $ & $\left( 0.08\right) $ & $%
\left( 0.09\right) $ & $\left( 0.09\right) $ & $\left( 0.09\right) $ \\ 
\multicolumn{1}{l}{Constant} & $6.95$ & $7.01$ & $7.09$ & $7.16$ & $7.22$ & $%
7.28$ & $7.34$ & $7.40$ & $7.47$ & $7.51$ & $7.58$ \\ 
\multicolumn{1}{l}{} & $\left( 0.03\right) $ & $\left( 0.03\right) $ & $%
\left( 0.03\right) $ & $\left( 0.03\right) $ & $\left( 0.03\right) $ & $%
\left( 0.04\right) $ & $\left( 0.04\right) $ & $\left( 0.04\right) $ & $%
\left( 0.04\right) $ & $\left( 0.04\right) $ & $\left( 0.05\right) $ \\ 
\multicolumn{1}{l}{Firm fixed effects} & No & No & No & No & No & No & No & 
No & No & No & No \\ 
\multicolumn{1}{l}{$R^{2}$} & $0.208$ & $0.202$ & $0.200$ & $0.197$ & $0.193$
& $0.193$ & $0.189$ & $0.185$ & $0.183$ & $0.184$ & $0.174$ \\ 
\multicolumn{1}{l}{Adjusted $R^{2}$} & $0.208$ & $0.202$ & $0.200$ & $0.197$
& $0.193$ & $0.193$ & $0.189$ & $0.185$ & $0.183$ & $0.184$ & $0.173$ \\ 
\multicolumn{1}{l}{Number of observations} & $5,183$ & $4,970$ & $4,704$ & $%
4,416$ & $4,129$ & $3,842$ & $3,561$ & $3,285$ & $3,018$ & $2,769$ & $2,533$
\\ \hline \hline
LHS variable: & \multicolumn{11}{c}{Log(Market value)} \\ \cline{2-12}
Lag of RHS variables: & $0$ & $1$ & $2$ & $3$ & $4$ & $5$ & $6$ & $7$ & $8$
& $9$ & $10$ \\ \hline
\multicolumn{1}{l}{Lagged flare length} & $0.13$ & $0.09$ & $0.09$ & $0.08$
& $0.06$ & $0.06$ & $0.04$ & $0.04$ & $0.05$ & $0.06$ & $0.06$ \\ 
& $\left( 0.03\right) $ & $\left( 0.03\right) $ & $\left( 0.03\right) $ & $%
\left( 0.03\right) $ & $\left( 0.03\right) $ & $\left( 0.03\right) $ & $%
\left( 0.04\right) $ & $\left( 0.04\right) $ & $\left( 0.04\right) $ & $%
\left( 0.05\right) $ & $\left( 0.05\right) $ \\ 
\multicolumn{1}{l}{Lagged islands only} & $-0.23$ & $-0.25$ & $-0.24$ & $%
-0.23$ & $-0.29$ & $-0.30$ & $-0.32$ & $-0.31$ & $-0.35$ & $-0.32$ & $-0.38$
\\ 
\multicolumn{1}{l}{} & $\left( 0.08\right) $ & $\left( 0.08\right) $ & $%
\left( 0.08\right) $ & $\left( 0.08\right) $ & $\left( 0.08\right) $ & $%
\left( 0.09\right) $ & $\left( 0.09\right) $ & $\left( 0.09\right) $ & $%
\left( 0.10\right) $ & $\left( 0.10\right) $ & $\left( 0.11\right) $ \\ 
\multicolumn{1}{l}{Lagged log patents} & $0.62$ & $0.63$ & $0.62$ & $0.61$ & 
$0.62$ & $0.61$ & $0.61$ & $0.60$ & $0.59$ & $0.58$ & $0.58$ \\ 
\multicolumn{1}{l}{} & $\left( 0.01\right) $ & $\left( 0.01\right) $ & $%
\left( 0.02\right) $ & $\left( 0.02\right) $ & $\left( 0.02\right) $ & $%
\left( 0.02\right) $ & $\left( 0.02\right) $ & $\left( 0.02\right) $ & $%
\left( 0.02\right) $ & $\left( 0.02\right) $ & $\left( 0.02\right) $ \\ 
\multicolumn{1}{l}{Constant} & $5.80$ & $5.85$ & $5.95$ & $6.05$ & $6.15$ & $%
6.25$ & $6.35$ & $6.45$ & $6.54$ & $6.64$ & $6.74$ \\ 
\multicolumn{1}{l}{} & $\left( 0.04\right) $ & $\left( 0.04\right) $ & $%
\left( 0.04\right) $ & $\left( 0.04\right) $ & $\left( 0.04\right) $ & $%
\left( 0.04\right) $ & $\left( 0.04\right) $ & $\left( 0.05\right) $ & $%
\left( 0.05\right) $ & $\left( 0.05\right) $ & $\left( 0.05\right) $ \\ 
\multicolumn{1}{l}{Firm fixed effects} & No & No & No & No & No & No & No & 
No & No & No & No \\ 
\multicolumn{1}{l}{$R^{2}$} & $0.411$ & $0.413$ & $0.407$ & $0.400$ & $0.395$
& $0.387$ & $0.378$ & $0.366$ & $0.359$ & $0.346$ & $0.333$ \\ 
\multicolumn{1}{l}{Adjusted $R^{2}$} & $0.410$ & $0.412$ & $0.407$ & $0.400$
& $0.394$ & $0.387$ & $0.378$ & $0.365$ & $0.359$ & $0.345$ & $0.333$ \\ 
\multicolumn{1}{l}{Number of observations} & $5,183$ & $4,970$ & $4,704$ & $%
4,416$ & $4,129$ & $3,842$ & $3,561$ & $3,285$ & $3,018$ & $2,769$ & $2,533$
\\ \hline \hline
LHS variable: & \multicolumn{11}{c}{Log(Market value)} \\ 
\cline{2-3}\cline{3-3}\cline{4-6}\cline{7-12}
Lag of RHS variables: & $0$ & $1$ & $2$ & $3$ & $4$ & $5$ & $6$ & $7$ & $8$
& $9$ & $10$ \\ \hline
\multicolumn{1}{l}{Lagged flare length} & $0.19$ & $0.17$ & $0.16$ & $0.15$
& $0.13$ & $0.13$ & $0.12$ & $0.14$ & $0.13$ & $0.13$ & $0.10$ \\ 
\multicolumn{1}{l}{} & $\left( 0.02\right) $ & $\left( 0.02\right) $ & $%
\left( 0.02\right) $ & $\left( 0.02\right) $ & $\left( 0.02\right) $ & $%
\left( 0.02\right) $ & $\left( 0.03\right) $ & $\left( 0.03\right) $ & $%
\left( 0.03\right) $ & $\left( 0.03\right) $ & $\left( 0.03\right) $ \\ 
\multicolumn{1}{l}{Lagged islands only} & $-0.31$ & $-0.31$ & $-0.32$ & $%
-0.29$ & $-0.34$ & $-0.31$ & $-0.30$ & $-0.22$ & $-0.24$ & $-0.18$ & $-0.24$
\\ 
\multicolumn{1}{l}{} & $\left( 0.07\right) $ & $\left( 0.07\right) $ & $%
\left( 0.07\right) $ & $\left( 0.07\right) $ & $\left( 0.07\right) $ & $%
\left( 0.07\right) $ & $\left( 0.07\right) $ & $\left( 0.07\right) $ & $%
\left( 0.07\right) $ & $\left( 0.07\right) $ & $\left( 0.06\right) $ \\ 
\multicolumn{1}{l}{Constant} & $7.52$ & $7.59$ & $7.68$ & $7.76$ & $7.84$ & $%
7.92$ & $7.99$ & $8.04$ & $8.12$ & $8.17$ & $8.27$ \\ 
\multicolumn{1}{l}{} & $\left( 0.02\right) $ & $\left( 0.02\right) $ & $%
\left( 0.02\right) $ & $\left( 0.02\right) $ & $\left( 0.03\right) $ & $%
\left( 0.03\right) $ & $\left( 0.03\right) $ & $\left( 0.03\right) $ & $%
\left( 0.03\right) $ & $\left( 0.03\right) $ & $\left( 0.03\right) $ \\ 
\multicolumn{1}{l}{Firm fixed effects} & Yes & Yes & Yes & Yes & Yes & Yes & 
Yes & Yes & Yes & Yes & Yes \\ 
\multicolumn{1}{l}{Within-firm $R^{2}$} & $0.043$ & $0.039$ & $0.039$ & $%
0.035$ & $0.036$ & $0.037$ & $0.034$ & $0.032$ & $0.035$ & $0.031$ & $0.031$
\\ 
\multicolumn{1}{l}{Between-firm $R^{2}$} & $0.170$ & $0.140$ & $0.109$ & $%
0.098$ & $0.043$ & $0.043$ & $0.019$ & $0.053$ & $0.035$ & $0.054$ & $0.005$
\\ 
\multicolumn{1}{l}{Overall $R^{2}$} & $0.088$ & $0.069$ & $0.054$ & $0.049$
& $0.027$ & $0.027$ & $0.018$ & $0.035$ & $0.027$ & $0.036$ & $0.009$ \\ 
\multicolumn{1}{l}{Number of observations} & $5,183$ & $4,970$ & $4,704$ & $%
4,416$ & $4,129$ & $3,842$ & $3,561$ & $3,285$ & $3,018$ & $2,769$ & $2,533$
\\ 
\multicolumn{1}{l}{Number of firms} & $298$ & $296$ & $294$ & $291$ & $289$
& $283$ & $278$ & $269$ & $251$ & $238$ & $227$ \\ \hline \hline
\end{tabular}
\begin{minipage}{475pt}
{\fontsize{9pt}{9pt}\selectfont \smallskip  \textit{Note}: Standard errors are in parentheses.}
\end{minipage}
\end{center}
\end{table}

The top panels of these tables show results that are similar to columns 1, 4, and 7 of the baseline results in Table \ref{Table - Regressions} (in section 5.2). Likewise, the middle panels show results that are comparable to columns 3, 6, and 9 of Table \ref{Table - Regressions}. The statistical relationships are surprisingly persistent over time. Even a decade-long lag ($\tau=10$) leads to relatively minor changes in magnitude. Thus, the length of unique technological trajectory seems to capture relevant information for long-term firm dynamics.

The bottom panels of Tables \ref{Table - Panel Revenue Reg}--\ref{Table - Panel Mcap Reg} show the inclusion of firm fixed effects ``cannibalizes'' the contributions of flares and islands, which suggests high collinearity between them.\footnote{In particular, the coefficients for islands-only observations become either statistically insignificant or negative. Because only 10\% of firm-years is islands-only and they tend to be special cases (either large conglomerates or extremely niche firms), their interpretation with fixed effects is difficult.} These results suggest flares and islands are closely related to persistent firm heterogeneity, such as firms' underlying R\&D capabilities $\omega^{x}_{i,t}$.

\paragraph{Out-of-Sample Predictions.}

Table \ref{Table - Forecasting} reports the fit of out-of-sample predictions. The top panel uses the 1980--2004 data for the estimation of equation \ref{eq - panel regression} with $\tau = 1$ and assesses the fit of its predictions in the 2005 subsample. The middle and bottom panels do the same with $\tau=5$ and $\tau=10$, respectively, with correspondingly smaller estimation subsamples (1980--2000 and 1980--1995).

Three findings emerge. First, columns 1, 4, and 7 of each panel show the correlation between the predicted and actual values is 0.425--0.532. That is, the out-of-sample predictions based solely on flares and islands can achieve reasonably high correlations with the prediction targets. Second, the difference between columns 2 and 3 suggests the inclusion of flares and islands improves the prediction of revenues above and beyond what patent count alone can achieve (an increase from 0.545 to 0.571). Qualitatively similar results hold for EBIT (columns 5 and 6) and market value (8 and 9), albeit with smaller magnitudes.

Third, we find similar patterns with an alternative measure of fit, the mean squared error,
\begin{equation}
MSE=\frac{1}{N_{p}}\sum_{obs=1}^{N_{p}}\left( \ln \left( y_{obs}\right) -\ln
\left( \hat{y}_{obs}\right) \right) ^{2},
\label{eq - MSE}
\end{equation}%
where $N_p$ is the size of the prediction subsample, $obs$ is an index for observations, and $\hat{y}_{obs}$ is the predicted value of a performance measure. We can assess the contributions of flares and islands by comparing the MSEs in columns 1, 4, and 7 with their respective ``null'' predictions, which we define as the means of the estimation subsample (i.e., the fitted value in a ``regression'' with no regressors other than a constant). The ``improvement over null (\%)'' rows report the reductions in MSEs as a percentage of the null MSE. These improvements are sizeable: 13.6\%--15.5\% in the top panel, 21.3\%--33.5\% in the middle panel, and 34.0\%--43.8\% in the bottom panel.

In summary, despite being relatively coarse, discrete, slow-moving measures of firms' activities, flares and islands seem to do a surprisingly good job in out-of-sample predictions.

\begin{table}[tbh]
\caption{Out-of-Sample Predictions}
\begin{center}
\fontsize{9pt}{11pt}\selectfont%
\begin{tabular}{cccccccccccc}
\hline \hline
LHS variable: & \multicolumn{3}{c}{Log(Revenue)} &  & \multicolumn{3}{c}{
Log(EBIT)} &  & \multicolumn{3}{c}{Log(Market value)} \\ 
\cline{2-4}\cline{3-4}\cline{6-8}\cline{10-12}
& (1) & (2) & (3) &  & (4) & (5) & (6) &  & (7) & (8) & (9) \\ \hline
\multicolumn{1}{l}{Lag (1) flare length} & $0.92$ & $-$ & $0.34$ &  & $0.81$
& $-$ & $0.20$ &  & $0.81$ & $-$ & $0.09$ \\ 
\multicolumn{1}{l}{} & $\left( 0.02\right) $ & $\left( -\right) $ & $\left(
0.03\right) $ &  & $\left( 0.02\right) $ & $\left( -\right) $ & $\left(
0.03\right) $ &  & $\left( 0.02\right) $ & $\left( -\right) $ & $\left(
0.03\right) $ \\ 
\multicolumn{1}{l}{Lag (1) islands only} & $2.15$ & $-$ & $0.75$ &  & $1.74$
& $-$ & $0.26$ &  & $1.46$ & $-$ & $-0.23$ \\ 
\multicolumn{1}{l}{} & $\left( 0.08\right) $ & $\left( -\right) $ & $\left(
0.08\right) $ &  & $\left( 0.08\right) $ & $\left( -\right) $ & $\left(
0.08\right) $ &  & $\left( 0.08\right) $ & $\left( -\right) $ & $\left(
0.08\right) $ \\ 
\multicolumn{1}{l}{Lag (1) log patents} & $-$ & $0.65$ & $0.51$ &  & $-$ & $%
0.62$ & $0.54$ &  & $-$ & $0.65$ & $0.63$ \\ 
\multicolumn{1}{l}{} & $\left( -\right) $ & $\left( 0.01\right) $ & $\left(
0.02\right) $ &  & $\left( -\right) $ & $\left( 0.01\right) $ & $\left(
0.02\right) $ &  & $\left( -\right) $ & $\left( 0.01\right) $ & $\left(
0.02\right) $ \\ 
\multicolumn{1}{l}{Constant} & $6.57$ & $5.57$ & $5.66$ &  & $4.59$ & $3.56$
& $3.60$ &  & $6.95$ & $5.78$ & $5.79$ \\ 
\multicolumn{1}{l}{} & $\left( 0.03\right) $ & $\left( 0.04\right) $ & $%
\left( 0.04\right) $ &  & $\left( 0.03\right) $ & $\left( 0.04\right) $ & $%
\left( 0.04\right) $ &  & $\left( 0.03\right) $ & $\left( 0.04\right) $ & $%
\left( 0.04\right) $ \\ 
\multicolumn{1}{l}{Estimation subsample} & \multicolumn{3}{c}{1980--2004} & 
& \multicolumn{3}{c}{1980--2004} &  & \multicolumn{3}{c}{1980--2004} \\ 
\multicolumn{1}{l}{In-sample $R^{2}$} & $0.299$ & $0.412$ & $0.433$ &  & $%
0.244$ & $0.401$ & $0.409$ &  & $0.206$ & $0.414$ & $0.418$ \\ 
\multicolumn{1}{l}{In-sample adjusted $R^{2}$} & $0.299$ & $0.412$ & $0.432$
&  & $0.243$ & $0.401$ & $0.408$ &  & $0.206$ & $0.414$ & $0.418$ \\ 
\multicolumn{1}{l}{In-sample observations} & $4,726$ & $4,726$ & $4,726$ & 
& $4,590$ & $4,590$ & $4,590$ &  & $4,706$ & $4,706$ & $4,706$ \\ 
\multicolumn{1}{l}{Prediction subsample} & \multicolumn{3}{c}{2005} &  & 
\multicolumn{3}{c}{2005} &  & \multicolumn{3}{c}{2005} \\ 
\multicolumn{1}{l}{Out-of-sample correlation} & $0.477$ & $0.545$ & $0.571$
&  & $0.439$ & $0.572$ & $0.579$ &  & $0.437$ & $0.598$ & $0.599$ \\ 
\multicolumn{1}{l}{Out-of-sample MSE} & $2.978$ & $2.546$ & $2.500$ &  & $%
3.315$ & $2.681$ & $2.668$ &  & $3.476$ & $2.764$ & $2.727$ \\ 
\multicolumn{1}{l}{Out-of-sample observations} & $279$ & $279$ & $279$ &  & $%
256$ & $256$ & $256$ &  & $264$ & $264$ & $264$ \\ \hline \hline
LHS variable: & \multicolumn{3}{c}{Log(Revenue)} &  & \multicolumn{3}{c}{
Log(EBIT)} &  & \multicolumn{3}{c}{Log(Market value)} \\ 
\cline{2-4}\cline{3-4}\cline{6-8}\cline{10-12}
& (1) & (2) & (3) &  & (4) & (5) & (6) &  & (7) & (8) & (9) \\ \hline
\multicolumn{1}{l}{Lag (5) flare length} & $1.04$ & $-$ & $0.28$ &  & $0.96$
& $-$ & $0.18$ &  & $0.98$ & $-$ & $0.09$ \\ 
\multicolumn{1}{l}{} & $\left( 0.03\right) $ & $\left( -\right) $ & $\left(
0.04\right) $ &  & $\left( 0.04\right) $ & $\left( -\right) $ & $\left(
0.04\right) $ &  & $\left( 0.04\right) $ & $\left( -\right) $ & $\left(
0.05\right) $ \\ 
\multicolumn{1}{l}{Lag (5) islands only} & $1.94$ & $-$ & $0.39$ &  & $1.70$
& $-$ & $0.09$ &  & $1.54$ & $-$ & $-0.27$ \\ 
\multicolumn{1}{l}{} & $\left( 0.08\right) $ & $\left( -\right) $ & $\left(
0.09\right) $ &  & $\left( 0.08\right) $ & $\left( -\right) $ & $\left(
0.10\right) $ &  & $\left( 0.09\right) $ & $\left( -\right) $ & $\left(
0.10\right) $ \\ 
\multicolumn{1}{l}{Lag (5) log patents} & $-$ & $0.63$ & $0.53$ &  & $-$ & $%
0.60$ & $0.56$ &  & $-$ & $0.63$ & $0.63$ \\ 
\multicolumn{1}{l}{} & $\left( -\right) $ & $\left( 0.01\right) $ & $\left(
0.02\right) $ &  & $\left( -\right) $ & $\left( 0.01\right) $ & $\left(
0.02\right) $ &  & $\left( -\right) $ & $\left( 0.02\right) $ & $\left(
0.02\right) $ \\ 
\multicolumn{1}{l}{Constant} & $6.76$ & $5.95$ & $5.99$ &  & $4.66$ & $3.83$
& $3.85$ &  & $6.96$ & $6.06$ & $6.05$ \\ 
\multicolumn{1}{l}{} & $\left( 0.04\right) $ & $\left( 0.05\right) $ & $%
\left( 0.05\right) $ &  & $\left( 0.04\right) $ & $\left( 0.05\right) $ & $%
\left( 0.05\right) $ &  & $\left( 0.04\right) $ & $\left( 0.05\right) $ & $%
\left( 0.05\right) $ \\ 
\multicolumn{1}{l}{Estimation subsample} & \multicolumn{3}{c}{1980--2000} & 
& \multicolumn{3}{c}{1980--2000} &  & \multicolumn{3}{c}{1980--2000} \\ 
\multicolumn{1}{l}{In-sample $R^{2}$} & $0.298$ & $0.428$ & $0.437$ &  & $%
0.248$ & $0.396$ & $0.401$ &  & $0.220$ & $0.399$ & $0.405$ \\ 
\multicolumn{1}{l}{In-sample adjusted $R^{2}$} & $0.297$ & $0.428$ & $0.437$
&  & $0.247$ & $0.396$ & $0.400$ &  & $0.220$ & $0.399$ & $0.405$ \\ 
\multicolumn{1}{l}{In-sample observations} & $2,762$ & $2,762$ & $2,762$ & 
& $2,640$ & $2,640$ & $2,640$ &  & $2,630$ & $2,630$ & $2,6530$ \\ 
\multicolumn{1}{l}{Prediction subsample} & \multicolumn{3}{c}{2001--2005} & 
& \multicolumn{3}{c}{2001--2005} &  & \multicolumn{3}{c}{2001--2005} \\ 
\multicolumn{1}{l}{Out-of-sample correlation} & $0.532$ & $0.581$ & $0.603$
&  & $0.460$ & $0.581$ & $0.587$ &  & $0.425$ & $0.587$ & $0.586$ \\ 
\multicolumn{1}{l}{Out-of-sample MSE} & $2.465$ & $2.161$ & $2.087$ &  & $%
2.953$ & $2.316$ & $2.302$ &  & $3.386$ & $2.474$ & $2.461$ \\ 
\multicolumn{1}{l}{Out-of-sample observations} & $1,286$ & $1,286$ & $1,286$
&  & $1,098$ & $1,098$ & $1,098$ &  & $1,212$ & $1,212$ & $1,212$ \\ 
\hline \hline
LHS variable: & \multicolumn{3}{c}{Log(Revenue)} &  & \multicolumn{3}{c}{
Log(EBIT)} &  & \multicolumn{3}{c}{Log(Market value)} \\ 
\cline{2-4}\cline{3-4}\cline{6-8}\cline{10-12}
& (1) & (2) & (3) &  & (4) & (5) & (6) &  & (7) & (8) & (9) \\ \hline
\multicolumn{1}{l}{Lag (10) flare length} & $1.05$ & $-$ & $0.20$ &  & $1.00$
& $-$ & $0.06$ &  & $0.97$ & $-$ & $0.08$ \\ 
\multicolumn{1}{l}{} & $\left( 0.08\right) $ & $\left( -\right) $ & $\left(
0.09\right) $ &  & $\left( 0.09\right) $ & $\left( -\right) $ & $\left(
0.10\right) $ &  & $\left( 0.09\right) $ & $\left( -\right) $ & $\left(
0.09\right) $ \\ 
\multicolumn{1}{l}{Lag (10) islands only} & $1.55$ & $-$ & $0.21$ &  & $1.51$
& $-$ & $0.00$ &  & $1.41$ & $-$ & $-0.02$ \\ 
\multicolumn{1}{l}{} & $\left( 0.12\right) $ & $\left( -\right) $ & $\left(
0.13\right) $ &  & $\left( 0.13\right) $ & $\left( -\right) $ & $\left(
0.14\right) $ &  & $\left( 0.13\right) $ & $\left( -\right) $ & $\left(
0.14\right) $ \\ 
\multicolumn{1}{l}{Lag (10) log patents} & $-$ & $0.56$ & $0.52$ &  & $-$ & $%
0.58$ & $0.58$ &  & $-$ & $0.56$ & $0.55$ \\ 
\multicolumn{1}{l}{} & $\left( -\right) $ & $\left( 0.02\right) $ & $\left(
0.03\right) $ &  & $\left( -\right) $ & $\left( 0.03\right) $ & $\left(
0.03\right) $ &  & $\left( -\right) $ & $\left( 0.02\right) $ & $\left(
0.03\right) $ \\ 
\multicolumn{1}{l}{Constant} & $6.97$ & $6.34$ & $6.32$ &  & $4.72$ & $4.00$
& $3.99$ &  & $7.03$ & $6.32$ & $6.32$ \\ 
\multicolumn{1}{l}{} & $\left( 0.07\right) $ & $\left( 0.07\right) $ & $%
\left( 0.08\right) $ &  & $\left( 0.08\right) $ & $\left( 0.08\right) $ & $%
\left( 0.08\right) $ &  & $\left( 0.08\right) $ & $\left( 0.08\right) $ & $%
\left( 0.08\right) $ \\ 
\multicolumn{1}{l}{Estimation subsample} & \multicolumn{3}{c}{1980--1995} & 
& \multicolumn{3}{c}{1980--1995} &  & \multicolumn{3}{c}{1980--1995} \\ 
\multicolumn{1}{l}{In-sample $R^{2}$} & $0.228$ & $0.422$ & $0.426$ &  & $%
0.184$ & $0.396$ & $0.396$ &  & $0.179$ & $0.392$ & $0.393$ \\ 
\multicolumn{1}{l}{In-sample adjusted $R^{2}$} & $0.226$ & $0.422$ & $0.424$
&  & $0.182$ & $0.395$ & $0.394$ &  & $0.177$ & $0.391$ & $0.391$ \\ 
\multicolumn{1}{l}{In-sample observations} & $806$ & $806$ & $806$ &  & $791$
& $791$ & $791$ &  & $793$ & $793$ & $793$ \\ 
\multicolumn{1}{l}{Prediction subsample} & \multicolumn{3}{c}{1996--2005} & 
& \multicolumn{3}{c}{1996--2005} &  & \multicolumn{3}{c}{1996--2005} \\ 
\multicolumn{1}{l}{Out-of-sample correlation} & $0.526$ & $0.611$ & $0.618$
&  & $0.459$ & $0.569$ & \thinspace $0.572$ &  & $0.429$ & $0.559$ & $0.562$
\\ 
\multicolumn{1}{l}{Out-of-sample MSE} & $2.268$ & $1.919$ & $1.893$ &  & $%
2.725$ & $2.302$ & $2.286$ &  & $3.417$ & $2.854$ & $2.821$ \\ 
\multicolumn{1}{l}{Out-of-sample observations} & $1,845$ & $1,845$ & $1,845$
&  & $1,685$ & $1,685$ & $1,685$ &  & $1,740$ & $1,740$ & $1,740$ \\ 
\hline \hline
\end{tabular}
\begin{minipage}{475pt}
{\fontsize{9pt}{9pt}\selectfont \smallskip  \textit{Note}: Standard errors are in parentheses. Out-of-sample MSE is the mean of squared prediction errors.}
\end{minipage}
\end{center}
\label{Table - Forecasting}
\end{table}

\clearpage

\section*{Appendix G \ Comparison with Other Measures}

This section reports full results for section 5.5, which compares flare length with other measures: five network-centrality measures and the Jaffe measure of technological distance.

\subsection*{G.1 \ Network Centrality Measures}

Flare length is not the only way to measure firms’ innovation histories on a graph. Various measures of network centrality offer more conventional alternatives. We consider five of them: degree, closeness, harmonic, betweenness, and eigenvector centralities.

\begin{table}[tbh]
\caption{Correlations between Firms' Performances, Flare Length, and
Centrality Measures}
\begin{center}
\fontsize{9pt}{11pt}\selectfont%
\begin{tabular}{cccccccc}
\hline \hline
Pair-wise correlation & Flare & Number of & Degree & Closeness & Harmonic & 
Betweenness & Eigenvector \\ 
coefficients & length & nodes & centrality & centrality & centrality & 
centrality & centrality \\ \hline
\multicolumn{1}{l}{Log(Revenue)} & $0.506$ & $0.549$ & $-0.432$ & $-0.461$ & 
$-0.462$ & $-0.371$ & $-0.218$ \\ 
\multicolumn{1}{l}{Log(EBIT)} & $0.499$ & $0.519$ & $-0.421$ & $-0.403$ & $%
-0.403$ & $-0.404$ & $-0.183$ \\ 
\multicolumn{1}{l}{Log(Market value)} & $0.476$ & $0.516$ & $-0.417$ & $%
-0.423$ & $-0.421$ & $-0.427$ & $-0.206$ \\ 
\multicolumn{1}{l}{Flare length} & $1.000$ & $0.844$ & $-0.687$ & $-0.696$ & 
$-0.719$ & $-0.657$ & $-0.601$ \\ 
\multicolumn{1}{l}{Number of nodes} & $0.844$ & $1.000$ & $-0.707$ & $-0.844$
& $-0.841$ & $-0.649$ & $-0.577$ \\ \hline \hline
\end{tabular}
\begin{minipage}{475pt}
{\fontsize{9pt}{9pt}\selectfont \smallskip  \textit{Note}: For each centrality measure, we take the minimum of all nodes that contain a given firm. Other summary statistics, such as mean and maximum, correlate less strongly with the performance metrics.}
\end{minipage}
\end{center}
\label{Table - centrality correlations}
\end{table}

\begin{table}[tbh!!!!]
\caption{Firm Count by Number of Nodes}
\begin{center}
\fontsize{9pt}{11pt}\selectfont%
\begin{tabular}{lccccccccccc}
\hline \hline
Number of nodes & 2 & 3 & 4 & 5 & 6 & 7--8 & 9--10 & 11--15 & 16--20 & 21--30 & 31+ \\ \hline
Frequency & $89$ & $26$ & $31$ & $15$ & $35$ & $29$ & $19$ & $24$ & $23$ & $28$ & $11$ \\ 
Percentage & $26.97$ & $7.88$ & $9.39$ & $4.55$ & $10.61$ & $8.79$ & $5.76$
& $7.27$ & $6.97$ & $8.48$ & $3.33$ \\ 
Cumulative \% & $26.97$ & $34.85$ & $44.24$ & $48.79$ & $59.39$ & $68.18$ & $73.94$ & $81.21$ & $88.18$ & $96.67$ & $100.00$ \\ \hline \hline
\end{tabular}
\begin{minipage}{475pt}
{\fontsize{9pt}{9pt}\selectfont \smallskip  \textit{Note}: Some firms appear in more than 30 nodes because each firm-year observation may appear in 1--4 nodes due to the overlap between cover elements.}
\end{minipage}
\end{center}
\label{Table - Number of nodes}
\end{table}

Table \ref{Table - centrality correlations} reports the correlation coefficients between the three performance metrics, flare length, and the five centrality measures. We also include the number of nodes in which each firm appears, which is among the simplest statistics to characterize a firm’s history (Table \ref{Table - Number of nodes} reports its frequency distribution). Three findings emerge. First, flare length is strongly correlated with financial performances (0.506, 0.499, and 0.476). Note the number of nodes shows even stronger correlations, even though it underperforms flare length in regressions (see below). Second, the firms’ centralities are negatively correlated with their financial results, because centrality is an inverse measure of uniqueness in this context. Hence, high centrality means low differentiation and low profitability. Third, their correlations are not as strong as that of flare length (i.e., smaller in absolute value).

Tables \ref{Table - centrality reg (revenue)}--\ref{Table - centrality reg (mcap)} compare flare length and the centrality measures in regressions that control for the firm’s total patent count. Column 1 of each table is our baseline result with flare length and the islands-only dummy variable. Its adjusted $R^2$ is equal to or higher than all other columns’, which suggests flares and islands capture more relevant, original information than the simple count of nodes or centralities. Thus, flare length outperforms these conventional network centralities.\footnote{One can improve the fit of the latter by calculating their demeaned ranges (i.e., maximum minus minimum minus mean). This statistic captures some notion of the length of unique trajectories: the range reflects the length of the firm’s outward/inward move, and the mean reflects its overall position relative to others. However, we find their adjusted $R^2$s are still lower than our baseline results based on flares and islands.}

\begin{table}[tbh]
\caption{Revenue Regression with Centrality Measures}
\begin{center}
\fontsize{9pt}{11pt}\selectfont%
\begin{tabular}{cccccccc}
\hline \hline
LHS variable: & \multicolumn{7}{c}{Log(Revenue)} \\ 
\cline{2-4}\cline{3-4}\cline{5-7}\cline{8-8}
& (1) & (2) & (3) & (4) & (5) & (6) & (7) \\ 
& Baseline &  &  &  &  &  &  \\ \hline
\multicolumn{1}{l}{Flare length} & $0.34$ & $-$ & $-$ & $-$ & $-$ & $-$ & $-$
\\ 
\multicolumn{1}{l}{} & $\left( 0.08\right) $ & $\left( -\right) $ & $\left(
-\right) $ & $\left( -\right) $ & $\left( -\right) $ & $\left( -\right) $ & $%
\left( -\right) $ \\ 
\multicolumn{1}{l}{Islands only} & $0.96$ & $-$ & $-$ & $-$ & $-$ & $-$ & $-$
\\ 
\multicolumn{1}{l}{} & $\left( 0.84\right) $ & $\left( -\right) $ & $\left(
-\right) $ & $\left( -\right) $ & $\left( -\right) $ & $\left( -\right) $ & $%
\left( -\right) $ \\ 
\multicolumn{1}{l}{Number of nodes} & $-$ & $0.06$ & $-$ & $-$ & $-$ & $-$ & 
$-$ \\ 
\multicolumn{1}{l}{} & $\left( -\right) $ & $\left( 0.01\right) $ & $\left(
-\right) $ & $\left( -\right) $ & $\left( -\right) $ & $\left( -\right) $ & $%
\left( -\right) $ \\ 
\multicolumn{1}{l}{Degree centrality} & $-$ & $-$ & $-359.4$ & $-$ & $-$ & $-
$ & $-$ \\ 
\multicolumn{1}{l}{} & $\left( -\right) $ & $\left( -\right) $ & $\left(
96.2\right) $ & $\left( -\right) $ & $\left( -\right) $ & $\left( -\right) $
& $\left( -\right) $ \\ 
\multicolumn{1}{l}{Closeness centrality} & $-$ & $-$ & $-$ & $-15.71$ & $-$
& $-$ & $-$ \\ 
\multicolumn{1}{l}{} & $\left( -\right) $ & $\left( -\right) $ & $\left(
-\right) $ & $\left( 4.18\right) $ & $\left( -\right) $ & $\left( -\right) $
& $\left( -\right) $ \\ 
\multicolumn{1}{l}{Harmonic centrality} & $-$ & $-$ & $-$ & $-$ & $-0.010$ & 
$-$ & $-$ \\ 
\multicolumn{1}{l}{} & $\left( -\right) $ & $\left( -\right) $ & $\left(
-\right) $ & $\left( -\right) $ & $\left( 0.003\right) $ & $\left( -\right) $
& $\left( -\right) $ \\ 
\multicolumn{1}{l}{Betweenness centrality} & $-$ & $-$ & $-$ & $-$ & $-$ & $%
-104.2$ & $-$ \\ 
\multicolumn{1}{l}{} & $\left( -\right) $ & $\left( -\right) $ & $\left(
-\right) $ & $\left( -\right) $ & $\left( -\right) $ & $\left( 55.5\right) $ & $\left( -\right) $
\\ 
\multicolumn{1}{l}{Eigenvector centrality} & $-$ & $-$ & $-$ & $-$ & $-$ & $-
$ & $-25.31$ \\ 
\multicolumn{1}{l}{} & $\left( -\right) $ & $\left( -\right) $ & $\left(
-\right) $ & $\left( -\right) $ & $\left( -\right) $ & $\left( -\right) $ & $\left( 16.59\right) $
\\ 
\multicolumn{1}{l}{Log(Patents)} & $0.28$ & $0.24$ & $0.33$ & $0.31$ & $0.31$
& $0.36$ & $0.39$ \\ 
\multicolumn{1}{l}{} & $\left( 0.04\right) $ & $\left( 0.05\right) $ & $%
\left( 0.04\right) $ & $\left( 0.04\right) $ & $\left( 0.04\right) $ & $\left( 0.04\right) $ & $\left( 0.03\right) $
\\ 
\multicolumn{1}{l}{Constant} & $6.08$ & $6.07$ & $7.35$ & $7.46$ & $7.49$ & $%
6.08$ & $5.95$ \\ 
\multicolumn{1}{l}{} & $\left( 0.22\right) $ & $\left( 0.22\right) $ & $%
\left( 0.50\right) $ & $\left( 0.52\right) $ & $\left( 0.51\right) $ & $\left( 0.30\right) $ & $\left( 0.28\right) $
\\ 
\multicolumn{1}{l}{$R^{2}$} & $0.346$ & $0.338$ & $0.334$ & $0.334$ & $0.337$
& $0.312$ & $0.310$ \\ 
\multicolumn{1}{l}{Adjusted $R^{2}$} & $0.340$ & $0.334$ & $0.330$ & $0.330$
& $0.332$ & $0.308$ & $0.306$ \\ 
\multicolumn{1}{l}{Number of observations} & $328$ & $328$ & $328$ & $328$ & 
$328$ & $328$ & $328$ \\ \hline \hline
\end{tabular}
\begin{minipage}{400pt}
{\fontsize{9pt}{9pt}\selectfont \smallskip  \textit{Note}: For each centrality measure, we take the minimum of all nodes that contain a given firm. Other summary statistics, such as mean and maximum, correlate less strongly with the performance metrics. Standard errors are in parentheses.}
\end{minipage}
\end{center}
\label{Table - centrality reg (revenue)}
\end{table}

\begin{table}[tbh]
\caption{EBIT Regression with Centrality Measures}
\begin{center}
\fontsize{9pt}{11pt}\selectfont%
\begin{tabular}{cccccccc}
\hline \hline
LHS variable: & \multicolumn{7}{c}{Log(EBIT)} \\ 
\cline{2-4}\cline{3-4}\cline{5-7}\cline{8-8}
& (1) & (2) & (3) & (4) & (5) & (6) & (7) \\ 
& Baseline &  &  &  &  &  &  \\ \hline
\multicolumn{1}{l}{Flare length} & $0.33$ & $-$ & $-$ & $-$ & $-$ & $-$ & $-$
\\ 
\multicolumn{1}{l}{} & $\left( 0.08\right) $ & $\left( -\right) $ & $\left(
-\right) $ & $\left( -\right) $ & $\left( -\right) $ & $\left( -\right) $ & $%
\left( -\right) $ \\ 
\multicolumn{1}{l}{Islands only} & $0.94$ & $-$ & $-$ & $-$ & $-$ & $-$ & $-$
\\ 
\multicolumn{1}{l}{} & $\left( 0.88\right) $ & $\left( -\right) $ & $\left(
-\right) $ & $\left( -\right) $ & $\left( -\right) $ & $\left( -\right) $ & $%
\left( -\right) $ \\ 
\multicolumn{1}{l}{Number of nodes} & $-$ & $0.04$ & $-$ & $-$ & $-$ & $-$ & 
$-$ \\ 
\multicolumn{1}{l}{} & $\left( -\right) $ & $\left( 0.02\right) $ & $\left(
-\right) $ & $\left( -\right) $ & $\left( -\right) $ & $\left( -\right) $ & $%
\left( -\right) $ \\ 
\multicolumn{1}{l}{Degree centrality} & $-$ & $-$ & $-349.7$ & $-$ & $-$ & $-
$ & $-$ \\ 
\multicolumn{1}{l}{} & $\left( -\right) $ & $\left( -\right) $ & $\left(
102.7\right) $ & $\left( -\right) $ & $\left( -\right) $ & $\left( -\right) $
& $\left( -\right) $ \\ 
\multicolumn{1}{l}{Closeness centrality} & $-$ & $-$ & $-$ & $-9.17$ & $-$ & 
$-$ & $-$ \\ 
\multicolumn{1}{l}{} & $\left( -\right) $ & $\left( -\right) $ & $\left(
-\right) $ & $\left( 4.54\right) $ & $\left( -\right) $ & $\left( -\right) $
& $\left( -\right) $ \\ 
\multicolumn{1}{l}{Harmonic centrality} & $-$ & $-$ & $-$ & $-$ & $-0.006$ & 
$-$ & $-$ \\ 
\multicolumn{1}{l}{} & $\left( -\right) $ & $\left( -\right) $ & $\left(
-\right) $ & $\left( -\right) $ & $\left( 0.003\right) $ & $\left( -\right) $
& $\left( -\right) $ \\ 
\multicolumn{1}{l}{Betweenness centrality} & $-$ & $-$ & $-$ & $-$ & $-$ & $%
-158.9$ & $-$ \\ 
\multicolumn{1}{l}{} & $\left( -\right) $ & $\left( -\right) $ & $\left(
-\right) $ & $\left( -\right) $ & $\left( -\right) $ & $\left( 59.1\right) $ & $\left( -\right) $
\\ 
\multicolumn{1}{l}{Eigenvector centrality} & $-$ & $-$ & $-$ & $-$ & $-$ & $-
$ & $-14.32$ \\ 
\multicolumn{1}{l}{} & $\left( -\right) $ & $\left( -\right) $ & $\left(
-\right) $ & $\left( -\right) $ & $\left( -\right) $ & $\left( -\right) $ & $\left( 17.53\right) $
\\ 
\multicolumn{1}{l}{Log(Patents)} & $0.29$ & $0.30$ & $0.34$ & $0.36$ & $0.36$
& $0.35$ & $0.41$ \\ 
\multicolumn{1}{l}{} & $\left( 0.05\right) $ & $\left( 0.06\right) $ & $%
\left( 0.04\right) $ & $\left( 0.04\right) $ & $\left( 0.04\right) $ & $\left( 0.04\right) $ & $\left( 0.04\right) $
\\ 
\multicolumn{1}{l}{Constant} & $3.97$ & $3.85$ & $5.20$ & $4.61$ & $4.65$ & $%
4.21$ & $3.73$ \\ 
\multicolumn{1}{l}{} & $\left( 0.24\right) $ & $\left( 0.24\right) $ & $%
\left( 0.53\right) $ & $\left( 0.57\right) $ & $\left( 0.55\right) $ & $\left( 0.32\right) $ & $\left( 0.30\right) $
\\ 
\multicolumn{1}{l}{$R^{2}$} & $0.345$ & $0.325$ & $0.334$ & $0.317$ & $0.319$
& $0.324$ & $0.310$ \\ 
\multicolumn{1}{l}{Adjusted $R^{2}$} & $0.338$ & $0.320$ & $0.329$ & $0.313$
& $0.314$ & $0.320$ & $0.305$ \\ 
\multicolumn{1}{l}{Number of observations} & $301$ & $301$ & $301$ & $301$ & 
$301$ & $301$ & $301$ \\ \hline \hline
\end{tabular}
\begin{minipage}{400pt}
{\fontsize{9pt}{9pt}\selectfont \smallskip  \textit{Note}: For each centrality measure, we take the minimum of all nodes that contain a given firm. Other summary statistics, such as mean and maximum, correlate less strongly with the performance metrics. Standard errors are in parentheses.}
\end{minipage}
\end{center}
\label{Table - centrality reg (ebit)}
\end{table}

\begin{table}[tbh]
\caption{Market-Value Regression with Centrality Measures}
\begin{center}
\fontsize{9pt}{11pt}\selectfont%
\begin{tabular}{cccccccc}
\hline \hline
LHS variable: & \multicolumn{7}{c}{Log(Market value)} \\ 
\cline{2-4}\cline{3-4}\cline{5-7}\cline{8-8}
& (1) & (2) & (3) & (4) & (5) & (6) & (7) \\ 
& Baseline &  &  &  &  &  &  \\ \hline
\multicolumn{1}{l}{Flare length} & $0.28$ & $-$ & $-$ & $-$ & $-$ & $-$ & $-$
\\ 
\multicolumn{1}{l}{} & $\left( 0.08\right) $ & $\left( -\right) $ & $\left(
-\right) $ & $\left( -\right) $ & $\left( -\right) $ & $\left( -\right) $ & $%
\left( -\right) $ \\ 
\multicolumn{1}{l}{Islands only} & $0.70$ & $-$ & $-$ & $-$ & $-$ & $-$ & $-$
\\ 
\multicolumn{1}{l}{} & $\left( 0.90\right) $ & $\left( -\right) $ & $\left(
-\right) $ & $\left( -\right) $ & $\left( -\right) $ & $\left( -\right) $ & $%
\left( -\right) $ \\ 
\multicolumn{1}{l}{Number of nodes} & $-$ & $0.04$ & $-$ & $-$ & $-$ & $-$ & 
$-$ \\ 
\multicolumn{1}{l}{} & $\left( -\right) $ & $\left( 0.02\right) $ & $\left(
-\right) $ & $\left( -\right) $ & $\left( -\right) $ & $\left( -\right) $ & $%
\left( -\right) $ \\ 
\multicolumn{1}{l}{Degree centrality} & $-$ & $-$ & $-326.6$ & $-$ & $-$ & $-
$ & $-$ \\ 
\multicolumn{1}{l}{} & $\left( -\right) $ & $\left( -\right) $ & $\left(
103.0\right) $ & $\left( -\right) $ & $\left( -\right) $ & $\left( -\right) $
& $\left( -\right) $ \\ 
\multicolumn{1}{l}{Closeness centrality} & $-$ & $-$ & $-$ & $-11.25$ & $-$
& $-$ & $-$ \\ 
\multicolumn{1}{l}{} & $\left( -\right) $ & $\left( -\right) $ & $\left(
-\right) $ & $\left( 4.50\right) $ & $\left( -\right) $ & $\left( -\right) $
& $\left( -\right) $ \\ 
\multicolumn{1}{l}{Harmonic centrality} & $-$ & $-$ & $-$ & $-$ & $-0.007$ & 
$-$ & $-$ \\ 
\multicolumn{1}{l}{} & $\left( -\right) $ & $\left( -\right) $ & $\left(
-\right) $ & $\left( -\right) $ & $\left( 0.003\right) $ & $\left( -\right) $
& $\left( -\right) $ \\ 
\multicolumn{1}{l}{Betweenness centrality} & $-$ & $-$ & $-$ & $-$ & $-$ & $-185.9$ & $-$ \\ 
\multicolumn{1}{l}{} & $\left( -\right) $ & $\left( -\right) $ & $\left(
-\right) $ & $\left( -\right) $ & $\left( -\right) $ & $\left( 58.7\right) $ & $\left( -\right) $
\\ 
\multicolumn{1}{l}{Eigenvector centrality} & $-$ & $-$ & $-$ & $-$ & $-$ & $-
$ & $-21.16$ \\ 
\multicolumn{1}{l}{} & $\left( -\right) $ & $\left( -\right) $ & $\left(
-\right) $ & $\left( -\right) $ & $\left( -\right) $ & $\left( -\right) $ & $\left( 17.67\right) $ 
\\ 
\multicolumn{1}{l}{Log(Patents)} & $0.34$ & $0.33$ & $0.37$ & $0.38$ & $0.38$
& $0.37$ & $0.43$ \\ 
\multicolumn{1}{l}{} & $\left( 0.05\right) $ & $\left( 0.06\right) $ & $%
\left( 0.04\right) $ & $\left( 0.04\right) $ & $\left( 0.04\right) $ & $\left( 0.04\right) $ & $\left( 0.04\right) $
\\ 
\multicolumn{1}{l}{Constant} & $6.20$ & $6.12$ & $7.40$ & $7.15$ & $7.18$ & $%
6.62$ & $6.11$ \\ 
\multicolumn{1}{l}{} & $\left( 0.24\right) $ & $\left( 0.24\right) $ & $%
\left( 0.53\right) $ & $\left( 0.56\right) $ & $\left( 0.55\right) $ & $\left( 0.32\right) $ & $\left( 0.30\right) $
\\ 
\multicolumn{1}{l}{$R^{2}$} & $0.342$ & $0.331$ & $0.340$ & $0.332$ & $0.334$
& $0.340$ & $0.323$ \\ 
\multicolumn{1}{l}{Adjusted $R^{2}$} & $0.336$ & $0.327$ & $0.336$ & $0.328$
& $0.330$ & $0.336$ & $0.318$ \\ 
\multicolumn{1}{l}{Number of observations} & $325$ & $325$ & $325$ & $325$ & 
$325$ & $325$ & $325$ \\ \hline \hline
\end{tabular}
\begin{minipage}{400pt}
{\fontsize{9pt}{9pt}\selectfont \smallskip  \textit{Note}: For each centrality measure, we take the minimum of all nodes that contain a given firm. Other summary statistics, such as mean and maximum, correlate less strongly with the performance metrics. Standard errors are in parentheses.}
\end{minipage}
\end{center}
\label{Table - centrality reg (mcap)}
\end{table}

\clearpage
\subsection*{G.2 \ Jaffe's Technological Distance}

 For any pair of firm-year observations $(i,t)$ and $(i',t')$, Jaffe's distance metric is defined as the cosine dissimilarity between the vectors $l_{i,t}$ and $l_{i',t'}$, each element of which is $l_{i,t,c}=\frac{\tilde{p}_{i,t,c}}{\sum_{c} \tilde{p}_{i,t,c}}$ (see Appendix D).
 
 We propose four ways to construct an ``aggregate'' Jaffe measure for each firm throughout the sample period, so that it can be used in regressions just like our flare measures.  The first two versions are based exclusively on 2005, the final year of our sample period, because our baseline regressions use the firms' performances in 2005 as the LHS variable. One is the Jaffe distance between a focal firm and its nearest neighbor (NN); the other is between a focal firm and the population mean (PM) of all firms. The former would reflect local competition, and the latter global competition. For the last two versions, we broaden our scope to all firm-years. We calculate each firm's mean location in 1976--2005, and then define the third and the fourth Jaffe distances based on the NN and the PM reference locations, respectively.

\begin{table}[tbh]
\caption{Regressions with Jaffe's Distance Metrics}
\begin{center}
\fontsize{9pt}{11pt}\selectfont%
\begin{tabular}{cccccccccccc}
\hline \hline
LHS variable: & \multicolumn{3}{c}{Log(Revenue)} &  & \multicolumn{3}{c}{
Log(EBIT)} &  & \multicolumn{3}{c}{Log(Market value)} \\ 
\cline{2-4}\cline{3-4}\cline{6-8}\cline{10-12}
& (1) & (2) & (3) &  & (4) & (5) & (6) &  & (7) & (8) & (9) \\ \hline
\multicolumn{1}{l}{Jaffe distance I (2005 NN)} & $-0.00$ & $0.68$ & $0.72$ & 
& $-1.03$ & $-0.37$ & $-0.33$ &  & $-1.90$ & $-1.23$ & $-1.20$ \\ 
\multicolumn{1}{l}{} & $\left( 0.51\right) $ & $\left( 0.41\right) $ & $%
\left( 0.40\right) $ &  & $\left( 0.55\right) $ & $\left( 0.46\right) $ & $%
\left( 0.45\right) $ &  & $\left( 0.52\right) $ & $\left( 0.43\right) $ & $%
\left( 0.43\right) $ \\ 
\multicolumn{1}{l}{Flare length} & $-$ & $-$ & $0.28$ &  & $-$ & $-$ & $0.27$
&  & $-$ & $-$ & $0.22$ \\ 
\multicolumn{1}{l}{} & $\left( -\right) $ & $\left( -\right) $ & $\left(
0.08\right) $ &  & $\left( -\right) $ & $\left( -\right) $ & $\left(
0.09\right) $ &  & $\left( -\right) $ & $\left( -\right) $ & $\left(
0.08\right) $ \\ 
\multicolumn{1}{l}{Islands only} & $-$ & $-$ & $0.65$ &  & $-$ & $-$ & $0.71$
&  & $-$ & $-$ & $0.53$ \\ 
\multicolumn{1}{l}{} & $\left( -\right) $ & $\left( -\right) $ & $\left(
0.83\right) $ &  & $\left( -\right) $ & $\left( -\right) $ & $\left(
0.88\right) $ &  & $\left( -\right) $ & $\left( -\right) $ & $\left(
0.88\right) $ \\ 
\multicolumn{1}{l}{Log patents} & $-$ & $0.48$ & $0.36$ &  & $-$ & $0.47$ & $%
0.35$ &  & $-$ & $0.47$ & $0.37$ \\ 
\multicolumn{1}{l}{} & $\left( -\right) $ & $\left( 0.28\right) $ & $\left(
0.05\right) $ &  & $\left( -\right) $ & $\left( 0.04\right) $ & $\left(
0.06\right) $ &  & $\left( -\right) $ & $\left( 0.04\right) $ & $\left(
0.06\right) $ \\ 
\multicolumn{1}{l}{Adjusted $R^{2}$} & $-0.003$ & $0.348$ & $0.371$ &  & $%
0.009$ & $0.327$ & $0.347$ &  & $0.041$ & $0.343$ & $0.353$ \\ 
\multicolumn{1}{l}{Number of observations} & $293$ & $293$ & $293$ &  & $268$
& $268$ & $268$ &  & $290$ & $290$ & $290$ \\ \hline \hline
LHS variable: & \multicolumn{3}{c}{Log(Revenue)} &  & \multicolumn{3}{c}{
Log(EBIT)} &  & \multicolumn{3}{c}{Log(Market value)} \\ 
\cline{2-4}\cline{3-4}\cline{6-8}\cline{10-12}
& (1) & (2) & (3) &  & (4) & (5) & (6) &  & (7) & (8) & (9) \\ \hline
\multicolumn{1}{l}{Jaffe distance II (2005 PM)} & $-4.16$ & $-0.56$ & $-0.30$
&  & $-5.19$ & $-1.87$ & $-1.62$ &  & $-6.16$ & $-3.04$ & $-2.86$ \\ 
\multicolumn{1}{l}{} & $\left( 0.73\right) $ & $\left( 0.71\right) $ & $%
\left( 0.70\right) $ &  & $\left( 0.76\right) $ & $\left( 0.77\right) $ & $%
\left( 0.76\right) $ &  & $\left( 0.72\right) $ & $\left( 0.74\right) $ & $%
\left( 0.74\right) $ \\ 
\multicolumn{1}{l}{Flare length} & $-$ & $-$ & $0.27$ &  & $-$ & $-$ & $0.25$
&  & $-$ & $-$ & $0.18$ \\ 
\multicolumn{1}{l}{} & $\left( -\right) $ & $\left( -\right) $ & $\left(
0.08\right) $ &  & $\left( -\right) $ & $\left( -\right) $ & $\left(
0.09\right) $ &  & $\left( -\right) $ & $\left( -\right) $ & $\left(
0.08\right) $ \\ 
\multicolumn{1}{l}{Islands only} & $-$ & $-$ & $0.64$ &  & $-$ & $-$ & $0.52$
&  & $-$ & $-$ & $0.17$ \\ 
\multicolumn{1}{l}{} & $\left( -\right) $ & $\left( -\right) $ & $\left(
0.83\right) $ &  & $\left( -\right) $ & $\left( -\right) $ & $\left(
0.88\right) $ &  & $\left( -\right) $ & $\left( -\right) $ & $\left(
0.87\right) $ \\ 
\multicolumn{1}{l}{Log patents} & $-$ & $0.46$ & $0.35$ &  & $-$ & $0.42$ & $%
0.32$ &  & $-$ & $0.39$ & $0.32$ \\ 
\multicolumn{1}{l}{} & $\left( -\right) $ & $\left( 0.04\right) $ & $\left(
0.05\right) $ &  & $\left( -\right) $ & $\left( 0.05\right) $ & $\left(
0.06\right) $ &  & $\left( -\right) $ & $\left( 0.05\right) $ & $\left(
0.06\right) $ \\ 
\multicolumn{1}{l}{Adjusted $R^{2}$} & $0.098$ & $0.343$ & $0.365$ &  & $%
0.145$ & $0.340$ & $0.357$ &  & $0.198$ & $0.362$ & $0.368$ \\ 
\multicolumn{1}{l}{Number of observations} & $293$ & $293$ & $293$ &  & $268$
& $268$ & $268$ &  & $290$ & $290$ & $290$ \\ \hline \hline
LHS variable: & \multicolumn{3}{c}{Log(Revenue)} &  & \multicolumn{3}{c}{
Log(EBIT)} &  & \multicolumn{3}{c}{Log(Market value)} \\ 
\cline{2-4}\cline{3-4}\cline{6-8}\cline{10-12}
& (1) & (2) & (3) &  & (4) & (5) & (6) &  & (7) & (8) & (9) \\ \hline
\multicolumn{1}{l}{Jaffe distance III (mean NN)} & $-0.60$ & $0.68$ & $0.68$
&  & $-1.51$ & $-0.20$ & $-0.19$ &  & $-2.03$ & $-0.72$ & $-0.72$ \\ 
\multicolumn{1}{l}{} & $\left( 0.49\right) $ & $\left( 0.42\right) $ & $%
\left( 0.04\right) $ &  & $\left( 0.53\right) $ & $\left( 0.46\right) $ & $%
\left( 0.45\right) $ &  & $\left( 0.52\right) $ & $\left( 0.45\right) $ & $%
\left( 0.45\right) $ \\ 
\multicolumn{1}{l}{Flare length} & $-$ & $-$ & $0.34$ &  & $-$ & $-$ & $0.33$
&  & $-$ & $-$ & $0.28$ \\ 
\multicolumn{1}{l}{} & $\left( -\right) $ & $\left( -\right) $ & $\left(
0.08\right) $ &  & $\left( -\right) $ & $\left( -\right) $ & $\left(
0.08\right) $ &  & $\left( -\right) $ & $\left( -\right) $ & $\left(
0.08\right) $ \\ 
\multicolumn{1}{l}{Islands only} & $-$ & $-$ & $0.99$ &  & $-$ & $-$ & $0.93$
&  & $-$ & $-$ & $0.66$ \\ 
\multicolumn{1}{l}{} & $\left( -\right) $ & $\left( -\right) $ & $\left(
0.84\right) $ &  & $\left( -\right) $ & $\left( -\right) $ & $\left(
0.88\right) $ &  & $\left( -\right) $ & $\left( -\right) $ & $\left(
0.90\right) $ \\ 
\multicolumn{1}{l}{Log patents} & $-$ & $0.42$ & $0.29$ &  & $-$ & $0.41$ & $%
0.29$ &  & $-$ & $0.43$ & $0.33$ \\ 
\multicolumn{1}{l}{} & $\left( -\right) $ & $\left( 0.03\right) $ & $\left(
0.04\right) $ &  & $\left( -\right) $ & $\left( 0.04\right) $ & $\left(
0.05\right) $ &  & $\left( -\right) $ & $\left( 0.04\right) $ & $\left(
0.05\right) $ \\ 
\multicolumn{1}{l}{Adjusted $R^{2}$} & $0.001$ & $0.306$ & $0.343$ &  & $%
0.024$ & $0.304$ & $0.336$ &  & $0.042$ & $0.321$ & $0.339$ \\ 
\multicolumn{1}{l}{Number of observations} & $328$ & $328$ & $328$ &  & $301$
& $301$ & $301$ &  & $325$ & $325$ & $325$ \\ \hline \hline
LHS variable: & \multicolumn{3}{c}{Log(Revenue)} &  & \multicolumn{3}{c}{
Log(EBIT)} &  & \multicolumn{3}{c}{Log(Market value)} \\ 
\cline{2-4}\cline{3-4}\cline{6-8}\cline{10-12}
& (1) & (2) & (3) &  & (4) & (5) & (6) &  & (7) & (8) & (9) \\ \hline
\multicolumn{1}{l}{Jaffe distance IV (mean PM)} & $-5.49$ & $-0.71$ & $-0.26$
&  & $-6.10$ & $-1.28$ & $-0.88$ &  & $-6.95$ & $-2.22$ & $-1.89$ \\ 
\multicolumn{1}{l}{} & $\left( 0.78\right) $ & $\left( 0.87\right) $ & $%
\left( 0.86\right) $ &  & $\left( 0.82\right) $ & $\left( 0.95\right) $ & $%
\left( 0.93\right) $ &  & $\left( 0.81\right) $ & $\left( 0.93\right) $ & $%
\left( 0.93\right) $ \\ 
\multicolumn{1}{l}{Flare length} & $-$ & $-$ & $0.34$ &  & $-$ & $-$ & $0.32$
&  & $-$ & $-$ & $0.26$ \\ 
\multicolumn{1}{l}{} & $\left( -\right) $ & $\left( -\right) $ & $\left(
0.08\right) $ &  & $\left( -\right) $ & $\left( -\right) $ & $\left(
0.08\right) $ &  & $\left( -\right) $ & $\left( -\right) $ & $\left(
0.08\right) $ \\ 
\multicolumn{1}{l}{Islands only} & $-$ & $-$ & $0.93$ &  & $-$ & $-$ & $0.85$
&  & $-$ & $-$ & $0.49$ \\ 
\multicolumn{1}{l}{} & $\left( -\right) $ & $\left( -\right) $ & $\left(
0.84\right) $ &  & $\left( -\right) $ & $\left( -\right) $ & $\left(
0.89\right) $ &  & $\left( -\right) $ & $\left( -\right) $ & $\left(
0.91\right) $ \\ 
\multicolumn{1}{l}{Log patents} & $-$ & $0.38$ & $0.27$ &  & $-$ & $0.38$ & $%
0.27$ &  & $-$ & $0.38$ & $0.29$ \\ 
\multicolumn{1}{l}{} & $\left( -\right) $ & $\left( 0.04\right) $ & $\left(
0.05\right) $ &  & $\left( -\right) $ & $\left( 0.05\right) $ & $\left(
0.05\right) $ &  & $\left( -\right) $ & $\left( 0.04\right) $ & $\left(
0.05\right) $ \\ 
\multicolumn{1}{l}{Adjusted $R^{2}$} & $0.130$ & $0.302$ & $0.338$ &  & $%
0.153$ & $0.308$ & $0.338$ &  & $0.182$ & $0.327$ & $0.343$ \\ 
\multicolumn{1}{l}{Number of observations} & $328$ & $328$ & $328$ &  & $301$
& $301$ & $301$ &  & $325$ & $325$ & $325$ \\ \hline \hline
\end{tabular}
\begin{minipage}{475pt}
{\fontsize{9pt}{9pt}\selectfont \smallskip  \textit{Note}: Each of the four panels from top to bottom uses a different version of the Jaffe measure (see text for their definitions). Standard errors are in parentheses. The constant term is included in all regressions but suppressed in the table to save space.}
\end{minipage}
\end{center}
\label{Table - Jaffe reg}
\end{table}

 Table \ref{Table - Jaffe reg} reports results based on these four Jaffe distances. The top panel uses the first version (2005 NN) as the main regressor, the second panel uses the second version (2005 PM), and so on. The fit is low (i.e., the adjusted $R^2$ is frequently close to zero and never above 0.2) when Jaffe's measure is the only regressor (columns 1, 4, and 7). Recall the analogous regressions in Table \ref{Table - Regressions} in section 5.2, in which flares and islands achieve a much better fit (i.e., the adjusted $R^2$ is always above 0.2).
 
 The Jaffe measure seems to contribute more to the fit when we control for patent count (columns 2, 5, and 8) and flares/islands (3, 6, and 9), but its coefficient estimate is mostly statistically insignificant. Its sign is usually negative but sometimes positive (columns 2 and 3 in the first and the third panels). Thus, the relationships between the firms' performances and their Jaffe distances lack cohesion and are difficult to interpret.

\clearpage

\section*{Appendix H \ Additional Exhibits}

This section reports additional pictures: (i) three-dimensional PCA, (ii) the Mapper graph in a PCA-based layout, (iii) coloring of the Mapper graph by year, patent count, and sector, and (iv) a time series of year-by-year PCA plots and Mapper graphs.

\paragraph{Three-dimensional PCA.}

Whereas Figure \ref{Figure - mapper(n20_m0_cos)} (a) is a two-dimensional PCA plot, Figure \ref{Figure - mapper(cos_log_pca3d)} is a three-dimensional PCA plot. Their comparison suggests the reduction of even one dimension could entail some important information loss.

\begin{figure}[htb!!!!]\centering%
\caption{Three-Dimensional PCA}%
\includegraphics[width=0.60\textwidth]{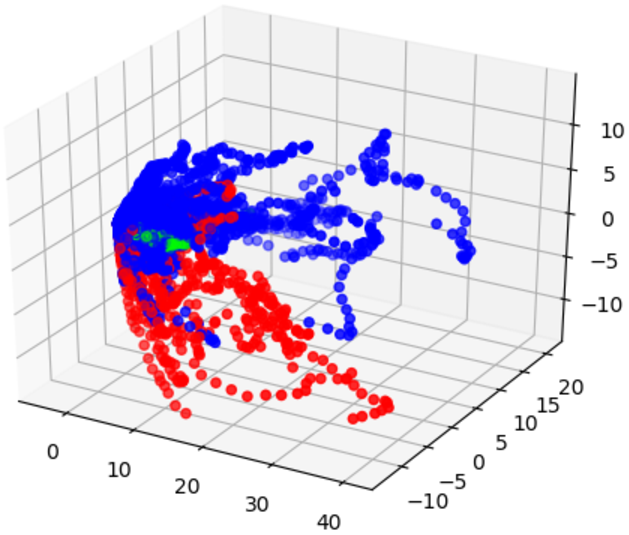}
\caption*{\footnotesize {%
\textit{Note}: Red markers are IT firms, green markers are drug makers, and blue markers are all others.}}%
\label{Figure - mapper(cos_log_pca3d)}
\end{figure}%

\clearpage

\paragraph{Mapper Graph with Nodes Fixed at PCA Locations.}

Figure \ref{Figure - mapper at PCA} shows another version of the Mapper graph in Figure \ref{Figure - mapper(n20_m0_cos)} (b) in which the positions of the nodes are fixed at those in Figure \ref{Figure - mapper(n20_m0_cos)} (a).\footnote{More precisely, the position of each node is fixed at \textit{the average of the PCA positions of the firm-year observations} that are contained in that node.} The advantage of this layout is that the correspondence between (a) and (b) becomes clearer, as it overlays the Mapper graph's edges on the PCA plot to show where continuity is being detected. Its disadvantage is that the central part of the graph is too crowded for further investigation.

\begin{figure}[htb!!!!]\centering%
\caption{Mapper Graph in Two-Dimensional PCA Layout}%
\includegraphics[width=0.9\textwidth]{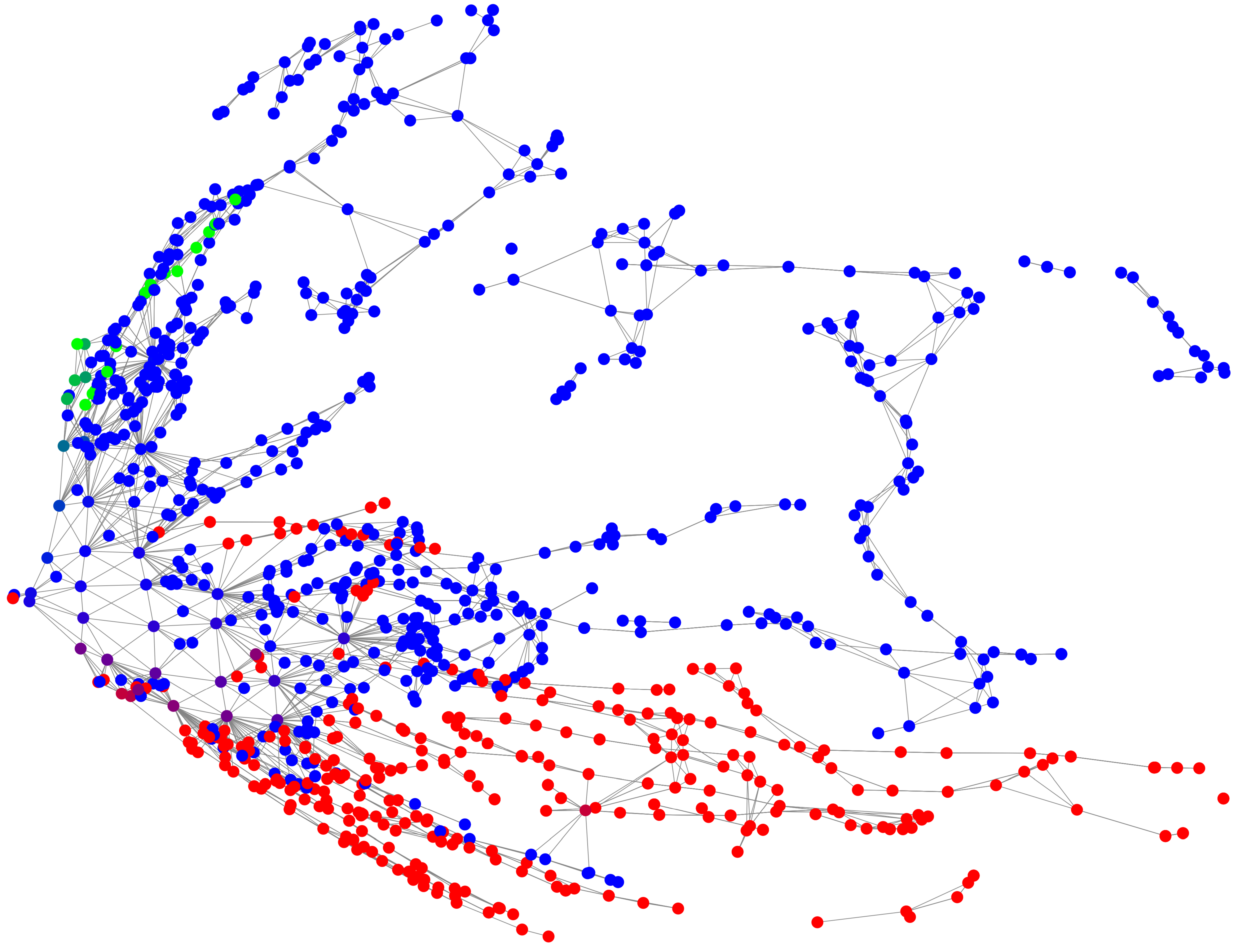}
\caption*{\footnotesize {%
\textit{Note}: Red markers are IT firms, green markers are drug makers, and blue markers are all others.}}%
\label{Figure - mapper at PCA}
\end{figure}%

\clearpage

\paragraph{Coloring by Year, Patent Count, and Sector.}

Figure \ref{Figure - color maps} shows six versions of the main Mapper graph under alternative color schemes to highlight different aspects of the data. The underlying data and graph are the same as in Figures \ref{Figure - mapper(n20_m0_cos)} (b), \ref{Figure - mapper(cos_log_details_1)}, and \ref{Figure - mapper(cos_log_details_2)}.

Panel (a) colors nodes according to the average calendar years of their component observations (firm-years), with blue and violet colors representing earlier years and red colors representing later years. Whereas our case studies in section 4.3 use arrows to represent firms' moves over time, this picture conveys similar information with a spectrum of colors. Flares with red nodes at the ends suggest centrifugal moves; those with blue ends suggest centripetal moves. Many central nodes display green and yellow colors because they contain observations in all years, the average of which lies between the two extreme colors.

Panel (b) shows nodes with many patents in red colors. The scale of inventive activities is positively correlated with their uniqueness. Hence, nodes at the end of the flares and the islands of large conglomerates tend to appear in red.

Panels (c)--(f) paint nodes with high concentration of firms in each of the four selected S\&P sectors (basic materials, capital goods, health care, and technology, respectively) in dark red, followed by bright red, orange, yellow, and so on. By contrast, dark violet/blue nodes do not contain firms in the focal sector. Their concentration patterns broadly agree with the industry annotations in Figure \ref{Figure - mapper(n20_m0_cos)} (b) and the firm-level case studies in section 4.3.

\begin{figure}[htb!!!!]
\caption{Mapper Graphs under Alternative Color Schemes}%

\begin{subfigure}{0.45\textwidth}
\centering
\includegraphics[width=0.85\linewidth]{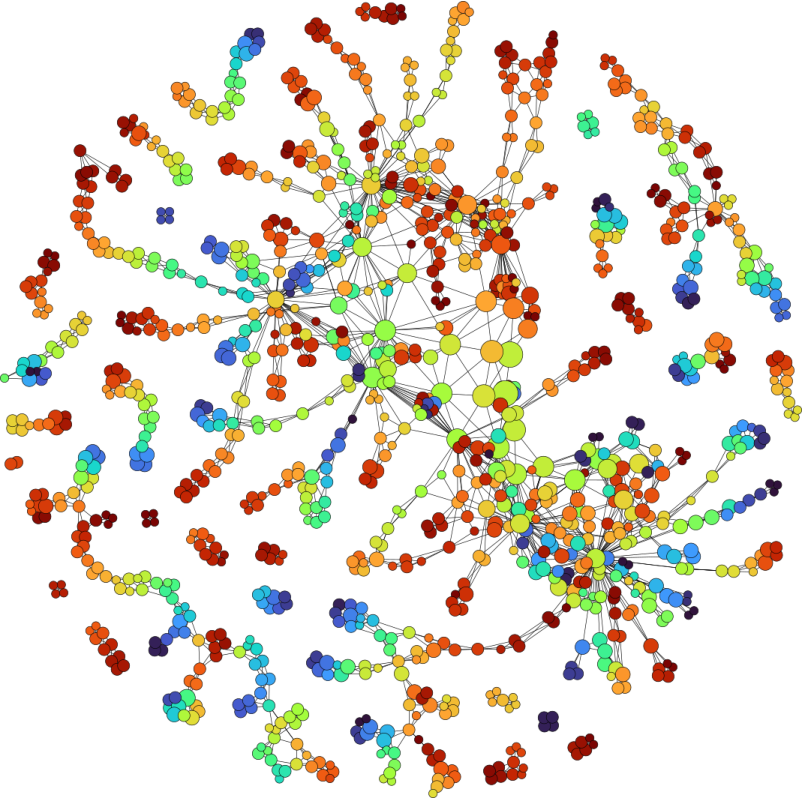}
\caption{By Year (red = later years)}%
\end{subfigure}
\begin{subfigure}{0.45\textwidth}
\centering
\includegraphics[width=0.85\linewidth]{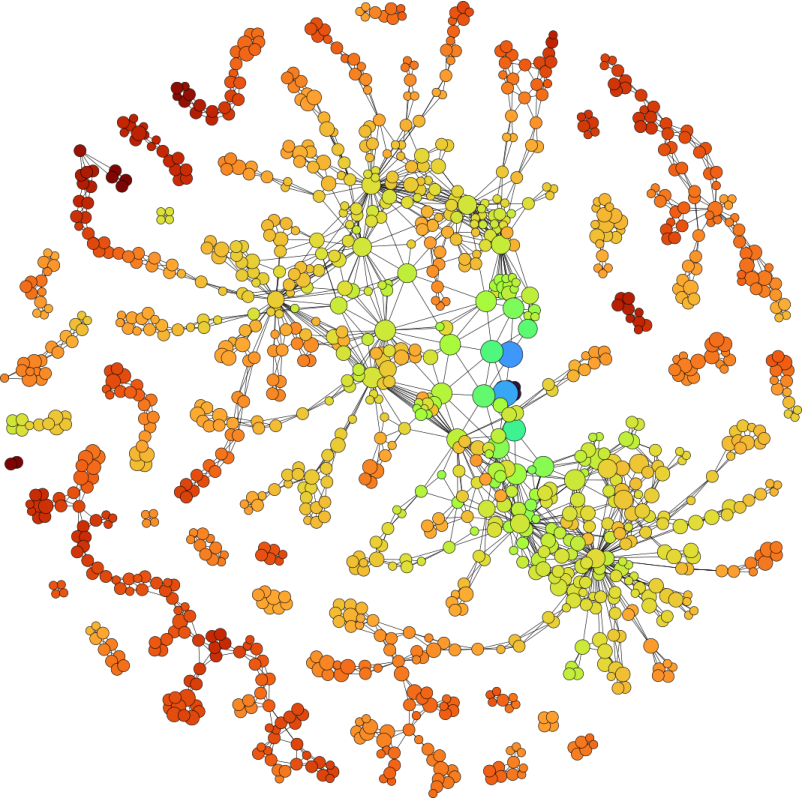}
\caption{By Patent Count (red = more patents)}%
\end{subfigure}

\begin{subfigure}{0.45\textwidth}
\centering
\includegraphics[width=0.85\linewidth]{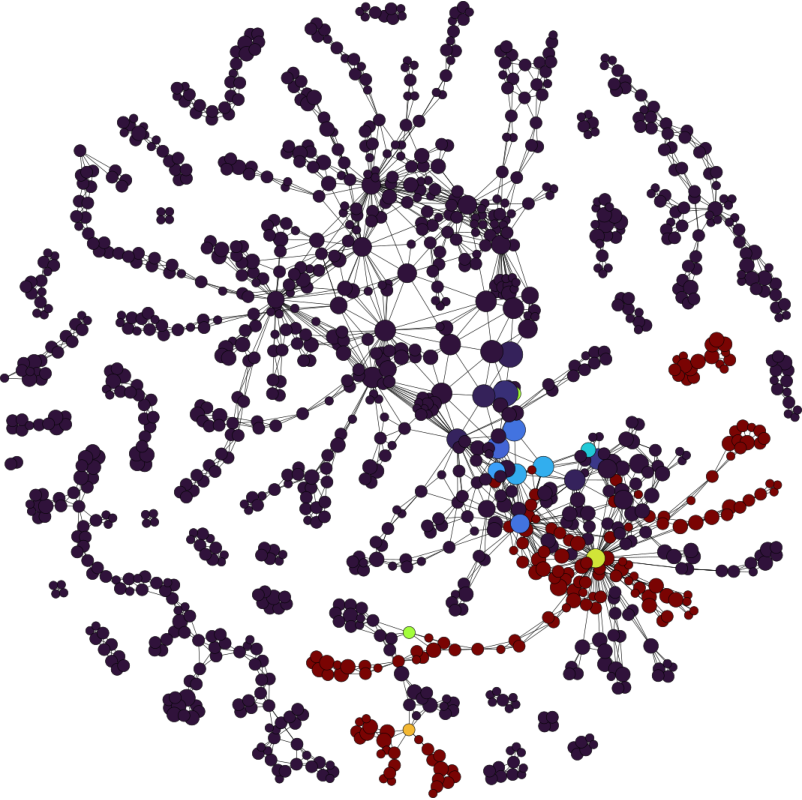}
\caption{Sector = Basic Materials}%
\end{subfigure}
\begin{subfigure}{0.45\textwidth}
\centering
\includegraphics[width=0.85\linewidth]{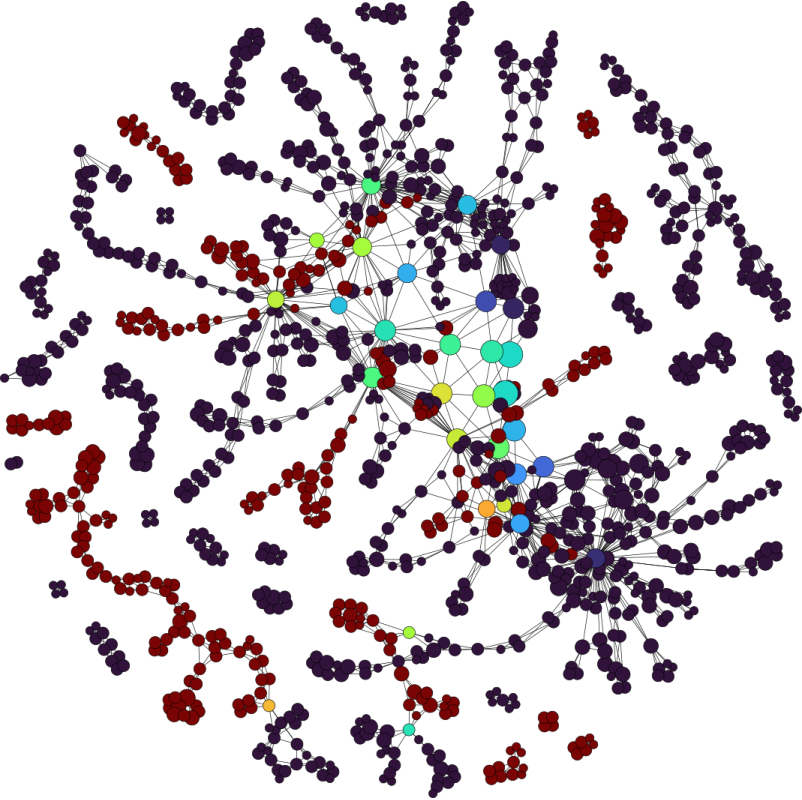}
\caption{Sector = Capital Goods}%
\end{subfigure}

\begin{subfigure}{0.45\textwidth}
\centering
\includegraphics[width=0.85\linewidth]{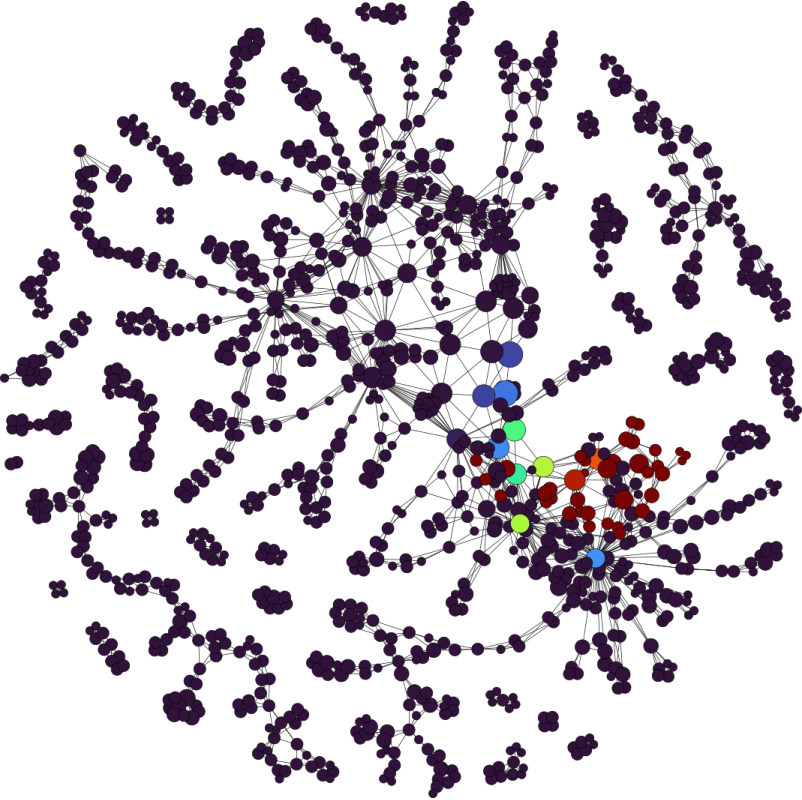}
\caption{Sector = Health Care}%
\end{subfigure}
\begin{subfigure}{0.45\textwidth}
\centering
\includegraphics[width=0.85\linewidth]{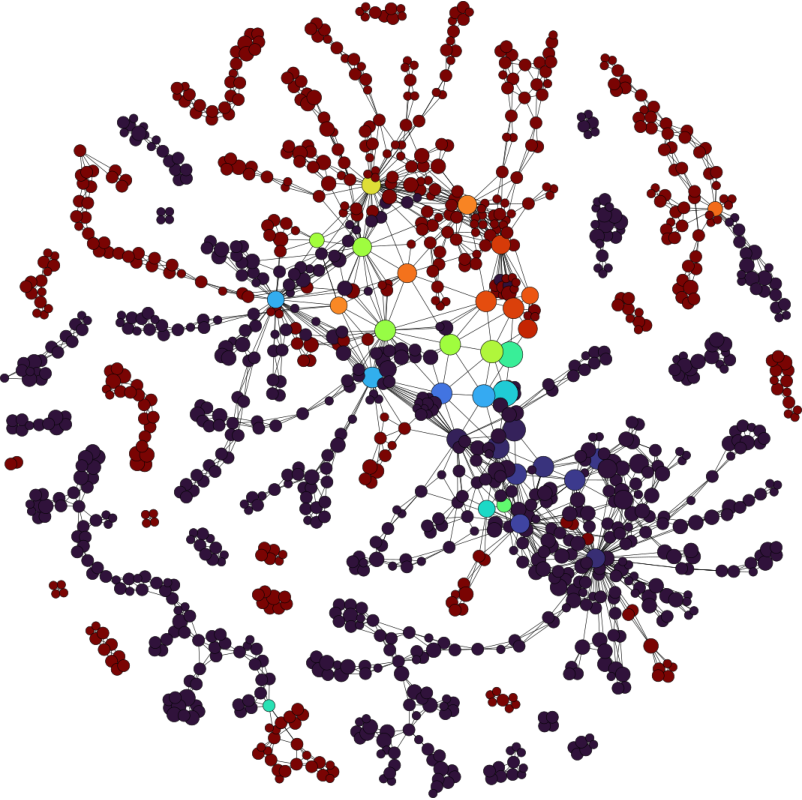}
\caption{Sector = Technology}%
\end{subfigure}
\caption*{\footnotesize {%
\textit{Note}: Each plot shows the main graph in Figure \ref{Figure - mapper(n20_m0_cos)} (b) under a different color scheme.}}%
\label{Figure - color maps}
\end{figure}%

\clearpage

\paragraph{Year-by-Year Mapper Graphs.}

Our baseline analysis pools the entire panel data and studies them in a single Mapper graph; one might wonder how the results will look year by year. Figures \ref{Figure - year-by-year 1} and \ref{Figure - year-by-year 2} show PCA plots and Mapper graphs for selected years (1980, 1985, ..., 2005). The PCA plots of different years look broadly similar, with the majority of firms near $(0,0)$ in the south-western part. Other firms sparsely populate the rest of the PCA spaces. Note any comparison across years cannot be precise because the coordinates (i.e., the first two principal components) are independently defined for each cross section.

The Mapper graphs for early years are fragmented, whereas those for later years feature most firms in a single, large connected component. This difference reflects the increasing number of firms over years, which fill the gaps between initially distant data points. The relative positions of industries resemble those in Figure \ref{Figure - mapper(n20_m0_cos)} (b): the IT sector and the pharmaceutical industry appear in the opposite ends of the main continent, which also contains other industries including aerospace, engineering, medical devices, and materials.

Despite these similarities, the exact locations of firms and industries cannot be compared across different graphs. Each graph represents the distribution of data points in a single cross section, based on the corresponding PCA plot, its cover, and so on. Just like we cannot precisely compare the PCA plots of different subsamples, we cannot compare the locations of the same firm across multiple Mapper graphs. Thus, even though analyzing data year by year is interesting in its own right, splitting the panel data into repeated cross sections entails a significant information loss: we can no longer track the firms' moves over time or characterize their long-run trajectories.

Even if one is willing to forgo the study of dynamics, using all data at once is still preferable because Mapper's usefulness is predicated on the notion of continuity in data. A Mapper graph of sparsely distributed data points, such as Figure \ref{Figure - year-by-year 1} (b), does not reveal much structure besides fragmentation. Pooling all years reduces such gaps and maximizes Mapper's capability to contextualize each observation within global, continuous patterns.

\begin{figure}[htb!!!!]
\caption{Year-by-Year Plots by PCA and Mapper (1 of 2)}%

\begin{subfigure}{0.45\textwidth}
\centering
\includegraphics[width=0.85\linewidth]{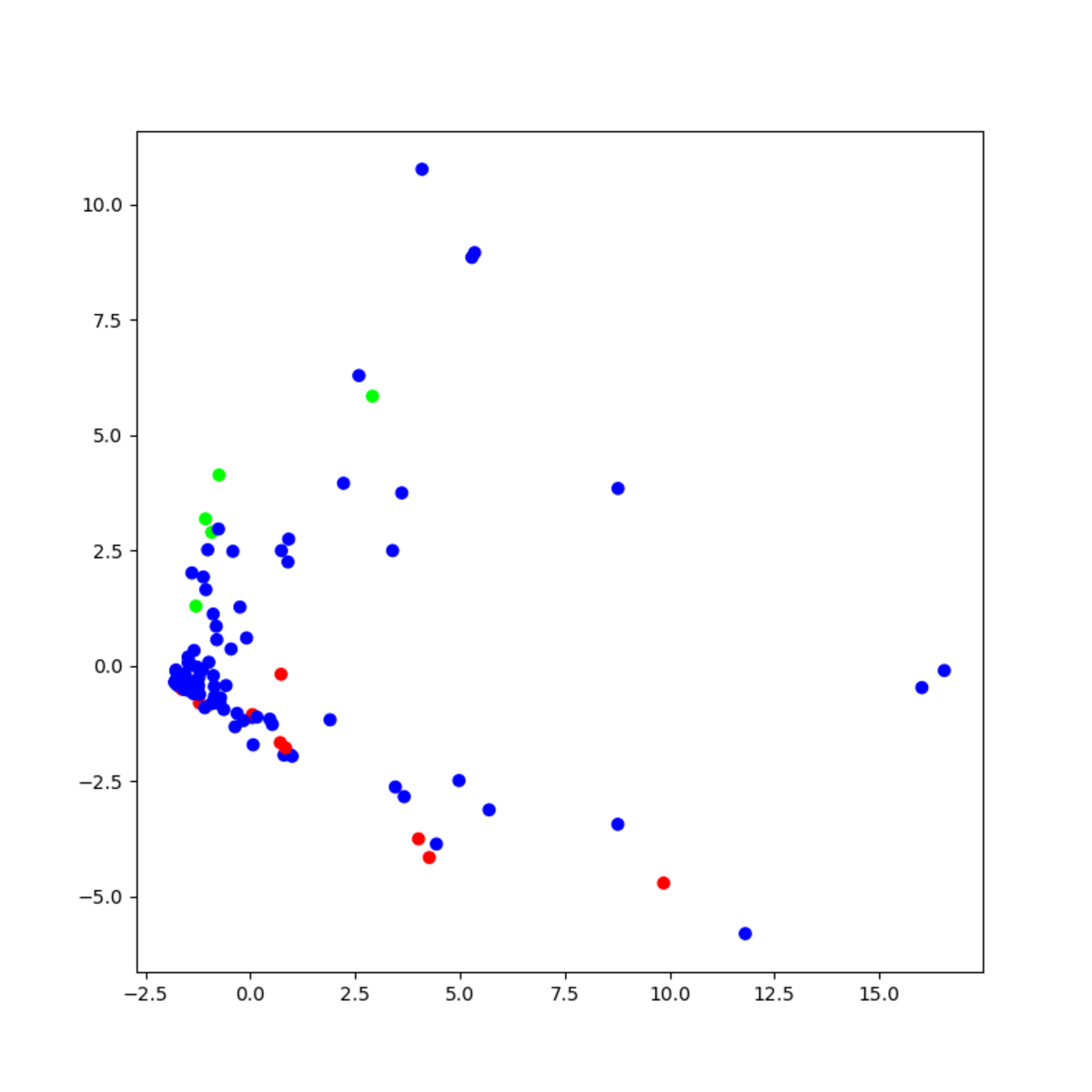}
\caption{1980 by PCA}%
\end{subfigure}
\begin{subfigure}{0.45\textwidth}
\centering
\includegraphics[width=0.85\linewidth]{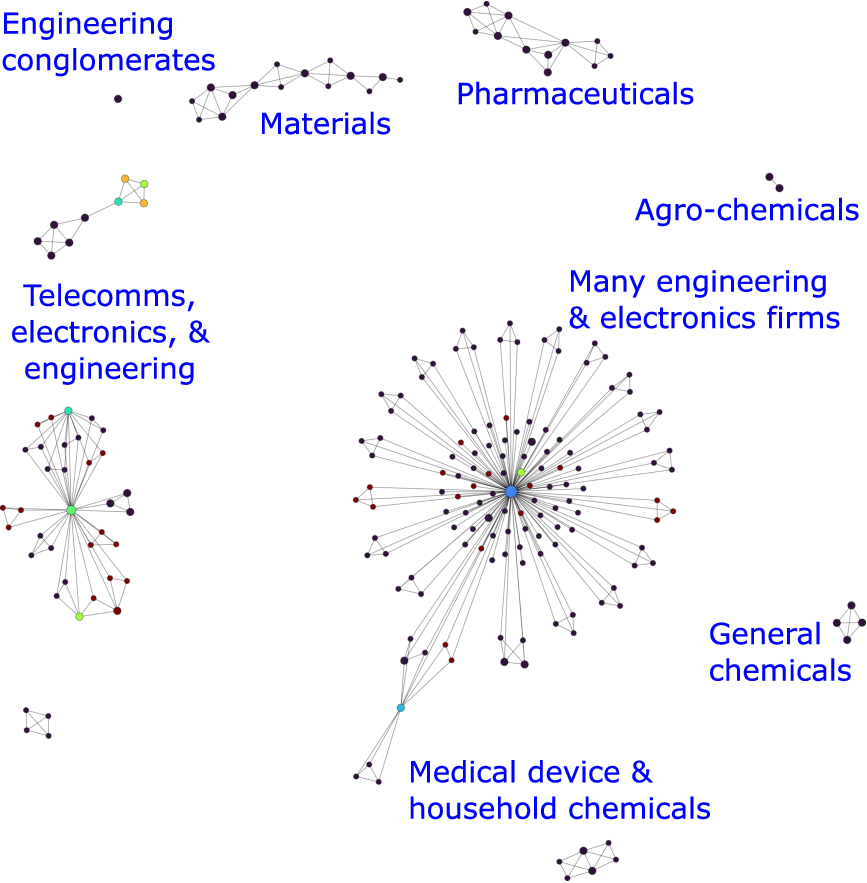}
\caption{1980 by Mapper}%
\end{subfigure}

\begin{subfigure}{0.45\textwidth}
\centering
\includegraphics[width=0.85\linewidth]{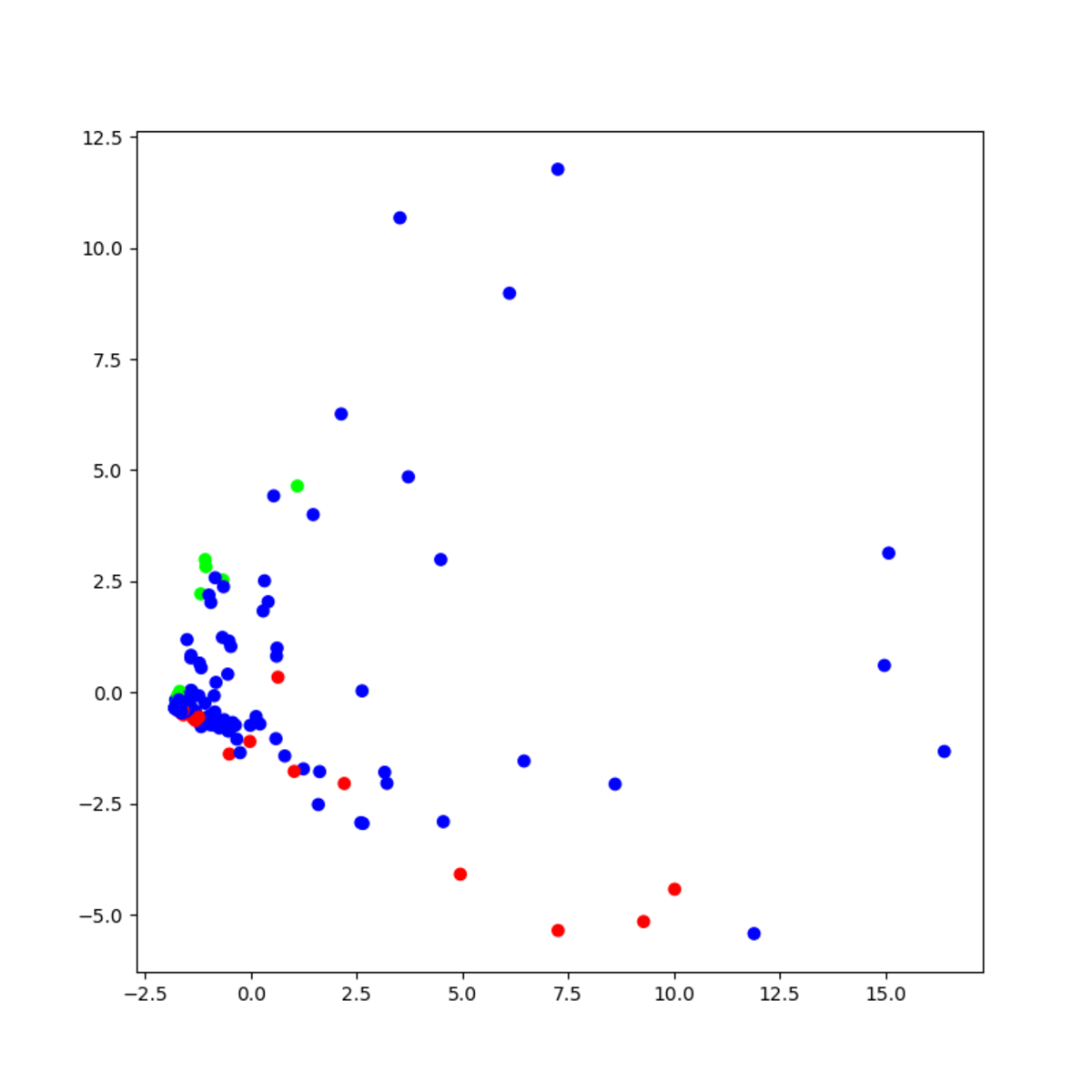}
\caption{1985 by PCA}%
\end{subfigure}
\begin{subfigure}{0.45\textwidth}
\centering
\includegraphics[width=0.90\linewidth]{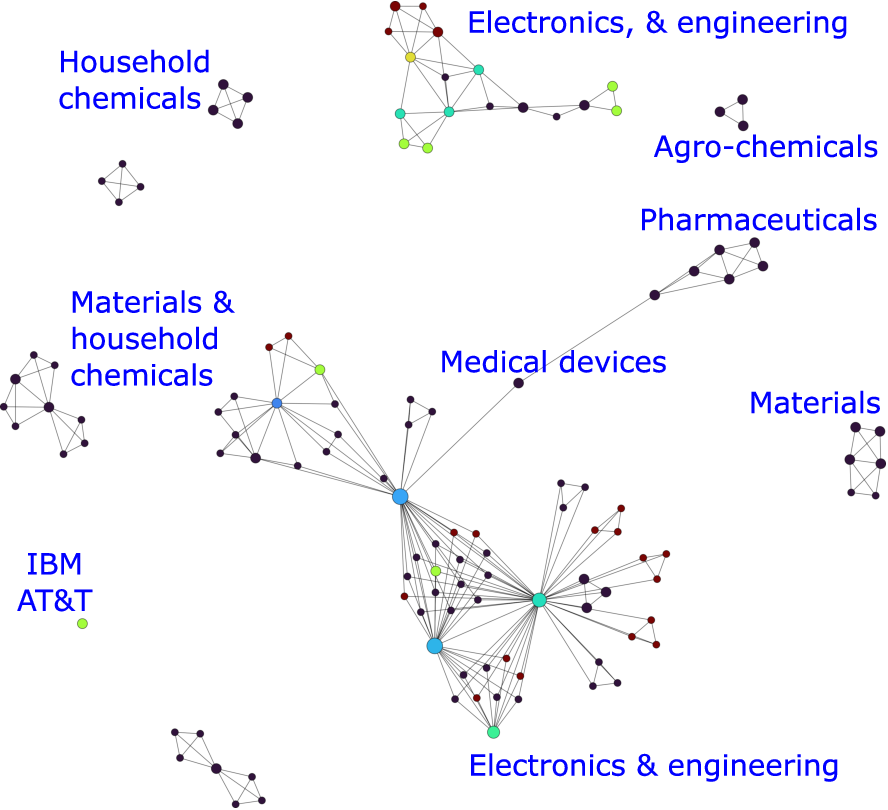}
\caption{1985 by Mapper}%
\end{subfigure}

\begin{subfigure}{0.45\textwidth}
\centering
\includegraphics[width=0.85\linewidth]{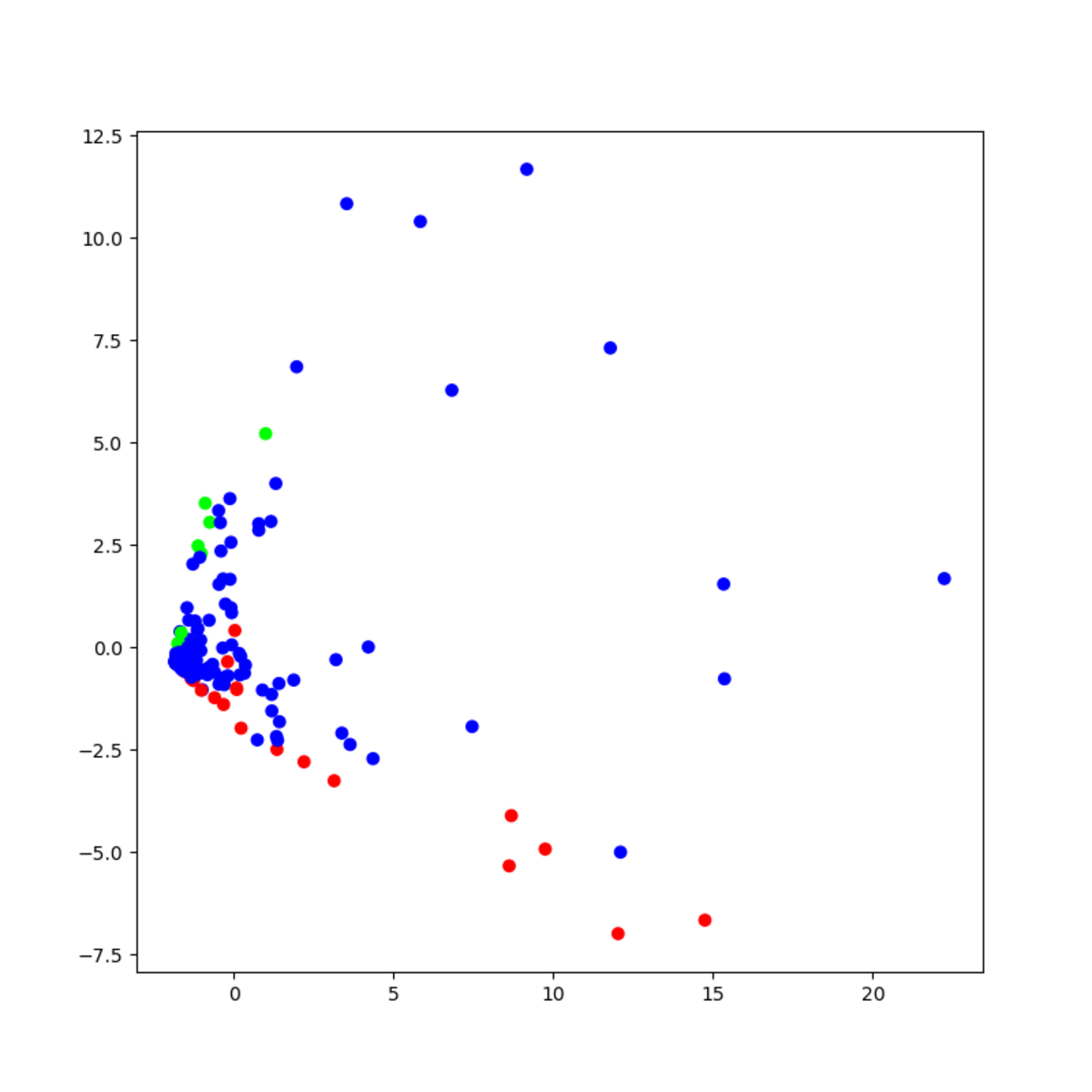}
\caption{1990 by PCA}%
\end{subfigure}
\begin{subfigure}{0.45\textwidth}
\centering
\includegraphics[width=0.90\linewidth]{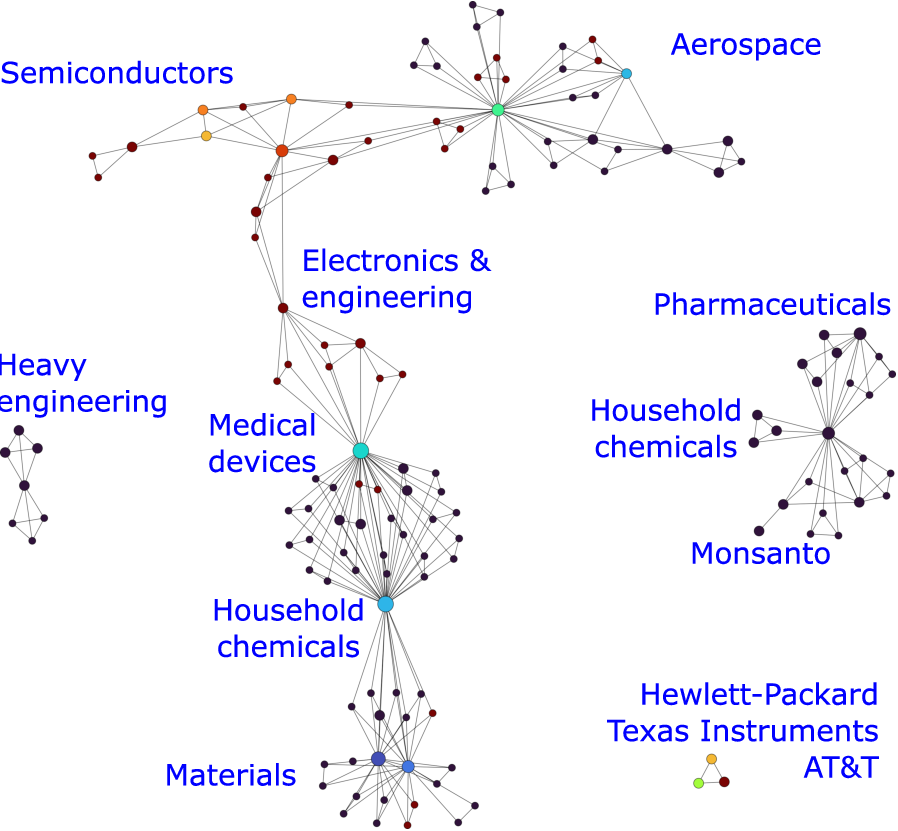}
\caption{1990 by Mapper}%
\end{subfigure}
\caption*{\footnotesize {%
\textit{Note}: Each plot shows a cross section of firms in a specific year by two-dimensional PCA or Mapper. The color scheme of the Mapper graphs highlights the S\&P "Technology" sector as in Figure \ref{Figure - color maps} (f).}}%
\label{Figure - year-by-year 1}
\end{figure}%

\begin{figure}[htb!!!!]
\caption{Year-by-Year Plots by PCA and Mapper (2 of 2)}%

\begin{subfigure}{0.45\textwidth}
\centering
\includegraphics[width=0.85\linewidth]{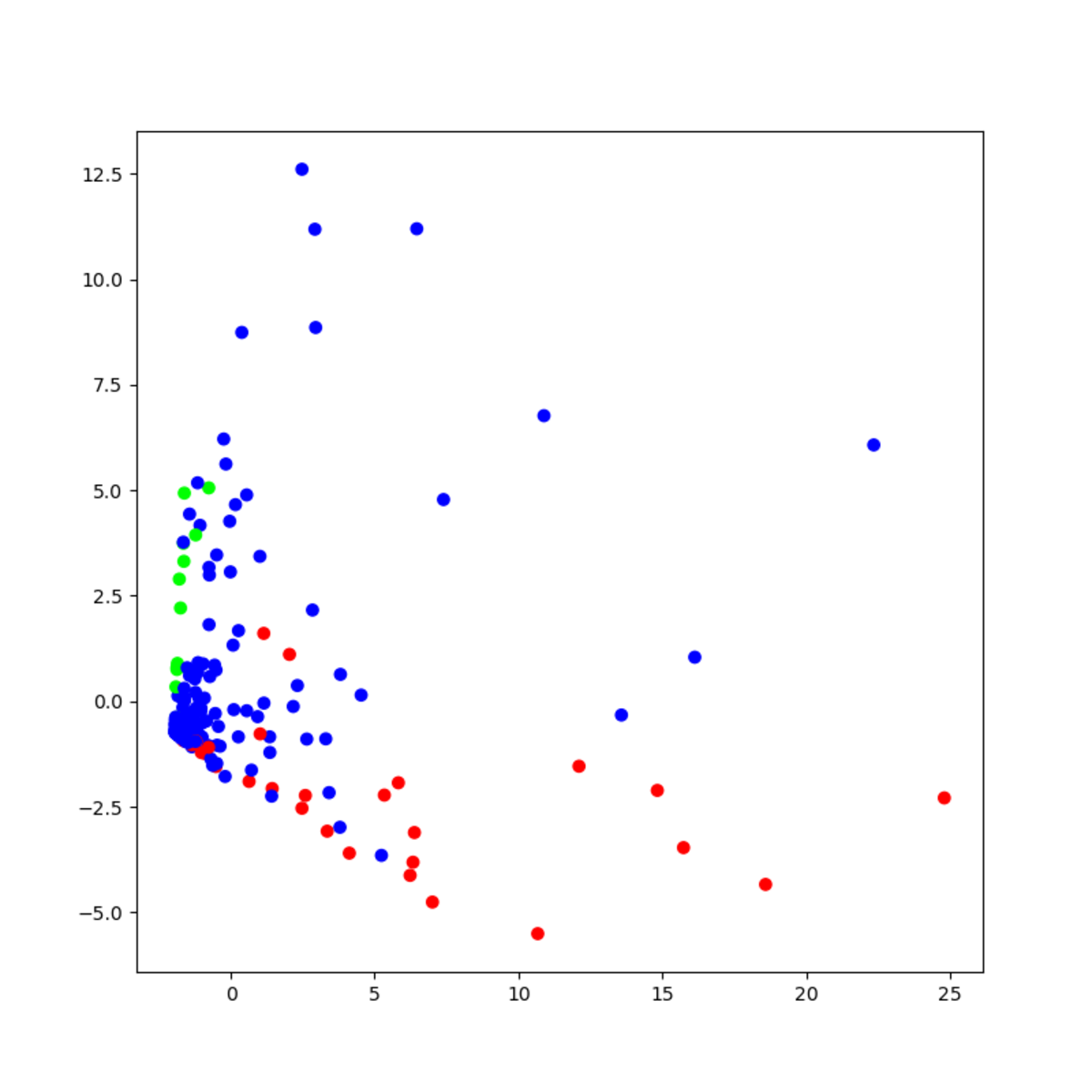}
\caption{1995 by PCA}%
\end{subfigure}
\begin{subfigure}{0.45\textwidth}
\centering
\includegraphics[width=0.80\linewidth]{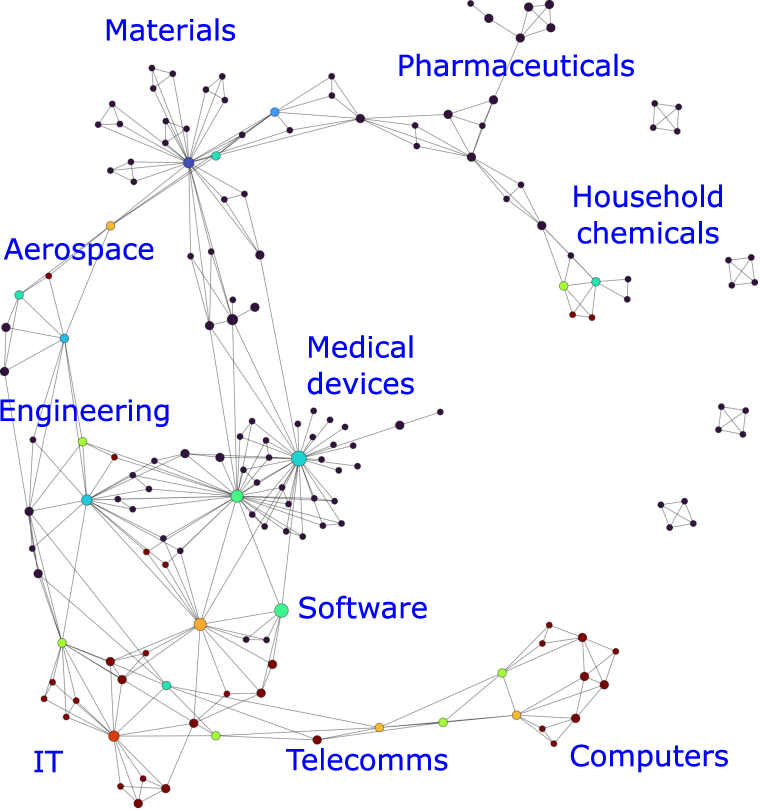}
\caption{1995 by Mapper}%
\end{subfigure}

\begin{subfigure}{0.45\textwidth}
\centering
\includegraphics[width=0.85\linewidth]{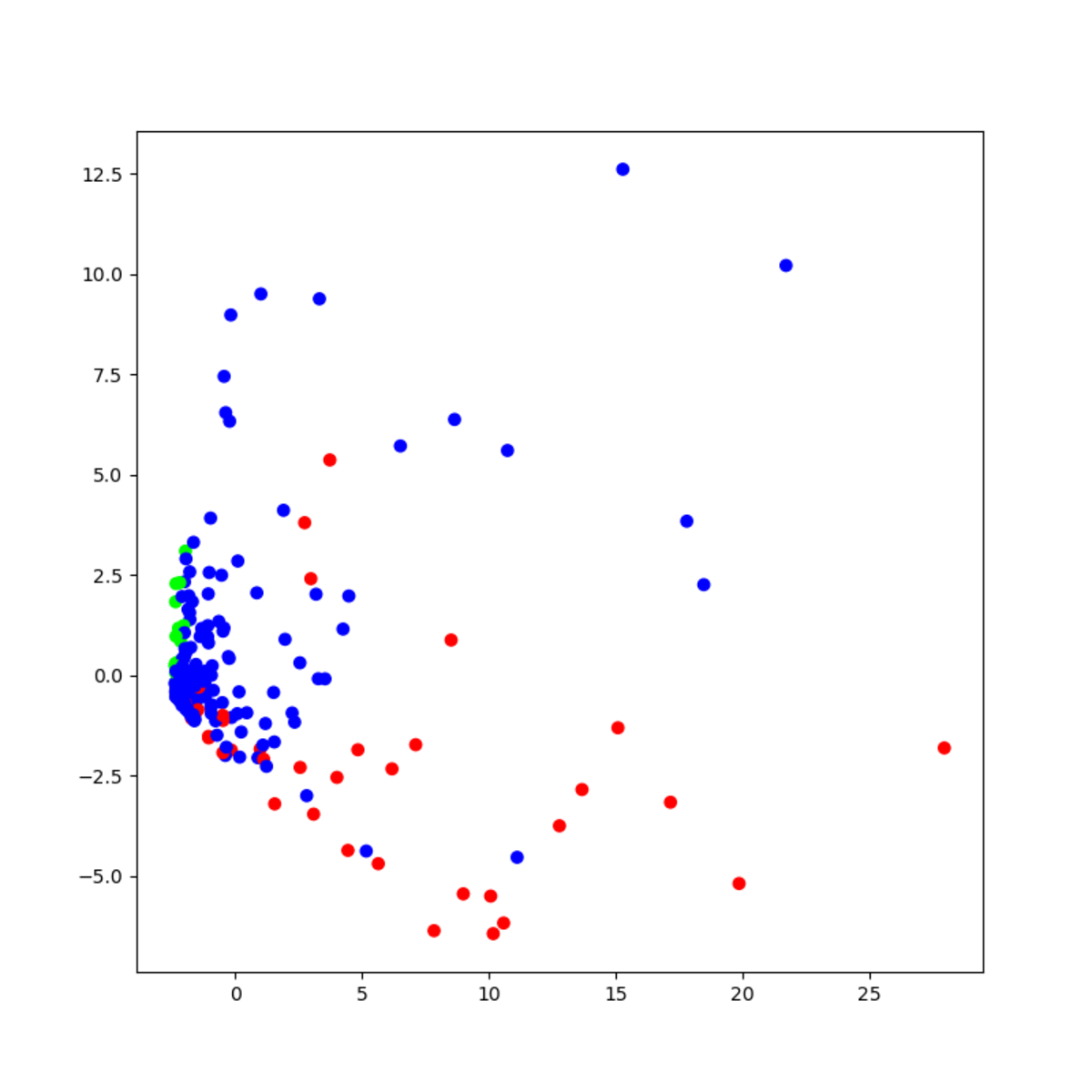}
\caption{2000 by PCA}%
\end{subfigure}
\begin{subfigure}{0.45\textwidth}
\centering
\includegraphics[width=0.75\linewidth]{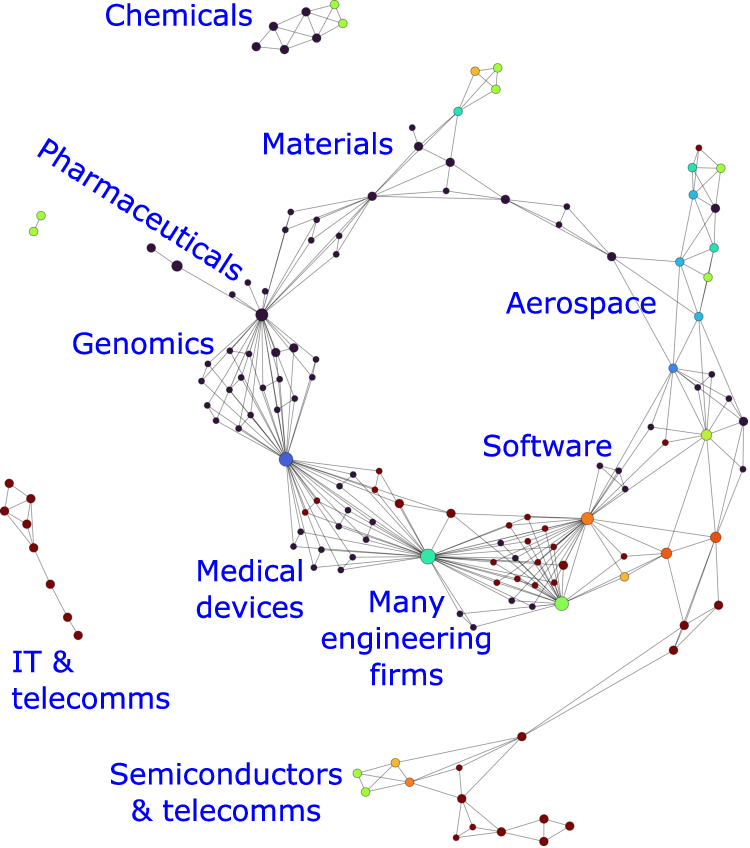}
\caption{2000 by Mapper}%
\end{subfigure}

\begin{subfigure}{0.45\textwidth}
\centering
\includegraphics[width=0.85\linewidth]{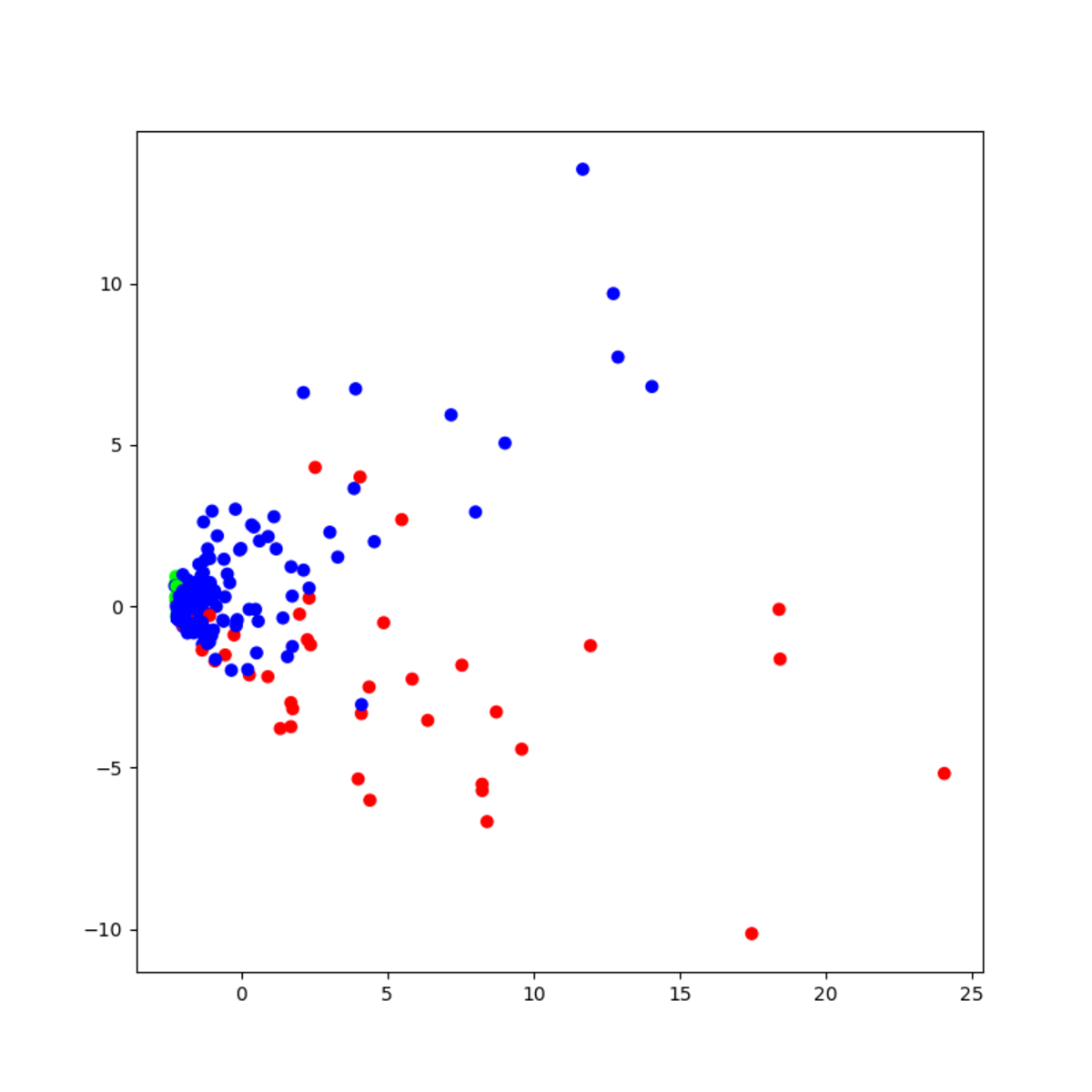}
\caption{2005 by PCA}%
\end{subfigure}
\begin{subfigure}{0.45\textwidth}
\centering
\includegraphics[width=0.75\linewidth]{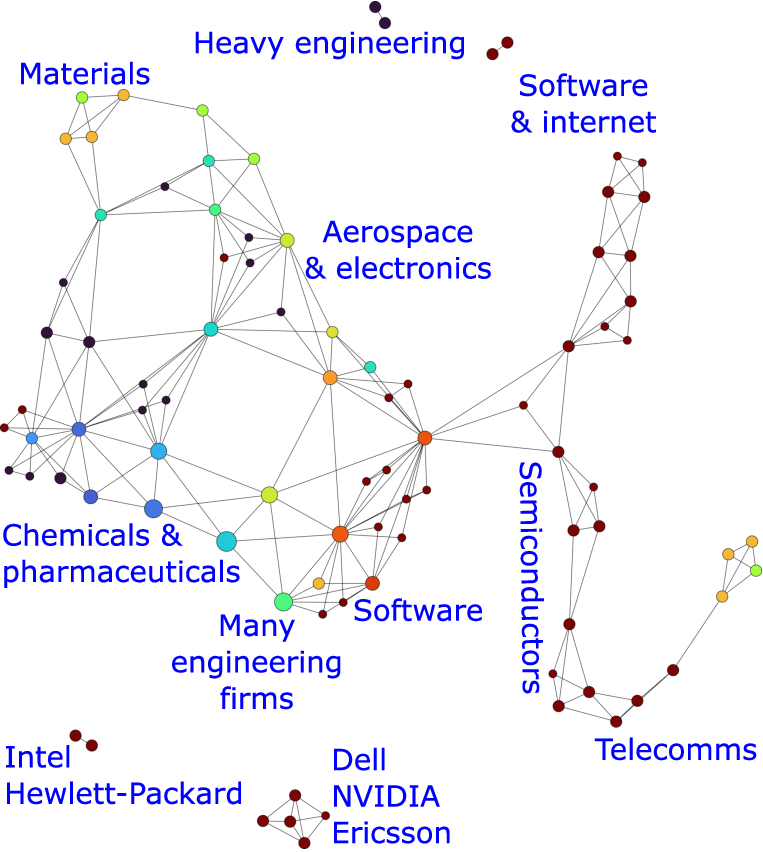}
\caption{2005 by Mapper}%
\end{subfigure}
\caption*{\footnotesize {%
\textit{Note}: Each plot shows a cross section of firms in a specific year by two-dimensional PCA or Mapper. The color scheme of the Mapper graphs highlights the S\&P "Technology" sector as in Figure \ref{Figure - color maps} (f).}}%
\label{Figure - year-by-year 2}
\end{figure}%


\clearpage
\printbibliography


\end{document}